\newenvironment{proof}[1][Proof]
{\par\noindent{\bf #1:} }{\hspace*{\fill}\nolinebreak{$\Box$}\bigskip\par}
\newcommand{\qed}{\hspace*{\fill}\nolinebreak\ensuremath{\Box}}
\newcommand{\f}[3][0pt]{%
\begin{list}{}{%
 \setlength{\leftmargin}{#2 em}%
 \addtolength{\leftmargin}{#2 em}%
 \if!#1!\else\addtolength{\leftmargin}{#1}\fi%
 \setlength{\topsep}{2pt}
 \setlength{\partopsep}{0pt}%
 \if!#1!\else\setlength{\itemindent}{-#1}\fi%
}%
     \item[] #3%
\end{list}%
}
\newtheorem{theorem}{Theorem}
\newtheorem{lemma}{Lemma}
\newtheorem{observation}{Observation}
\newtheorem{defn}{Definition}
\newcommand{\AlgSimpleConnectedPathwidth}{{\tt SCP}}
\newcommand{\AlgConnectedPathwidthWithHomebase}{{\tt CPH}}
\newcommand{\AlgProcessLeftBranch}{{\tt PLB}}
\newcommand{\AlgProcessRightBranch}{{\tt PRB}}
\newcommand{\AlgConnectedPathwidth}{{\tt CP}}
\newcommand{\border}{\delta}
\newcommand{\lborder}{\border_{\textup{L}}}
\newcommand{\rborder}{\border_{\textup{R}}}
\newcommand{\branch}{\mathcal{B}}
\newcommand{\lbranch}{\branch_{\textup{L}}}
\newcommand{\rbranch}{\branch_{\textup{R}}}
\newcommand{\Ext}{\textup{Ext}}
\newcommand{\EL}{{\tt LE}}
\newcommand{\ER}{{\tt RE}}
\newcommand{\algrule}[1][\linewidth]{\vspace*{2pt}\hrule width #1\vspace*{4pt}\noindent}
\newcommand{\pw}{\textup{pw}}
\newcommand{\cpw}{\textup{cpw}}
\newcommand{\tw}{\textup{tw}}
\newcommand{\ctw}{\textup{ctw}}
\newcommand{\width}{\textup{width}}
\newcommand{\cP}{\mathcal{P}}
\newcommand{\cG}{\mathcal{G}}
\newcommand{\cC}{\mathcal{C}}
\newcommand{\wC}{\widetilde{C}}
\newcommand{\sn}{\textup{\texttt{s}}}
\newcommand{\msn}{\textup{\texttt{ms}}}
\newcommand{\nsn}{\textup{\texttt{ns}}}
\newcommand{\mnsn}{\textup{\texttt{mns}}}
\newcommand{\csn}{\textup{\texttt{cs}}}
\newcommand{\mcsn}{\textup{\texttt{mcs}}}
\begin{document}

\title{From Pathwidth to Connected Pathwidth\thanks{An extended abstract of this work appeared in the proceedings of the
28th Symposium on Theoretical Aspects of Computer Science (STACS) 2011.}}

\author{Dariusz Dereniowski\thanks{\noindent Partially supported by MNiSW grant N~N206~379337 (2009-2011).}\vspace*{1cm}\\
Department of Algorithms and System Modeling,\\
Gda\'{n}sk University of Technology, Poland,\\
email: deren@eti.pg.gda.pl}

\date{}
\maketitle

\begin{center}
\parbox[c]{12 cm}{
\begin{center}
\textbf{Abstract}
\end{center}

It is proven that the connected pathwidth of any graph $G$ is at most $2\cdot\pw(G)+1$, where $\pw(G)$ is the pathwidth of $G$. The method is constructive, i.e. it yields an efficient algorithm that for a given path decomposition of width $k$ computes a connected path decomposition of width at most $2k+1$. The running time of the algorithm is $O(dk^2)$, where $d$ is the number of `bags' in the input path decomposition.

The motivation for studying connected path decompositions comes from the connection between the pathwidth and the search number of a graph. One of the advantages of the above bound for connected pathwidth is an inequality $\csn(G)\leq 2\sn(G)+3$, where $\csn(G)$ and $\sn(G)$ are the connected search number and the search number of $G$. Moreover, the algorithm presented in this work can be used to convert a given search strategy using $k$ searchers into a (monotone) connected one using $2k+3$ searchers and starting at an arbitrary homebase.
}

\vspace*{20 pt}

\end{center}
\textbf{Keywords:} connected pathwidth, connected searching, fugitive search games, graph searching, pathwidth\\
\textbf{AMS subject classifications:} 05C83, 68R10, 05C85

\newpage

\section{Introduction}
\label{sec:intro}

The notions of pathwidth and treewidth are receiving increasing interest since the series of Graph Minor articles by Robertson and Seymour, starting with~\cite{RobertsonSeymour83}. The importance of those parameters is due to their numerous practical applications, connections with several graph parameters and usefulness in designing graph algorithms. Informally speaking, the \emph{pathwidth} of a graph $G$, denoted by $\pw(G)$, says how closely $G$ is related to a path. Moreover, a path decomposition captures the linear path-like structure of $G$. (For a definition see Section~\ref{sec:definitions}.)

Here we briefly describe a graph searching game that is one of the main motivations for the results presented in this paper. A team of $k$ searchers is given and the goal is to capture an invisible and fast fugitive located in a given graph $G$. The fugitive also has the complete knowledge about the graph and about the strategy used by the searchers, and therefore he will avoid being captured as long as possible. The fugitive is captured when a searcher reaches his location. In this setting the game is equivalent to the problem of clearing all edges of a graph that is initially entirely contaminated. There are two main types of this graph searching problem. In the \emph{node searching} two moves are allowed: placing a searcher on a vertex and removing a searcher from a vertex. An edge becomes clear whenever both of its endpoints are simultaneously occupied by searchers. In the \emph{edge searching} we have, besides to the two mentioned moves, a move of sliding a searcher along an edge. In this model an edge $\{u,v\}$ becomes clear if a searcher slides from $u$ to $v$ and either all other edges incident to $u$ have been previously cleared, or another searcher occupies $u$. In both models the goal is to find a search strategy (a sequence of moves of the searchers) that clears all the edges of $G$. The \emph{node} (\emph{edge}) \emph{search number} of $G$, denoted by $\nsn(G)$ ($\sn(G)$, respectively), equals the minimum number of searchers sufficient to construct a node (edge, respectively) search strategy. An important property is that $\pw(G)=\nsn(G)-1$ for any graph $G$~\cite{Kinnersley92,KirousisPapadimitriou85,searching_and_pebbling,Mohring90}. The edge searching problem is closely related to node searching, i.e. $|\sn(G)-\nsn(G)|\leq 1$~\cite{Bienstock_survey91}, and consequently to pathwidth, $\pw(G)\leq\sn(G)\leq\pw(G)+2$.

In this work we are interested in special types of path decompositions called connected path decompositions. The motivation comes from the need of constructing connected search strategies. An edge search strategy is \emph{connected} if the subgraph of $G$ that is clear is always connected. The minimum number of searchers sufficient to construct a connected (edge) search strategy, denoted by $\csn(G)$, is the \emph{connected search number} of $G$. This model of graph searching receives recently growing interest, because in many applications the connectedness is a requirement.

The concept of recontamination plays an important role in the field of graph searching problems. If the fugitive is able to reach an edge that has been previously cleared, then we say that the edge becomes \emph{recontaminated}. If no recontamination occurs during a search strategy, then the strategy is \emph{monotone}. The minimum number of searchers needed to construct a monotone edge (node, or connected, respectively) search strategy is denoted by $\msn(G)$ ($\mnsn(G)$, $\mcsn(G)$, respectively). For most graph searching models it is proven that there exists a monotone search strategy using the minimum number of searchers, in particular $\sn(G)=\msn(G)$~\cite{monotonicity_in_graph_searching,LaPaugh93}, which carries over to node searching, $\nsn(G)=\mnsn(G)$~\cite{searching_and_pebbling} for any graph $G$. In the case of connected graph searching problem it turns out that `recontamination does help' to search a graph~\cite{sweeping_large_cliques}, that is, there exist graphs $G$ for which each monotone search strategy requires more searchers than some non-monotone search strategies~\cite{sweeping_large_cliques}, i.e. $\mcsn(G)>\csn(G)$. For surveys on graph decompositions and graph searching problems see e.g. \cite{searchng_and_sweeping,Bienstock_survey91,treewidth_guide,guaranteed_graph_searching}.

\subsection{Related work}
There are several results that give a relation between the connected and the `classical' search numbers of a graph. Fomin et al. proved in~\cite{price_of_connectedness} that the connected search number of an $n$-node graph of branchwidth $b$ is bounded by $O(b\log n)$ and this bound is tight. One of the implications of this result is that $\csn(G)=O(\log n)\pw(G)$. Nisse proved in~\cite{connected_searching_chordal_graphs} that $\csn(G)\leq(\tw(G)+2)(2\sn(G)-1)$ for any chordal graph $G$. Barri\`{e}re et al. obtained in~\cite{Barriere_Inria2010} a constant upper bound for trees, namely for each tree $T$, $\csn(T)/\sn(T)<2$.
On the other hand, there exists an infinite family of graphs $G_k$ such that $\csn(G_k)/\sn(G_k)$ approaches $2$ when $k$ goes to infinity~\cite{searching_not_jumping}. In this work we improve the previously-known bounds for general graphs by proving that $\csn(G)\leq\sn(G)+3$ for any graph $G$.

Fraigniaud and Nisse presented in \cite{connected_treewidth} a $O(nk^3)$-time algorithm that takes a width $k$ tree decomposition of a graph and returns a connected tree decomposition of the same width. (For definition of treewidth see e.g. \cite{treewidth_guide,RS_treewidth}.) Therefore, $\tw(G)=\ctw(G)$ for any graph $G$. That result also yields an upper bound of $\csn(G)\leq(\log n+1)\sn(G)$ for any graph $G$.

The problems of computing the pathwidth (the search number) and the connected pathwidth (the connected search number) are NP-hard, also for several special classes of graphs, see e.g.~\cite{deren_weighted_trees,Gustedt93,pathwidth_hard,MegiddoHGJP88,weighted_pathwidth09,Peng06}.

\subsection{This work}
This paper presents an efficient algorithm that takes a (connected) graph $G$ and its path decomposition $\cP=(X_1,\ldots,X_d)$ of width $k$ as an input and returns a connected path decomposition $\cC=(Z_1,\ldots,Z_m)$ of width at most $2k+1$. The running time of the algorithm is $O(dk^2)$ and the number of bags in the resulting path decomposition $\cC$ is $m\leq kd$. This solves an open problem stated in several papers, e.g. in \cite{connected_weighted_trees,searching_not_jumping,price_of_connectedness,guaranteed_graph_searching,connected_treewidth,monotony_properties_connected_visible,sweeping_large_cliques}, since it implies that for any graph $G$, $\cpw(G)\leq 2\pw(G)+1$, and improves previously known estimations \cite{price_of_connectedness,connected_searching_chordal_graphs}. The path decomposition $\cC$ computed by the algorithm can be used to obtain a monotone connected search strategy using at most $2k+3$ searchers. Thus, in terms of the graph searching terminology, the above bound immediately implies that $\mcsn(G)\leq\cpw(G)+2\leq 2\pw(G)+3\leq 2\sn(G)+3$. Since $\csn(G)\leq\mcsn(G)$, the bound can be restated for the connected search number of a graph, $\csn(G)\leq 2\sn(G)+3$. Moreover, the factor $2$ in the bound is tight~\cite{searching_not_jumping}. The bound can also be used to design approximation algorithms, for it implies that the pathwidth and the connected pathwidth (the search number, the connected search number, and some other search numbers not mentioned here, e.g. the internal search number) are within a constant factor of each other. We can also use the algorithm to construct a (monotone) connected search strategy for $2k+3$ searchers given, besides of $G$ and $\cP$, an input homebase vertex.

\section{Preliminaries and basic definitions}
\label{sec:definitions}

Given a simple graph $G=(V(G),E(G))$ and its subset of vertices $X\subseteq V(G)$, the subgraph of $G$ \emph{induced} by $X$ is
\[G[X]=(X,\{\{u,v\}\in E(G)\colon u,v\in X\}).\]
For a simple (not necessary connected) graph $G$, $H$ is a connected component of $G$ if $H$ is connected, that is, there exists a path in $H$ between each pair of vertices, and each proper supergraph of $H$ is not a subgraph of $G$. For $X\subseteq V(G)$ let
\[N_G(X)=\{u\in V(G)\setminus X\colon\{u,x\}\in E(G)\textup{ for some }x\in X\}.\]

\begin{defn}
A \emph{path decomposition} of a simple graph $G=(V(G),E(G))$ is a sequence $\cP=(X_1,\ldots,X_d)$, where $X_i\subseteq V(G)$ for each $i=1,\ldots,d$, and
\begin{itemize}
 \item[$\circ$] $\bigcup_{i=1,\ldots,d}X_i=V(G)$,
 \item[$\circ$] for each $\{u,v\}\in E(G)$ there exists $i\in\{1,\ldots,d\}$ such that $u,v\in X_i$,
 \item[$\circ$] for each $i,j,k$, $1\leq i\leq j\leq k\leq d$ it holds $X_i\cap X_k\subseteq X_j$.
\end{itemize}
The \emph{width} of the path decomposition $\cP$ is $\width(\cP)=\max_{i=1,\ldots,d}|X_i|-1$. The \emph{pathwidth} of $G$, $\pw(G)$, is the minimum width over all path decompositions of $G$.
\end{defn}

A path decomposition $\cP$ is \emph{connected} if $G[X_1\cup\cdots\cup X_i]$ is connected for each $i=1,\ldots,d$. We use the symbol $\cpw(G)$ to denote the minimum width over all connected path decompositions of $G$.

\begin{defn}
Given a graph $G$ and its path decomposition $\cP=(X_1,\ldots,X_d)$, a node-weighted graph $\cG=(V(\cG),E(\cG),\omega)$ \emph{derived} from $G$ and $\cP$ is the graph with vertex set
\[V(\cG)=V_1\cup\cdots\cup V_d,\]
where $V_i=\{v_i(H)\colon H\textup{ is a connected component of }G[X_i]\}$, $i=1,\ldots,d$, and edge set
\begin{multline} \nonumber
E(\cG)=\{\{v_i(H),v_{i+1}(H')\}\colon v_i(H)\in V_i,v_{i+1}(H')\in V_{i+1},
i\in\{1,\ldots,d-1\},\textup{ and }V(H)\cap V(H')\neq\emptyset\}.
\end{multline}
The weight of a vertex $v_i(H)\in V(\cG)$, $i\in\{1,\ldots,d\}$, is $\omega(v_i(H))=|V(H)|$. The \emph{width} of $\cG$, denoted by $\width(\cG)$, equals $\width(\cP)+1$.
\end{defn}
In the following we omit a subgraph $H$ of $G$ and the index $i\in\{1,\ldots,d\}$ whenever they are not important when referring to a vertex of $\cG$ and we write $v$ instead of $v_i(H)$. For brevity, $\omega(X)=\sum_{x\in X}\omega(x)$ for any subset $X\subseteq V(\cG)$.

Figures~\ref{fig:decomposition_ex}(a) and~\ref{fig:decomposition_ex}(b) present a graph $G$ and its path decomposition $\cP$, respectively, where the subgraph structure in each bag $X_i$ is also given. Figure~\ref{fig:decomposition_ex}(c) depicts the derived graph $\cG$. Note that $\cP$ is not connected: the subgraphs $G[X_1\cup\cdots\cup X_i]$ are not connected for $i=2,3,4$.
\begin{figure}[ht]
	\begin{center}
	\includegraphics{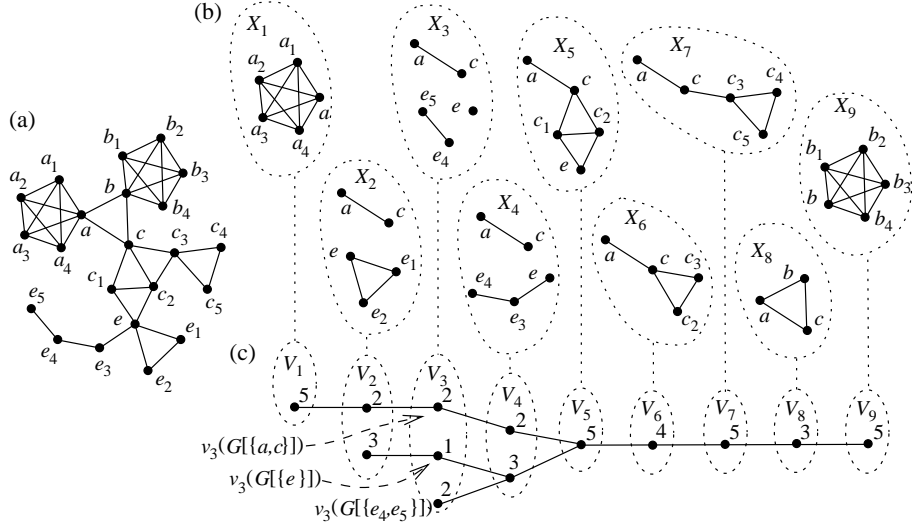}
	\caption{(a) a graph $G$; (b) a path decomposition $\cP$ of $G$; (c) the weighted graph $\cG$ derived from $G$ and $\cP$}
	\label{fig:decomposition_ex}
	\end{center}
\end{figure}

Let $C\subseteq V(\cG)$. The \emph{border} $\border(C)$ of the set $C$ is its subset consisting of all the vertices $v\in C$ such that there exists $u\in V(\cG)\setminus C$ adjacent to $v$ in $\cG$, i.e. $\border(C)=N_{\cG}(V(\cG)\setminus C)$. The algorithms presented in Sections~\ref{sec:simple_algorithm} and~\ref{sec:algorithm} maintain, besides a set $C$ and its border $\border(C)$, a partition $\lborder(C),\rborder(C)$ such that $\border(C)=\lborder(C)\cup\rborder(C)$ and
\begin{equation} \label{eq:border_partition}
\lborder(C)\subseteq V_1\cup\cdots\cup V_i,\quad \rborder(C)\subseteq V_{i+1}\cup\cdots\cup V_d,
\end{equation}
for some $0\leq i\leq d$. We prove that the partitions used by the algorithms do satisfy the above condition and for the time being we continue with the assumption that for each $C$ and $\border(C)$ such a partition is given.

Given a set $X\subseteq V(\cG)$, $X\neq\emptyset$, we define the \emph{left} (\emph{right}) \emph{extremity} of $X$ as
$l(X)=\min\{i\colon V_i\cap X\neq\emptyset\}$ ($r(X)=\max\{i\colon V_i\cap X\neq\emptyset\}$, respectively). Note that (\ref{eq:border_partition}) in particular implies $r(\lborder(C))<l(\rborder(C))$.

\section{A simple conversion algorithm} \label{sec:simple_algorithm}

In this section we present a simple algorithm that takes $G$ and a path decomposition $\cP$ of $G$ as an input, and returns a connected path decomposition $\cC$ of $G$. However, it is not guaranteed that $\width(\cC)/\width(\cP)$ is bounded by a constant. On the other hand, both this one and our final algorithm presented in Section~\ref{sec:algorithm} can be seen as the sequences of executions of the basic steps, and the difference lies in using different criteria while constricting those sequences.

The first computation performed by $\AlgSimpleConnectedPathwidth$ (\emph{Simple Connected Pathwidth}) is the construction of the derived graph $\cG$ and in the subsequent steps the algorithm works on $\cG$. (Also, most parts of our analysis use $\cG$ rather than $G$.) The algorithm computes a sequence of sets $C_j\subseteq V(\cG)$, $j=1,\ldots,m$, called \emph{expansions}. In addition to that, the sets $A_j\subseteq V(\cG)$, $j=2,\ldots,m$, and $B_j\subseteq V(\cG)$, $j=1,\ldots,m$, are computed. The former one consists of the vertices that are added to $C_{j-1}$ to obtain $C_j$, while $B_j$ is used to determine the vertices of $G$ that belong to the $j$-th bag of the resulting path decomposition $\cC$.
Informally speaking, $A_j$ consists of some vertices in $N_{\cG}(C_{j-1})$, and $B_j=\border(C_j)\cup A_j$, $j=2,\ldots,m$. The expansion $C_1$ consists of any vertex in $V_1$, and $C_m=V(\cG)$ at the end of the execution of $\AlgSimpleConnectedPathwidth$. Moreover, $C_j\varsubsetneq C_{j+1}$ for each $j=1,\ldots,m-1$. This guarantees that the final path decomposition obtained from $B_1,\ldots,B_m$ is valid and connected, as proven in Lemma~\ref{lem:scp_correctness}.
By construction, $\omega(B_j)$ is the size of the corresponding $j$-th bag of $\cC$, but we do not attempt to bound the width of the path decomposition returned by $\AlgSimpleConnectedPathwidth$. In this section, besides of the statement of the algorithm, we prove its correctness, i.e. that it stops and returns a connected path decomposition.

The algorithm computes for each expansion $C_j$ two disjoint sets called the \emph{left} and \emph{right borders} of $C_j$ (introduced informally in Section~\ref{sec:definitions}), denoted by $\lborder(C_j)$ and $\rborder(C_j)$, respectively. It is guaranteed that $\lborder(C_j)\cup\rborder(C_j)=\border(C_j)$ for each $j=1,\ldots,m$. As it is proven later, the left and right borders are special types of partitions of $\border(C_j)$. In particular, for each $j=1,\ldots,m$ there exists an integer $i\in\{0,\ldots,d\}$ such that the left border $\lborder(C_j)$ is contained in $V_1\cup\cdots V_i$ and the right border $\rborder(C_j)$ is a subset of $V_{i+1}\cup\cdots\cup V_d$. For brevity let $l(\lborder(C_j))=r(\lborder(C_j))=0$ if $\lborder(C_j)=\emptyset$ and $l(\rborder(C_j))=r(\rborder(C_j))=d+1$ if $\rborder(C_j)=\emptyset$, where $C_j$ is any expansion. The above-mentioned properties are not needed while proving the correctness of the algorithms, but are necessary for proving the upper bound on $\width(\cC)$, where $\cC$ is returned by the algorithm from Section~\ref{sec:algorithm}. For this reason their formal statements and proofs are postponed till Section~\ref{sec:bound}.

As mentioned earlier, both of our algorithms use two basic steps, called $\EL$ (\emph{Left Extension}) and $\ER$ (\emph{Right Extension}). We describe them first and then we give the pseudo-code of $\AlgSimpleConnectedPathwidth$. The steps use $m$, $A_m,B_m,C_m$, the borders $\lborder(C_m)$ and $\rborder(C_m)$, and the derived graph $\cG$ as global variables. Both steps are symmetric, they increment $m$ and compute the above list of sets for the new index $m$. The input is an integer $i\in\{1,\ldots,d\}$. Informally speaking, the new set $C_m$ is computed by adding to $C_{m-1}$ the selected vertices among those in $V_{i-1}$ in case of Step~$\EL$ and in $V_{i+1}$ in case of $\ER$ that are adjacent to the vertices in $C_{m-1}$.

\begin{center}
\begin{tabular}{rcl}
\parbox{0.45\textwidth}{
\algrule
\f{0}{\textbf{Step} $\EL$ (\emph{Left Extension})}
\algrule
\f[10pt]{0}{\textbf{Input:} An integer $i\in\{1,\ldots,d\}$. ($\cG$, $m$, $A_m,B_m,C_m$, and the borders are used as global variables).}
\f{0}{\textbf{begin}}
\f{1}{   Increment $m$.}
\f{1}{   $A_m:=V_{i-1}\cap(N_{\cG}(\border(C_{m-1})\cap V_{i})\setminus C_{m-1})$,}
\f{1}{   $C_m:=C_{m-1}\cup A_m$,}
\f{1}{   $B_m:=\border(C_m)\cup A_m$,}
\f{1}{   $\lborder(C_m):=(\lborder(C_{m-1})\cup A_m)\cap\border(C_m)$,}
\f{1}{   $\rborder(C_m):=\rborder(C_{m-1})\cap\border(C_m)$,}
\f{0}{\textbf{end Step} $\EL$.}
\algrule
}
&\hspace*{0.02\textwidth}&
\parbox{0.45\textwidth}{
\algrule
\f{0}{\textbf{Step} $\ER$ (\emph{Right Extension})} \algrule
\f[10pt]{0}{\textbf{Input:} An integer $i\in\{1,\ldots,d\}$. ($\cG$, $m$, $A_m,B_m,C_m$, and the borders are used as global variables).}
\f{0}{\textbf{begin}}
\f{1}{   Increment $m$.}
\f{1}{   $A_m:=V_{i+1}\cap(N_{\cG}(\border(C_{m-1})\cap V_{i})\setminus C_{m-1})$,}
\f{1}{   $C_m:=C_{m-1}\cup A_m$,}
\f{1}{   $B_m:=\border(C_m)\cup A_m$,}
\f{1}{   $\rborder(C_m):=(\rborder(C_{m-1})\cup A_m)\cap\border(C_m)$,}
\f{1}{   $\lborder(C_m):=\lborder(C_{m-1})\cap\border(C_m)$,}
\f{0}{\textbf{end Step} $\ER$.}
\algrule
}
\end{tabular}
\end{center}

Figure~\ref{fig:scp_example}(a) depicts an exemplary subgraph $\cG$ (induced by $V_1\cup\cdots\cup V_5$) together with white vertices in an expansion $C_m$. In all cases (including the following figures) $\diamondsuit$ and $\square$ are used to denote the vertices of the right and left borders, respectively. Figures~\ref{fig:scp_example}(b) and~\ref{fig:scp_example}(c) show the result of the execution of $\EL(r(\lborder(C_m)))$ and $\ER(r(\lborder(C_m)))$, respectively, i.e. the corresponding expansion $C_{m+1}$. (Note that $r(\lborder(C_m))=3$ in this example.) We always use $\EL$ and $\ER$ with input $i\in\{r(\lborder(C_m)),l(\rborder(C_m))\}$.
\begin{figure}[ht]
	\begin{center}
	\input{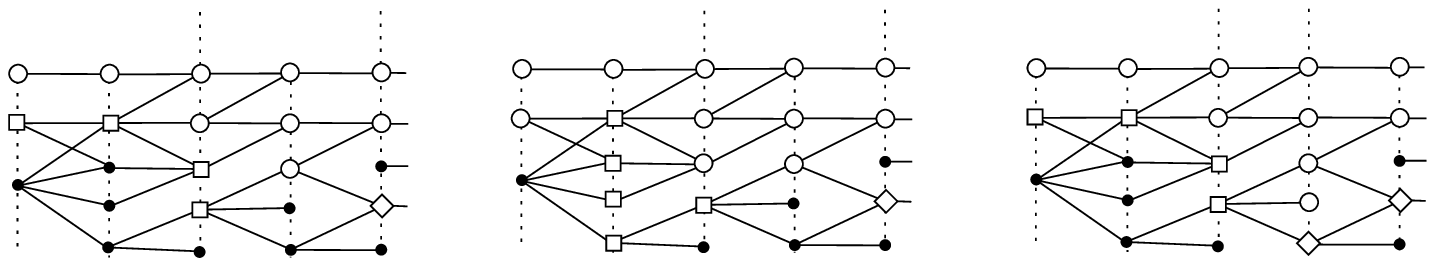_t}
	\caption{(a) an expansion $C_m$; (b) $C_{m+1}$ after the execution of $\EL(r(\lborder(C_m)))$; (c) $C_{m+1}$ after the execution of $\ER(r(\lborder(C_m)))$}
	\label{fig:scp_example}
	\end{center}
\end{figure}

\algrule
\f{0}{\textbf{Procedure} $\AlgSimpleConnectedPathwidth$ (\emph{Simple Connected Pathwidth})} \algrule
\f{0}{\textbf{Input:} A simple graph $G$ and a path decomposition $\cP$ of $G$.}
\f{0}{\textbf{Output:} A connected path decomposition $\cC$ of $G$.}
\f{0}{\textbf{begin}}
\f{1}{ Use $G$ and $\cP$ to calculate the derived graph $\cG$. Let $v$ be any vertex in $V_1$ and let $B_1=C_1=\{v\}$ and $\lborder(C_1)=\emptyset$, $\rborder(C_1)=\{v\}$. Let $m=1$. If $v$ has no neighbors in $\cG$, then return $\cP$.}
\f{1}{\textbf{while} $C_m\neq V(\cG)$ \textbf{do}}
\f{2}{   Execute any of the following Steps~S1-S4 that result in computing $C_m$ such that $C_{m}\neq C_{m-1}$:}
\f{3}{        S1: Call $\EL(r(\lborder(C_m)))$.}
\f{3}{        S2: Call $\ER(l(\rborder(C_m)))$.}
\f{3}{        S3: Call $\ER(r(\lborder(C_m)))$.}
\f{3}{        S4: Call $\EL(l(\rborder(C_m)))$.}
\f{1}{\textbf{end while}}
\f{1}{Let $Z_j=\bigcup_{v(H)\in B_j}V(H)$ for each $j=1,\ldots,m$. \textbf{Return} $\cC=(Z_1,\ldots,Z_m)$.}
\textbf{end procedure} $\AlgSimpleConnectedPathwidth$.
\algrule\\

First we briefly discuss the initialization stage of $\AlgConnectedPathwidth$. The first expansion $C_1$ consists of a single vertex $v\in V_1$. If $v$ has no neighbors in $\cG$, then due to the connectedness of $G$, $\cP$ consists of a single bag, and therefore $\cP$ is connected itself.

The instructions prior to the while loop of $\AlgSimpleConnectedPathwidth$ compute the first set of variables, i.e. for $m=1$. In the following, one \emph{iteration} of $\AlgSimpleConnectedPathwidth$ means one iteration of its `while' loop, which reduces to the execution of one of Steps S1-S4. We discuss Steps~S1 and~S3, because the other ones are symmetric (S2 is symmetric to S1 and S3 is symmetric to S4). Let us consider Step~S1. (We refer here to $C_m$ from the beginning of the execution of Step~$\EL$). If no vertex in $V_{r(\lborder(C_m))}\cap\lborder(C_m)$ has a neighbor in $V_{r(\lborder(C_m))+1}\setminus C_m$, then the execution of Step~$\EL$ guarantees that the right extremity of the left border that will be obtained in $\EL$ is strictly less than $r(\lborder(C_m))$, i.e. $r(\lborder(C_{m+1}))<r(\lborder(C_m))$. The right border of the new expansion computed by $\EL$ in Step~S1 always is equal to the right border of the previous expansion. Note that if no vertex in $V_{r(\lborder(C_m))}\cap\lborder(C_m)$ has a neighbor in $V_{r(\lborder(C_m))-1}\setminus C_m$, then $\EL$ called in Step~S1 would compute $C_{m+1}$ that is equal to $C_{m}$ and therefore Step~S1 is not executed in such case. Step~S3, on the other hand, guarantees that the left border of the new expansion is contained in the left border of the previous one. Then, informally speaking, the right border of the new expansion, that is $\rborder(C_{m+1})$, besides some vertices from $\rborder(C_m)$, consists of some vertices in $V_{r(\lborder(C_m))+1}$.

Now we give one preliminary lemma and then we prove that $\cC$ returned by $\AlgSimpleConnectedPathwidth$ is a connected path decomposition of $G$.

\begin{lemma} \label{lem:C_is_connected}
$\cG[C_j]$ is connected for each $j=1,\ldots,m$.
\end{lemma}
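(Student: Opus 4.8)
The plan is to prove this by induction on $j$, showing that each expansion $C_j$ induces a connected subgraph of $\cG$. The base case is $j=1$, where $C_1 = \{v\}$ is a single vertex, which is trivially connected. For the inductive step, I would assume $\cG[C_{j-1}]$ is connected and show that $\cG[C_j]$ is connected, where $C_j = C_{j-1} \cup A_j$ is obtained by executing one of the Steps S1--S4 (equivalently, a call to $\EL$ or $\ER$).

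The key observation is how $A_j$ is defined in both $\EL$ and $\ER$: it consists of vertices (in $V_{i-1}$ or $V_{i+1}$, respectively) that lie in $N_{\cG}(\border(C_{j-1}) \cap V_i) \setminus C_{j-1}$. In other words, \emph{every} vertex added to form $A_j$ is, by construction, adjacent in $\cG$ to some vertex of $\border(C_{j-1}) \cap V_i \subseteq C_{j-1}$. This is the heart of the argument: since each new vertex $u \in A_j$ has a neighbor in the already-connected set $C_{j-1}$, attaching the vertices of $A_j$ one at a time to $\cG[C_{j-1}]$ preserves connectivity. Formally, I would argue that for any $u \in A_j$, there is a path in $\cG[C_j]$ from $u$ to any fixed vertex of $C_{j-1}$: namely, $u$ connects directly to its witnessing neighbor in $C_{j-1}$, and then the induction hypothesis provides a path within $\cG[C_{j-1}]$ to the target.

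The main point requiring care — and what I expect to be the chief subtlety rather than a genuine obstacle — is that $A_j$ may contain several new vertices simultaneously, so one cannot simply invoke a ``single vertex with a neighbor'' argument naively; one must verify that each such vertex independently connects \emph{back into} $C_{j-1}$ (not merely to another vertex of $A_j$), which the definition of $A_j$ guarantees since the neighbor lies in $\border(C_{j-1}) \cap V_i$, a subset of $C_{j-1}$. Thus connectivity does not rely on edges among the newly added vertices. I would also note that the while loop only executes a Step when $C_j \neq C_{j-1}$, so $A_j \neq \emptyset$ in every genuine iteration, but this is not needed for connectivity (an empty $A_j$ would give $C_j = C_{j-1}$, still connected); the essential content is purely the adjacency property of $A_j$ to $C_{j-1}$. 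Since the borders $\lborder, \rborder$ and the auxiliary set $B_j$ play no role in determining $C_j$ beyond selecting the input index $i$, they can be ignored entirely for this lemma.
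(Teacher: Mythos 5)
Your proof is correct and follows essentially the same route as the paper: induction on $j$, with the inductive step resting on the fact that every vertex of $A_j$ is, by the definition of Steps~$\EL$/$\ER$, adjacent to a vertex of $\border(C_{j-1})\cap V_i\subseteq C_{j-1}$ (the paper states this as $A_{j+1}\subseteq N_{\cG}(C_j)$). Your additional remark that each new vertex connects back into $C_{j-1}$ rather than merely to another vertex of $A_j$ is a correct and slightly more explicit rendering of the same point.
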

\begin{proof}
By induction on $j\in\{1,\ldots,m\}$. $C_1$ is connected, because it consists of a single vertex of $\cG$. Suppose that $C_j$, $1\leq j<m$, is connected and let us consider $C_{j+1}$. The computation of the latter one is performed in Step~$\EL$ or Step~$\ER$. By the definition, $C_{j+1}=C_j\cup A_{j+1}$ for some $i\in\{1,\ldots,d\}$. The connectedness of $\cG[C_{j+1}]$ follows from the definition of $N_{\cG}$ and from the fact that $A_{j+1}\subseteq N_{\cG}(C_j)$.
\end{proof}

\begin{lemma} \label{lem:scp_correctness}
Given a simple graph $G$ and its path decomposition $\cP=(X_1,\ldots,X_d)$, $\AlgSimpleConnectedPathwidth$ returns a connected path decomposition $\cC=(Z_1,\ldots,Z_m)$ of $G$.
\end{lemma}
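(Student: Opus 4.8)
The plan is to verify the three defining conditions of a path decomposition for $\cC=(Z_1,\ldots,Z_m)$ together with the connectivity requirement, transferring each property from the structure of the expansions $C_j$ in $\cG$ back to the vertex sets $Z_j$ in $G$. The central bookkeeping fact I would establish first is the relationship between the bags $B_j$ of $\cG$ and the resulting bags $Z_j$: since $Z_j=\bigcup_{v(H)\in B_j}V(H)$, a vertex $x\in V(G)$ lies in $Z_j$ exactly when some component-vertex $v_j(H)$ (or $v_{j'}(H)$ for the appropriate index) with $x\in V(H)$ belongs to $B_j=\border(C_j)\cup A_j$. I would also record the monotonicity established informally in the text, namely $C_1\varsubsetneq C_2\varsubsetneq\cdots\varsubsetneq C_m=V(\cG)$, which guarantees termination and that every vertex of $\cG$ eventually enters some expansion.

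For the \emph{covering} condition $\bigcup_j Z_j=V(G)$, I would argue that each vertex $v_i(H)$ of $\cG$ is placed in the border of some expansion at the moment it is added (it enters via some $A_j$, hence lies in $B_j$), so every component $H$ contributes its vertices $V(H)$ to at least one $Z_j$; since the components of each $G[X_i]$ partition $X_i$ and the $X_i$ cover $V(G)$, this yields full coverage. For the \emph{edge} condition, I would take $\{u,w\}\in E(G)$, locate a bag $X_i$ of $\cP$ containing both endpoints, and observe that $u,w$ then lie in a common component $H$ of $G[X_i]$, so they both appear in $Z_j$ for any $j$ with $v_i(H)\in B_j$—and such a $j$ exists because $v_i(H)$ enters some expansion.

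The main obstacle I expect is the \emph{interval} condition, that each $x\in V(G)$ occupies a contiguous block of bags $Z_j$. Here I would exploit the partition $\border(C_j)=\lborder(C_j)\cup\rborder(C_j)$ and the extremity inequality $r(\lborder(C_j))<l(\rborder(C_j))$, tracking how a fixed vertex $v_i(H)$ of $\cG$ can move in and out of the border as expansions grow. Because the $\EL$/$\ER$ steps only add vertices and recompute the border by intersecting with $\border(C_j)$, a component-vertex is dropped from $B_j$ precisely once all its neighbors have been absorbed into $C_j$, and thereafter it never returns (as $C_j$ only grows); combined with the fact that the underlying vertex $x$ of $G$ may appear in several $\cG$-vertices $v_i(H)$ across consecutive bags $X_i,X_{i+1},\ldots$, whose adjacency in $\cG$ is exactly governed by $V(H)\cap V(H')\neq\emptyset$, this should force the set $\{j:x\in Z_j\}$ to be an interval. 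Finally, for \emph{connectivity} I would invoke Lemma~\ref{lem:C_is_connected}: since $G[Z_1\cup\cdots\cup Z_j]\supseteq G[C_j]$-image is connected and each newly added $A_{j+1}$ attaches to $C_j$, the prefix union $G[Z_1\cup\cdots\cup Z_j]$ is connected for every $j$, which is the defining property of a connected path decomposition. The delicate step throughout is ensuring the translation from $\cG$-vertices back to $G$-vertices preserves each property, which is why I would prove the $B_j$–$Z_j$ correspondence carefully at the outset.
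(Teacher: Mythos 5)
Your plan follows essentially the same route as the paper's proof: the covering and edge conditions via the first index $j$ at which $v_i(H)$ enters an expansion (so $v_i(H)\in A_j\subseteq B_j$), the interval condition via the connectedness of the set of $\cG$-vertices whose components contain a fixed $x\in V(G)$ together with the monotone growth of the $C_j$ and the way the border is recomputed, and the connectivity of the prefixes via Lemma~\ref{lem:C_is_connected}. The one item you should not outsource to ``the text'' is that the algorithm actually reaches $C_m=V(\cG)$: the paper proves this inside the lemma by showing that in every iteration at least one of Steps S1--S4 strictly enlarges the expansion (a vertex of $V_i\cap\lborder(C_{j-1})$ with $i=r(\lborder(C_{j-1}))$ must have a neighbor in $V_{i-1}\setminus C_{j-1}$ or $V_{i+1}\setminus C_{j-1}$, and symmetrically for the right border) and that $\lborder(C_j)=\rborder(C_j)=\emptyset$ forces $C_j=V(\cG)$ by the connectedness of $G$ --- this termination argument is part of the claim and needs to be written out, not merely recorded.
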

\begin{proof}
Note that $C_m=V(\cG)$. This follows from an observation that in each iteration of $\AlgSimpleConnectedPathwidth$ at least one of Steps~S1-S4 guarantees to compute $C_j$ such that $C_j\neq C_{j-1}$. Indeed, if $\lborder(C_{j-1})\neq\emptyset$, then a vertex $x\in V_i\cap\lborder(C_{j-1})$, $i=r(\lborder(C_{j-1}))$, has a neighbor $v$ in either $V_{i-1}\setminus C_{j-1}$ or in $V_{i+1}\setminus C_{j-1}$. By the formulation of Steps~$\EL$ and~$\ER$ we obtain that $v\in C_j$ if Step~S1 or Step~S3, respectively, are performed. An analogous argument holds for the right border. Note that $\lborder(C_j)=\rborder(C_j)=\emptyset$ is, by the connectedness of $G$, equivalent to $C_j=V(\cG)$. Thus, the execution of $\AlgSimpleConnectedPathwidth$ stops and the algorithm returns $\cC=(Z_1,\ldots,Z_m)$.

Now we prove that $\cC$ is a path decomposition of $G$.
Let $u$ be any vertex of $G$. Since $\cP$ is a path decomposition, $u\in X_i$ for some $i\in\{1,\ldots,d\}$. Therefore, $u$ is a vertex of a subgraph $H$ such that $v_i(H)\in V_i$. Since $C_m=V(\cG)$ and $C_j\subseteq C_{j+1}$ for each $j=1,\ldots,m-1$, we obtain that there exists a minimum integer $j\in\{1,\ldots,m\}$ such that $v_i(H)\in C_j$. By construction, $v_i(H)\in A_j$ and therefore $v_i(H)\in B_j$. Thus, $u\in Z_j$.

Similarly, for each $\{u,v\}\in E(G)$ there exists $j\in\{1,\ldots,m\}$ such that $u,v\in Z_j$. Indeed, $u,v\in X_i$ for some $i\in\{1,\ldots,d\}$ implies, as before, that $\{u,v\}$ is an edge of some subgraph $H$ of $G$ such that $v_i(H)\in B_j$.

Let $i,k$ be integers, $1\leq i\leq k\leq m$, and suppose that $u\in Z_i\cap Z_k$. We show that $u\in Z_j$ for each $i\leq j\leq k$. First note that the subgraph $\cG[U]$, where $U=\{v_i(H)\colon u\in V(H),i=1,\ldots,d\}$, is connected, because, if $u\in V(H)$, $v_s(H)\in V_{s}$ and $u\in V(H')$, $v_{s'}(H)\in V_{s'}$, $s\leq s'$, then by the definition of the derived graph $u\in X_s$ and $u\in X_{s'}$. Since $\cP$ is a path decomposition, $u\in X_p$ for each $p=s,\ldots,s'$. Therefore, for each $p=s,\ldots,s'$, there exists a vertex $v_p(H_p'')$ in $V_p$ such that $u\in V(H_p'')$. By the definition, $v_p(H_p'')\in U$. Since $C_j\subseteq C_{j+1}$ and the Steps~$\EL$ and~$\ER$ compute $C_{j+1}$ by taking the vertices in $C_j$ and the subset of $N_{\cG}(C_j)$, $j=1,\ldots,m-1$, the connectedness of $\cG[U]$ gives us that $u\in Z_j$ for $i\leq j\leq k$.

Finally we prove that $G[Z_1\cup\cdots\cup Z_j]$ is connected for each $j=1,\ldots,m$. Let $u,u'\in Z_1\cup\cdots\cup Z_j$. From the formulation of Steps~$\EL$ and~$\ER$ it follows that $C_j=B_1\cup\cdots\cup B_j$. Thus, $C_j$ consists of the vertices $v_i(H)$ such that $H$ is a connected component of $G[Z_p]$, where $p\in\{1,\ldots,j\}$. By the definition of derived graph, there exist two vertices $v_i(H)$, $v_{i'}(H')$ of $\cG$ such that $u\in V(H)$, $u'\in V(H')$ and $v_i(H)\in C_j$, $v_{i'}(H')\in C_j$. By Lemma~\ref{lem:C_is_connected}, there exists a path $P$ in $\cG[C_j]$ connecting $v_i(H)$ and $v_{i'}(H')$. Let the consecutive vertices of $P$ be $v_{i}(H)=v_{i_1}(H_1),\ldots,v_{i_p}(H_{p})=v_{i'}(H')$. By the definition of $\cG$, $H_{s}$ and $H_{s+1}$ share a vertex of $G$, $s=1,\ldots,p-1$. Moreover, $H_{s}$ is connected for each $s=1,\ldots,p$. This implies that there exists a path $P'$ in $G$ between each vertex of $H$ and each vertex of $H'$, in particular between $u$ and $u'$, and $V(P')\subseteq\bigcup_{1\leq s\leq p}V(H_{s})$. Since $v_i(H)\in C_j$ implies $v_i(H)\in B_s$ for some $s\leq j$, and consequently, $V(H)\subseteq Z_{s}$, we obtain that $V(P')\subseteq Z_1\cup\cdots\cup Z_j$, which proves the connectedness of $\cC$.
\end{proof}

\section{The branches} \label{sec:branches}

In this section we introduce the concept of \emph{branches}, our main tool for organizing the sequences of consecutive executions of Steps~$\EL$ and~$\ER$.
A path $P$ in $\cG$ is \emph{progressive} if $|V(P)\cap V_i|\leq 1$ for each $i=1,\ldots,d$. Note that a progressive path that connects a vertex in $V_i$ to a vertex in $V_j$ consists of exactly $|i-j|$ edges, or in other words, a progressive path contains exactly one vertex in $V_s$ for each $s=\min\{i,j\},\ldots,\max\{i,j\}$.

\begin{defn}
Given $\cG$ and $C\subseteq V(\cG)$, a \emph{left} \emph{branch} $\lbranch(C,i)$, where $i\leq r(\lborder(C))$ is the subgraph of $\cG$ induced by the vertices in $\lborder(C)$ and by the vertices $v\in V_k\setminus C$, where $i\leq k<r(\lborder(C))$, connected by a progressive path to a vertex $x\in V_j\cap\lborder(C)$ for some $k<j$.

A \emph{right} \emph{branch} $\rbranch(C,i)$, where $i\geq l(\rborder(C))$ is the subgraph of $\cG$ induced by the vertices in $\rborder(C)$ and by the vertices $v\in V_k\setminus C$, where $l(\rborder(C))<k\leq i$, connected by a progressive path to a vertex $x\in V_j\cap\rborder(C)$ for some $j<k$.
\end{defn}
Informally speaking, we construct $\lbranch(C,i)$ by taking $\lborder(C)$ and all vertices in $V_i\cup\cdots\cup V_{r(\lborder(C))}$ achievable from the vertices $u\in\lborder(C)$ with progressive paths having $u$ as the right endpoint.
We sometimes write $\branch$ to refer to a branch whenever its `direction' or $C,i$ are clear from the context.

Let $\branch=\lbranch(C,i)$, $i\leq r(\lborder(C))$, ($\branch=\rbranch(C,i)$, $i\geq l(\rborder(C))$) be a branch.
A vertex $v$ of $\branch$ is \emph{external} if $N_{\cG}(v)\nsubseteq C\cup V(\branch)$.
The branch $\branch$ is \emph{proper} if it has no external vertices in $V_j$ for each $j>i$ ($j<i$, respectively), while $\branch$ is \emph{maximal} if it has no external vertices or if it is proper and $\lbranch(C,i-1)$ ($\rbranch(C,i+1)$, respectively) is not a proper branch. Let $\Ext(\branch)$ denote the set of all external vertices of $\branch$.

Informally speaking, the external vertices of a branch $\branch$ are the ones that have neighbors not in $C\cup V(\branch)$. A left branch $\lbranch(C,i)$ is proper if we can `grow' it from $r(\lborder(C))$ to $i$ without leaving any external vertices in $V_j$, $j>i$. Then, the maximality of the branch implies that we cannot grow the branch beyond $i$, because either $i=1$ or the new branch would have an external vertex in $V_i$ (in such case this external vertex must have a neighbor in $V_{i+1}\setminus(C\cup V(\branch))$).

Figure~\ref{fig:branch} illustrates the above definitions. (In all cases the branch is distinguished by the dark area.)
\begin{figure}[ht]
	\begin{center}
	\input{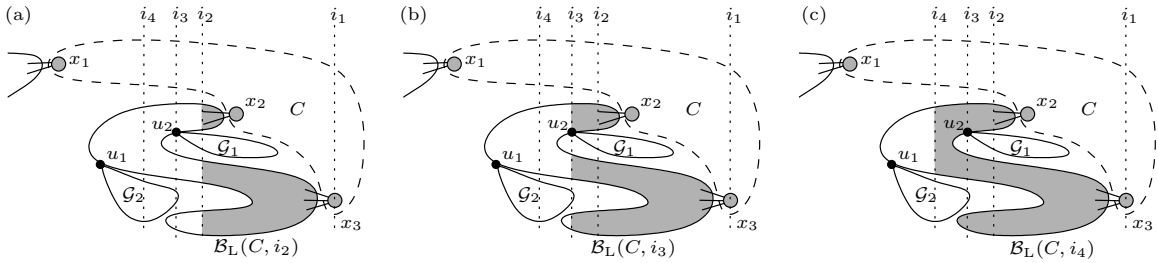_t}
	\caption{$\cG$ with distinguished vertex sets $C$ and $\lborder(C)=\{x_1,x_2,x_3\}$, and the corresponding (left) branches that are:
           (a) proper but not maximal;
           (b) maximal;
           (c) not proper (thus not maximal)}
	\label{fig:branch}
	\end{center}
\end{figure}
Let $\lborder(C)=\{x_1,x_2,x_3\}$. Figure~\ref{fig:branch}(a) gives $\lbranch(C,i_2)$ and this branch is proper, but not maximal for each $i_2$, $i_3<i_2\leq i_1$. The lack of maximality is due to the fact that, informally speaking, we can `grow' $\lbranch(C,i_2)$ by including the corresponding vertices in $V_{i_2-1}$ and the new branch is still proper, as none of its vertices in $V_{i_2}$ are external. The branch $\lbranch(C,i_3)$ (see Figure~\ref{fig:branch}(b)) is maximal (thus proper), which follows from the fact that any branch $\lbranch(C,i_4)$, where $i_4<i_3$, is not proper, because it contains an external vertex $u_2$, as shown in Figure~\ref{fig:branch}(c). Note that the vertices of $\cG_1$ and $\cG_2$ (except for $u_1$ and $u_2$) do not belong to any left branch $\lbranch(C,i)$, because they are not connected by progressive paths to $x_2$ or $x_3$. In our algorithm we ensure that each branch we use is proper.

An integer $j$ is a \emph{cut} of a left branch $\branch=\lbranch(C,i)$ if $j\in\{i,\ldots,r(V(\branch))\}$. Then, its \emph{weight} equals $\omega(\Ext(\lbranch(C,j)))$.
An integer $j$ is a \emph{cut} of a right branch $\branch=\rbranch(C,i)$ if $j\in\{l(V(\branch)),\ldots,i\}$. Then, its \emph{weight} equals $\omega(\Ext(\rbranch(C,j)))$.
A cut of minimum weight is a \emph{bottleneck} of a branch.

We finish this section with the following observations.
\begin{observation} \label{obs:o1}
For each expansion $C$ it holds $V_j\cap V(\lbranch(C,i))=V_j\cap V(\lbranch(C,i'))$ for each $i\leq i'\leq j\leq r(\lborder(C))$ and $V_j\cap V(\rbranch(C,i))=V_j\cap V(\rbranch(C,i'))$ for each $l(\rborder(C))\leq j\leq i'\leq i$.
\qed
\end{observation}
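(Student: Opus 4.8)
The plan is to prove both equalities directly from the definition of a branch, exploiting the fact that the parameter $i$ enters the description of $\lbranch(C,i)$ only as a lower bound on the layer index of the non-border vertices that are admitted, and symmetrically for $\rbranch(C,i)$. The whole argument is a careful comparison of membership conditions, so the real work is bookkeeping about which index controls what.

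First I would unwind the vertex set of a left branch. Writing $R=r(\lborder(C))$, the set $V(\lbranch(C,i))$ is exactly $\lborder(C)$ together with all $v\in V_k\setminus C$ for which $i\leq k<R$ and $v$ is joined by a progressive path to some border vertex $x\in V_\ell\cap\lborder(C)$ with $k<\ell$. Fixing a layer index $j$ with $i\leq i'\leq j\leq R$, I would intersect this description with $V_j$: the border part always contributes $V_j\cap\lborder(C)$, independently of $i$, while a non-border vertex $v\in V_j\setminus C$ qualifies iff $i\leq j<R$ together with the existence of a progressive path from $v$ to a higher-layer border vertex, a condition that mentions only $\cG$, $C$, and $\lborder(C)$. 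The \emph{key point} is that the single $i$-dependent clause, $i\leq j$, holds for both $i$ and $i'$ because $j\geq i'\geq i$; hence the membership test for vertices of $V_j$ is identical for $\lbranch(C,i)$ and $\lbranch(C,i')$, and the two intersections coincide. The degenerate case $j=R$ is immediate, since the clause $k<R$ excludes all non-border vertices and both sides reduce to $V_j\cap\lborder(C)$.

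The right-branch equality I would obtain by the mirror-image argument. With $L=l(\rborder(C))$, a non-border vertex $v\in V_j\setminus C$ lies in $\rbranch(C,i)$ iff $L<j\leq i$ and a progressive path joins $v$ to a lower-layer vertex of $\rborder(C)$, so now the only $i$-dependent clause is $j\leq i$, which holds for both $i$ and $i'$ whenever $L\leq j\leq i'\leq i$; the conclusion follows exactly as before, with $j=L$ handled trivially.

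The one thing that needs care, and the place I would expect a subtlety, is confirming that a progressive path realizing the membership of $v\in V_j$ never forces us to consider layers below $j$ (which could in principle reintroduce a dependence on the window $[i,R]$ versus $[i',R]$). Here I would invoke that the edges of $\cG$ run only between consecutive $V_s$, so, as already noted in this section, a progressive path from $V_j$ to $V_\ell$ with $\ell>j$ is monotone and occupies precisely the layers $j,j+1,\ldots,\ell$, all of which satisfy $s\geq j\geq i'\geq i$. Thus such a path stays inside both windows, its existence is genuinely independent of $i$, and the two branch descriptions agree on $V_j$, which is what the observation asserts.
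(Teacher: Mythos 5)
Your proof is correct and matches the paper's (implicit) justification: the paper states this as an observation with no written proof, treating it as immediate from the definition of a branch, and your argument is precisely the definitional unwinding that makes this rigorous — the parameter $i$ enters the membership test for a vertex of $V_j$ only through the clause $i\leq j$, and the progressive-path condition occupies only layers $\geq j$, so nothing else depends on $i$. No gaps.
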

\begin{observation} \label{obs:o2}
For each expansion $C$ the weight of a cut $j$ of a proper branch $\lbranch(C,i)$, $i\leq j\leq r(\lborder(C))$, is less than or equal to $\omega((V_1\cup\cdots\cup V_{j-1})\cap\lborder(C))+\omega(V_j\cap V(\lbranch(C,i)))$, and the weight of a cut $j$ of a proper branch $\rbranch(C,i)$, $l(\rborder(C))\leq j\leq i$, is less than or equal to $\omega((V_{j+1}\cup\cdots\cup V_{d})\cap\rborder(C))+\omega(V_j\cap V(\rbranch(C,i)))$.
\qed
\end{observation}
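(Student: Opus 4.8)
The plan is to prove the bound for left branches; the right-branch case then follows by the symmetry between $\lbranch$ and $\rbranch$ that is built into all the relevant definitions, so I would only remark on it at the end. Fix a proper branch $\lbranch(C,i)$ and a cut $j$ with $i\leq j\leq r(\lborder(C))$. By definition the weight of this cut is $\omega(\Ext(\lbranch(C,j)))$, so the entire task is to locate the external vertices of the \emph{smaller} branch $\lbranch(C,j)$ and estimate their total weight. I would lean on two structural facts: (i) in $\cG$ every edge joins consecutive layers $V_\ell$ and $V_{\ell+1}$, so each vertex of $V_\ell$ has neighbours only in $V_{\ell-1}\cup V_{\ell+1}$; and (ii) Observation~\ref{obs:o1}, which gives $V_\ell\cap V(\lbranch(C,i))=V_\ell\cap V(\lbranch(C,j))$ for every layer $\ell$ with $j\leq\ell\leq r(\lborder(C))$, i.e. the two branches agree on all layers at or to the right of $j$.

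The key step is to show that $\Ext(\lbranch(C,j))$ contains no vertex from a layer $V_\ell$ with $\ell>j$. Take such a vertex $v\in V_\ell\cap V(\lbranch(C,j))$. By (ii) it also lies in $V(\lbranch(C,i))$, and since $\ell>j\geq i$ and $\lbranch(C,i)$ is proper, $v$ is not external in $\lbranch(C,i)$, that is, $N_{\cG}(v)\subseteq C\cup V(\lbranch(C,i))$. Now every neighbour of $v$ lies in $V_{\ell-1}\cup V_{\ell+1}$ by (i), and both $\ell-1$ and $\ell+1$ are at least $j$ because $\ell>j$; hence by (ii) any neighbour of $v$ lying in $V(\lbranch(C,i))$ in fact lies in $V(\lbranch(C,j))$. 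Therefore $N_{\cG}(v)\subseteq C\cup V(\lbranch(C,j))$, so $v$ is not external in $\lbranch(C,j)$. This is precisely where properness of $\lbranch(C,i)$ must be combined with the layered edge structure, and I expect it to be the main obstacle: the naive monotonicity of externality runs the wrong way, since a vertex external in the \emph{larger} branch is external in the smaller one but not conversely, so one cannot simply transfer properness of $\lbranch(C,i)$ to $\lbranch(C,j)$ without the layer-by-layer neighbour argument.

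With the external vertices now confined to layers $V_1,\ldots,V_j$, I would finish by splitting them according to their layer. A branch vertex in a layer $V_\ell$ with $\ell<j$ cannot be one of the ``grown'' vertices of $\lbranch(C,j)$, since by definition those occupy only layers $j,\ldots,r(\lborder(C))-1$; hence it must belong to $\lborder(C)$, so the external vertices in these layers contribute at most $\omega((V_1\cup\cdots\cup V_{j-1})\cap\lborder(C))$. The external vertices in the single layer $V_j$ form a subset of $V_j\cap V(\lbranch(C,j))$, which equals $V_j\cap V(\lbranch(C,i))$ by (ii), and so contribute at most $\omega(V_j\cap V(\lbranch(C,i)))$. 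Summing the two estimates and using that all vertex weights are non-negative yields $\omega(\Ext(\lbranch(C,j)))\leq\omega((V_1\cup\cdots\cup V_{j-1})\cap\lborder(C))+\omega(V_j\cap V(\lbranch(C,i)))$, which is the asserted inequality; the right-branch statement is obtained by reflecting every layer index.
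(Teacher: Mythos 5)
Your argument is correct. The paper states Observation~\ref{obs:o2} without proof, and your proof supplies exactly the intended justification: the layer-by-layer neighbour argument (combining properness of $\lbranch(C,i)$ with the fact that edges of $\cG$ only join consecutive layers $V_\ell$, $V_{\ell+1}$, and with Observation~\ref{obs:o1}) correctly confines $\Ext(\lbranch(C,j))$ to $V_1\cup\cdots\cup V_j$, and the remaining accounting --- branch vertices in layers below $j$ must lie in $\lborder(C)$, while those in $V_j$ lie in $V_j\cap V(\lbranch(C,i))$ --- yields the stated bound.
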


\section{The algorithm} \label{sec:algorithm}

We start with two subroutines $\AlgProcessLeftBranch$ (\emph{Process Left Branch}) and $\AlgProcessRightBranch$ (\emph{Process Right Branch}) that are used by the main procedure given in this section. The input to $\AlgProcessLeftBranch$ and to $\AlgProcessRightBranch$ consists of an integer $t$, $t\in\{1,\ldots,d\}$. In the following we say for brevity that an expansion has been computed by $\AlgProcessLeftBranch$ or $\AlgProcessRightBranch$ whenever it has been computed by $\EL$ or $\ER$ called by $\AlgProcessLeftBranch$ or $\AlgProcessRightBranch$, respectively.  Due to symmetry we skip the informal description of $\AlgProcessRightBranch$ here. If $C_{j+1}$ and $C_{j'}$ are the first and the last expansions computed by $\AlgProcessLeftBranch$, then $C_{j'}=C_j\cup V(\lbranch(C_j,t))$, which we formally show in Lemma~\ref{lem:branches}. This is achieved by several executions of Step~$\EL$. In particular, $\EL(r(\lborder(C_m)))$ is repeatedly called as long as the right extremity of the left border of the current expansion is greater than $t$. The procedures $\AlgProcessLeftBranch(t)$ and $\AlgProcessRightBranch(t)$ are as follows.

\noindent
\begin{center}
\begin{tabular}{rcl}
\parbox{0.45\textwidth}{
\algrule
\f{0}{\textbf{Procedure} $\AlgProcessLeftBranch$ (\emph{Process Left Branch})} \algrule
\f[10pt]{0}{\textbf{Input:} An integer $t$. ($m$, $C_m$ and $\lborder(C_m)$ are used as global variables)}
\f{0}{\textbf{begin}}
\f{1}{      \textbf{while} $r(\lborder(C_m))>t$ \textbf{do}}
\f{2}{          Call $\EL(r(\lborder(C_m)))$.}
\f{0}{\textbf{end procedure} $\AlgProcessLeftBranch$.}
\algrule
}
&\hspace*{0.03\textwidth}&
\parbox{0.45\textwidth}{
\algrule
\f{0}{\textbf{Procedure} $\AlgProcessRightBranch$ (\emph{Process Right Branch})} \algrule
\f[10pt]{0}{\textbf{Input:} An integer $t$. ($m$, $C_m$ and $\rborder(C_m)$ are used as global variables)}
\f{0}{\textbf{begin}}
\f{1}{      \textbf{while} $l(\rborder(C_m))<t$ \textbf{do}}
\f{2}{          Call $\EL(l(\rborder(C_m)))$.}
\f{0}{\textbf{end procedure} $\AlgProcessRightBranch$.}
\algrule
}
\end{tabular}
\end{center}

Now we are ready to give the pseudo-code of the main algorithm $\AlgConnectedPathwidth$ (\emph{Connected Pathwidth}). Its input consists of, as in the case of $\AlgSimpleConnectedPathwidth$, a simple graph $G$ and its path decomposition $\cP$.

\noindent
\algrule
\f{0}{\textbf{Algorithm} $\AlgConnectedPathwidth$ (\emph{Connected Pathwidth})}\algrule
\f{0}{\textbf{Input:} A simple graph $G$ and a path decomposition $\cP$ of $G$.}
\f{0}{\textbf{Output:} A connected path decomposition $\cC$ of $G$.}
\f{0}{\textbf{begin}}
\f{1}{  (\emph{Initialization}.)}
\f[20pt]{2}{    I.1: Use $G$ and $\cP$ to calculate the derived graph $\cG$.  Let $v$ be any vertex in $V_1$ and let $B_1=C_1=\{v\}$ and $\lborder(C_1)=\emptyset$, $\rborder(C_1)=\{v\}$. Let $m=1$. If $v$ has no neighbors in $\cG$, then return $\cP$.}
\f[20pt]{2}{    I.2: Find the maximal right branch $\rbranch(C_1,a_0)$ with a bottleneck $a_0'$ ($1\leq a_0'\leq a_0$). Call $\AlgProcessRightBranch(a_0')$.}
\f{1}{  (\emph{Main loop}.)}
\f{2}{    \textbf{while} $C_m\neq V(\cG)$ \textbf{do}}
\f{3}{      \textbf{if} $\omega(\lborder(C_m))>\omega(\rborder(C_m))$ \textbf{then}}
\f{4}{         L.1: Find the maximal left branch $\branch_1=\lbranch(C_m,t_1)$. Call $\AlgProcessLeftBranch(t_1)$.}
\f{4}{         L.2: Call $\ER(t_1)$.}
\f{3}{      \textbf{else}}
\f{4}{         R.1: Find the maximal right branch $\branch_1=\rbranch(C_m,t_1)$. Call $\AlgProcessRightBranch(t_1)$.}
\f{4}{         R.2: Call $\EL(t_1)$.}
\f{3}{      \textbf{end if}}
\f[20pt]{3}{    RL.3: Find the maximal right and left branches $\branch_2=\rbranch(C_m,t_2)$ and $\branch_3=\lbranch(C_m,t_3)$, with bottlenecks $t_2'$ and $t_3'$, respectively. Call $\AlgProcessLeftBranch(t_3')$ and $\AlgProcessRightBranch(t_2')$.}
\f{2}{    \textbf{end while.}}
\f{2}{    Let $Z_j=\bigcup_{v(H)\in B_j}V(H)$ for each $j=1,\ldots,m$. \textbf{Return} $\cC=(Z_1,\ldots,Z_m)$.}
\f{0}{\textbf{end procedure} $\AlgConnectedPathwidth$.}
\algrule\\

Note that Step~I.1 of $\AlgConnectedPathwidth$ consists of the instructions from the initialization stage of $\AlgSimpleConnectedPathwidth$.
Then, the algorithm finds in Step~I.2 the maximal right branch `emanating' from $v$, and includes into $C_1$ the part of the branch that `reaches' a bottleneck of the branch. This is achieved by the call to $\AlgProcessRightBranch$ with the input $a_0'$.

In the following, one \emph{iteration} of $\AlgConnectedPathwidth$, $\AlgProcessLeftBranch$ or $\AlgProcessRightBranch$ means one iteration of the `while' loop in the corresponding procedure. Thus, in the case of $\AlgConnectedPathwidth$, one iteration reduces to executing Steps~L.1, L.2, RL.3 or R.1, R.2, RL.3, while in the procedures $\AlgProcessLeftBranch$ and $\AlgProcessRightBranch$ one iteration results in executing Step~$\EL$ or Step~$\ER$, respectively. We use the symbols $\branch_i,t_i$, $i=1,2,3$, and $t_i'$, $i=2,3$, to refer to the variables used in $\AlgConnectedPathwidth$. In what follows we denote for brevity $\branch_2'=\rbranch(C_m,t_2')$ and $\branch_3'=\lbranch(C_m,t_3')$. Informally speaking, $\branch_2'$ and $\branch_3'$ are the branches $\branch_2$ and $\branch_3$, respectively, restricted to the vertices up to the corresponding cut $t_2'$ or $t_3'$. Note that the outcome in Step~RL.3 (in terms of the expansion obtained at the end of the iteration) is the same regardless of the order of making the calls to $\AlgProcessLeftBranch$ and $\AlgProcessRightBranch$. Due to the analysis in Section~\ref{sec:bound}, the approximation guarantee of $\AlgConnectedPathwidth$ remains the same for each order of making those calls in Step~RL.3.

The branches are used in the subsequent iterations of $\AlgConnectedPathwidth$ in the way presented in Figure~\ref{fig:cases}, where the Steps~L.1, L2 and RL.3 are shown (the execution of Steps~R.1 and R.2 is symmetric with respect to Steps~L.1 and L.2).
\begin{figure}[ht]
	\begin{center}
	\input{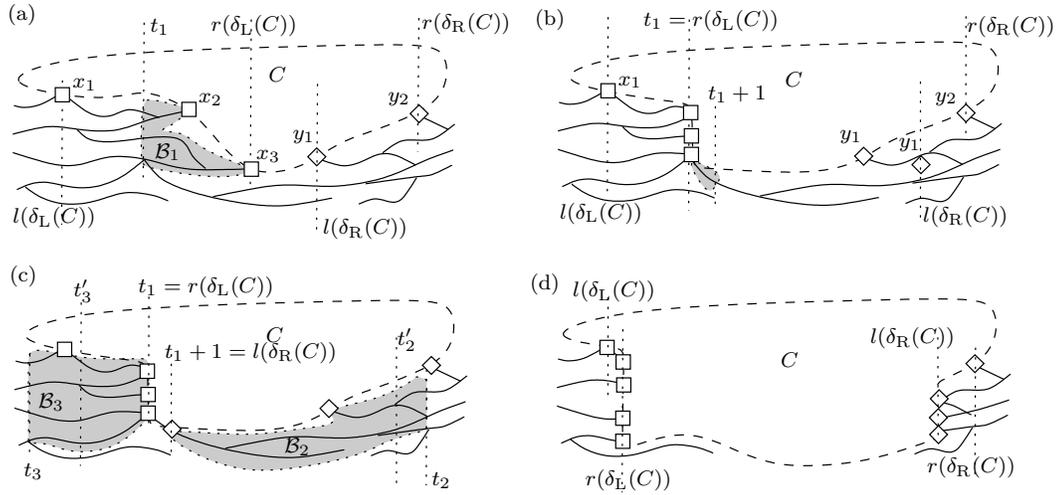_t}
	\caption{The execution of one iteration of $\AlgConnectedPathwidth$ (Steps~L.1, L.2, RL.3):
                 (a) an expansion $C$ with $\lborder(C)=\{x_1,x_2,x_2\}$ and $\rborder(C)=\{y_1,y_2\}$ and the branch $\branch_1$ from Step~L.1;
                 (b) $C$ at the end of Step~L.1, dark area marks the vertices to be included in Step~L.2;
                 (c) $C$ at the end of Step~L.2, together with $\branch_2$ and $\branch_3$ computed in Step~RL.3;
                 (d) $C$ at the end of the iteration}
	\label{fig:cases}
	\end{center}
\end{figure}
Figure~\ref{fig:cases}(a) gives $C$ together with $\lborder(C)$ and $\rborder(C)$. The dark area is the maximal left branch $\branch_1$ from Step~L.1. Note that if $t_1=r(\lborder(C))$, where $C$ is the expansion from the beginning of an iteration, then no new expansion is computed during Step~L.1, and the algorithm proceeds to Step~L.2. Such a situation occurs if $\branch_1$ contains only the vertices in $\lborder(C)$ or, equivalently, if a vertex in $V_{r(\lborder(C))}\cap\lborder(C)$ has a neighbor in $V_{r(\lborder(C))+1}\setminus C$. The result of the execution of Step~L.1 is shown in Figure~\ref{fig:cases}(b) together with dark area marking the nodes in $V_{t_1+1}$ that will be included into the expansion in Step~L.2 --- see Figure~\ref{fig:cases}(c) for the result of the execution of Step~L.2 and for the maximal branches $\branch_2$ and $\branch_3$ (also, the exemplary integers $t_2'$ and $t_3'$ are shown to indicate the corresponding `sub-branches' $\branch_2'$ and $\branch_3'$ that reach the minimum cuts of $\branch_2$ and $\branch_3$, respectively). If none of the vertices included into $C$ in Step~L.2 are external, then no new right border vertices in $V_{t_1+1}$ are introduced, and therefore the new expansion in Figure~\ref{fig:cases}(c) has the right border that is a subset of the right border of the expansion in Figure~\ref{fig:cases}(b). Finally, Figure~\ref{fig:cases}(d) gives the result of the execution of RL.3 (and thus the entire iteration).

In this section we prove the correctness of $\AlgConnectedPathwidth$, while Section~\ref{sec:bound} analyzes the width of $\cC$ returned by $\AlgConnectedPathwidth$.
Lemmas~\ref{lem:one_branch} and~\ref{lem:branches} given below demonstrate how the expansions change between the subsequent calls to $\AlgProcessLeftBranch$ and $\AlgProcessRightBranch$. They show that, informally speaking, if (during the execution of $\AlgConnectedPathwidth$) we take a proper branch $\lbranch(C_m,i)$ or $\rbranch(C_m,i)$ and call $\AlgProcessLeftBranch(i)$ or $\AlgProcessRightBranch(i)$, respectively, then the expansion obtained at the end of the execution of $\AlgProcessLeftBranch(i)$ or $\AlgProcessRightBranch(i)$, respectively, consists of the vertices of $C_m$ and the vertices of the corresponding branch.

First we introduce the concept of moving the borders and we state two preliminary lemmas.
We say that $C_j$ \emph{moves} the right border of $C_{j-1}$ if $l(\rborder(C_j))>l(\rborder(C_{j-1}))$. Similarly, $C_j$ \emph{moves} the left border of $C_{j-1}$ if $r(\lborder(C_j))<r(\lborder(C_{j-1}))$.
\begin{lemma} \label{lem:proper_moves}
If $\lbranch(C_{j'},t)$, $t\leq r(\lborder(C_{j'}))$, (respectively $\rbranch(C_{j'},t)$, $t\geq l(\rborder(C_{j'}))$) is proper, then each expansion $C_j$ computed by $\AlgProcessLeftBranch(t)$ ($\AlgProcessRightBranch(t)$) moves the left (right) border of $C_{j-1}$, where $C_{j'}$ is the expansion from the beginning of the execution of the corresponding procedure $\AlgProcessLeftBranch$ or $\AlgProcessRightBranch$.
\end{lemma}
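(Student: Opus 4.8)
The plan is to reduce the statement to a purely local condition on the top layer of the left border and then to maintain that condition by induction over the iterations of $\AlgProcessLeftBranch(t)$. Throughout I use that edges of $\cG$ join only consecutive layers $V_k,V_{k+1}$, and that by (\ref{eq:border_partition}) every border vertex in $V_1\cup\cdots\cup V_{r(\lborder(C))}$ is a left-border vertex, so $\border(C)\cap V_k=\lborder(C)\cap V_k$ for all $k\le r(\lborder(C))$. I treat the left-branch case; the right-branch case is symmetric. For the reduction, consider one iteration calling $\EL(i)$ with $i=r(\lborder(C_{j-1}))>t$. Since the set $A_j$ added by $\EL$ lies in $V_{i-1}$ and $\lborder(C_{j-1})\subseteq V_1\cup\cdots\cup V_i$, the update rule gives $\lborder(C_j)\subseteq V_1\cup\cdots\cup V_i$; hence $r(\lborder(C_j))<i$ (the border moves) iff no $x\in\lborder(C_{j-1})\cap V_i$ survives in $\border(C_j)$. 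Because $A_j$ absorbs exactly the $V_{i-1}$-neighbors of $\border(C_{j-1})\cap V_i$ that lie outside $C_{j-1}$, every such $x$ has all its $V_{i-1}$-neighbors in $C_j$; so the border moves precisely when each $x\in\lborder(C_{j-1})\cap V_i$ has all its $V_{i+1}$-neighbors already in $C_{j-1}$. Call this property $(\star)$. It also yields $A_j\neq\emptyset$ for free, since $x$ must have a neighbor outside $C_{j-1}$, which $(\star)$ forces into $V_{i-1}$.

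Next I establish $(\star)$ at the start of every iteration by maintaining an invariant. Writing $i=r(\lborder(C))$ and $i_0=r(\lborder(C_{j'}))$ for the initial expansion $C_{j'}$, the invariant reads: $C\setminus C_{j'}\subseteq V(\branch)\cap(V_i\cup\cdots\cup V_{i_0-1})$ and $V(\branch)\cap(V_i\cup\cdots\cup V_{i_0})\subseteq C$, where $\branch=\lbranch(C_{j'},t)$; that is, only branch vertices have been added and all branch vertices in layers $\ge i$ already lie in $C$. Granting the invariant, $(\star)$ is immediate: a vertex $x\in\lborder(C)\cap V_i\subseteq V(\branch)$ sits in a layer $i>t$, so properness of $\lbranch(C_{j'},t)$ says $x$ is not external, i.e. $N_{\cG}(x)\subseteq C_{j'}\cup V(\branch)$; its $V_{i+1}$-neighbors then lie in $C_{j'}\cup(V(\branch)\cap V_{i+1})\subseteq C$ by the second clause of the invariant. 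The base case $C=C_{j'}$ (top layer $i_0$) is trivial because $V(\branch)\cap V_{i_0}=\lborder(C_{j'})\cap V_{i_0}\subseteq C_{j'}$.

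The inductive step is where the real work lies. After $\EL(i)$ I first identify $A_j=V(\branch)\cap V_{i-1}$: any $v\in A_j$ is adjacent to some $x\in\lborder(C)\cap V_i\subseteq V(\branch)$, so prepending $v$ to a progressive path from $x$ up to $\lborder(C_{j'})$ exhibits $v$ as a branch vertex in $V_{i-1}$; conversely any $w\in V(\branch)\cap V_{i-1}$ is joined to the branch vertex one layer above it, which by the invariant is a top-layer left-border vertex of $C$, forcing $w\in A_j$. Hence $C_j=C_{j'}\cup(V(\branch)\cap(V_{i-1}\cup\cdots\cup V_{i_0-1}))$. To re-establish the invariant at the new top layer $i'=r(\lborder(C_j))<i$ I must rule out branch vertices strictly between $i'$ and $i-1$, and here I use that branch layers are contiguous from the top: if $V(\branch)\cap V_{i-2}\neq\emptyset$, such a vertex is adjacent to a vertex of $A_j\subseteq C_j$ in $V_{i-1}$ while lying outside $C_j$, forcing that $A_j$-vertex onto $\border(C_j)$ and hence $i'=i-1$; so whenever $i'<i-1$ the branch is empty below $V_{i-1}$ and $V(\branch)\cap(V_{i'}\cup\cdots\cup V_{i_0})\subseteq C_j$ holds in the intermediate layers vacuously. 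This closes the induction (if instead $i'\le t$ the loop stops and nothing more need be shown).

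Together these steps give $(\star)$ at every iteration, so each expansion computed by $\AlgProcessLeftBranch(t)$ moves the left border; the case of $\AlgProcessRightBranch(t)$ is the mirror image. I expect the crux of the argument to be the inductive step: pinning down $A_j$ through the progressive-path characterisation of branch vertices, and handling the case in which the border descends by more than one layer, which is exactly where properness and the top-contiguity of the branch layers are both needed.
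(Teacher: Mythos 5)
Your reduction to the local condition $(\star)$, and your use of properness to verify it, is exactly the mechanism behind the paper's (much terser) proof: the paper simply asserts that, by properness, the vertices of the extreme border layer have no neighbours behind them outside the current expansion, which is your $(\star)$. So the approach is the same; the difference is that you try to make explicit the absorption invariant that the paper leaves implicit, and it is precisely in re-establishing that invariant that your argument has a concrete flaw. Two claims in the inductive step are false as stated. First, $A_j=V(\branch)\cap V_{i-1}$ fails whenever $\lborder(C_{j'})$ itself meets $V_{i-1}$: those vertices belong to $V(\branch)$ by definition but lie in $C_{j'}\subseteq C_{j-1}$, so they are excluded from $A_j$; the correct identity is $A_j=(V(\branch)\cap V_{i-1})\setminus C_{j-1}$. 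Second, and more seriously, ``whenever $i'<i-1$ the branch is empty below $V_{i-1}$'' cannot be right: $V(\branch)$ always contains all of $\lborder(C_{j'})$, which in general extends down to $l(\lborder(C_{j'}))$, and it may contain vertices of $V_k\setminus C_{j'}$ whose progressive paths terminate at low-lying vertices of $\lborder(C_{j'})$ without ever meeting $A_j$ in $V_{i-1}$ (the path can enter $C_{j'}$ at a vertex of $\lborder(C_{j'})\cap V_{i-1}$, or at an even lower layer). Hence your adjacency-to-$A_j$ argument does not cover all branch vertices in $V_{i-2}$, and it says nothing about layers $i-3$ and below, so the second clause of the invariant is not re-established when the border retreats by more than one layer.

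The invariant is nevertheless true, and the repair is a path-following argument rather than a contiguity one. Take any branch vertex $w\in V_p$ of the second kind with $w\notin C_j$, and walk its progressive path upward to the first vertex lying in $C_j$; this vertex exists because the path ends in $\lborder(C_{j'})\subseteq C_j$. That first absorbed vertex is adjacent to a path vertex outside $C_j$, so it belongs to $\border(C_j)$, and since it is either a surviving vertex of $\lborder(C_{j'})$ or a surviving vertex of some earlier set $A_{j''}$ added by $\EL$, the update rules place it in $\lborder(C_j)$; its layer exceeds $p$, whence $r(\lborder(C_j))>p$. In other words, every not-yet-absorbed branch vertex sits strictly below the new extremity $i'$, which is exactly what the second clause of your invariant demands, with no claim about which layers the branch occupies. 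With that substitution (and the corrected description of $A_j$) your proof closes and coincides in substance with the paper's.
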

\begin{proof}
Let $C_j$ be an expansion constructed in an iteration of $\AlgProcessRightBranch$ and the other case is symmetric. In each iteration of $\AlgProcessRightBranch$ the input integer $i$ passed to $\ER$ satisfies $i=l(\rborder(C_{j-1}))$. By the formulation of $\AlgProcessRightBranch$, $i<t$. Thus, since $\rbranch(C_{j'},t)$ is proper, the vertices in $V_{i}\cap\rborder(C_{j-1})$ have no neighbors in $V_{i-1}\setminus C_{j-1}$. Hence, the instructions in Step~$\ER$ imply that $V_{i}\cap\rborder(C_j)=\emptyset$. Therefore, $l(\rborder(C_j))>i$, i.e. $C_j$ moves the right border of $C_{j-1}$.
\end{proof}
The above, and the fact that the branches $\branch_i$, $i=1,2,3$, computed by $\AlgConnectedPathwidth$ are proper, give the following.
\begin{lemma} \label{lem:moving_borders}
If $C_j$, $j\in\{2,\ldots,m\}$, is an expansion calculated in Step~$\EL$ (Step~$\ER$) invoked by $\AlgProcessLeftBranch$ ($\AlgProcessRightBranch$, respectively), then $C_j$ moves the left (right, resp.) border of $C_{j-1}$.
\qed
\end{lemma}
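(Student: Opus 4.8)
The plan is to obtain Lemma~\ref{lem:moving_borders} as a direct corollary of Lemma~\ref{lem:proper_moves}. The latter already yields the ``moving'' conclusion, but only under the hypothesis that the branch supplied to $\AlgProcessLeftBranch$ or $\AlgProcessRightBranch$ is \emph{proper} with respect to the expansion present at the start of that subroutine call. Hence it suffices to walk through every site at which $\AlgConnectedPathwidth$ invokes one of these two subroutines and confirm that the relevant branch is proper. There are exactly five: the call $\AlgProcessRightBranch(a_0')$ in I.2, the calls $\AlgProcessLeftBranch(t_1)$ in L.1 and $\AlgProcessRightBranch(t_1)$ in R.1, and the two calls $\AlgProcessLeftBranch(t_3')$ and $\AlgProcessRightBranch(t_2')$ in RL.3.

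The cases L.1 and R.1 are immediate: the argument passed is precisely the extremity $t_1$ of the maximal branch $\branch_1$ found in the same step, and a maximal branch is proper by definition (if it has no external vertices at all, then in particular it has none in the layers required by properness). Lemma~\ref{lem:proper_moves} therefore applies to every expansion computed by those two calls without further work.

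The real content concerns I.2 and RL.3, where the subroutine receives a \emph{bottleneck} $a_0'$, $t_2'$, or $t_3'$ of a maximal branch rather than the branch's own extremity. Here I would first establish the auxiliary fact that restricting a proper branch to one of its cuts again gives a proper branch: if $\lbranch(C,t)$ is proper and $t\le t'$ is a cut, then $\lbranch(C,t')$ is proper, and symmetrically $\rbranch(C,t')$ is proper whenever $\rbranch(C,t)$ is proper and $t'\le t$ is a cut. The engine of this fact is Observation~\ref{obs:o1}, which guarantees that $\lbranch(C,t')$ and $\lbranch(C,t)$ contain the same vertices in every layer $V_j$ with $j\ge t'$. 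For a vertex $v\in V_j$ of the cut branch with $j>t'$, its neighbours lie in $V_{j-1}\cup V_j\cup V_{j+1}$, all of index $\ge t'$; since $v$ is not external in $\lbranch(C,t)$ (properness, because $j>t'\ge t$), each of its neighbours outside $C$ is a branch vertex in a layer $\ge t'$, hence belongs to $V(\lbranch(C,t'))$ as well. So $v$ is not external in $\lbranch(C,t')$, proving properness of the cut branch. Applying this to $\branch_1$ (for I.2), $\branch_2$, and $\branch_3$ shows that the branches handed to the subroutines at these sites are proper, and Lemma~\ref{lem:proper_moves} again applies.

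The one delicate point, which I expect to be the main obstacle, is that RL.3 issues \emph{two} subroutine calls, so the one executed second begins from an expansion already enlarged by the first. I would argue that the first call cannot spoil properness of the second branch: by the border partition~(\ref{eq:border_partition}) we have $r(\lborder(C_m))<l(\rborder(C_m))$, so the left branch $\branch_3$ occupies layers $\le r(\lborder(C_m))$ while the right branch $\branch_2$ occupies layers $\ge l(\rborder(C_m))$, and these ranges are disjoint. Consequently processing one branch only adds vertices far from the other branch's domain, leaving both that branch and the border on which it is built unchanged; hence the second call still starts from an expansion for which its branch is proper. Once all five calls are verified, Lemma~\ref{lem:proper_moves} delivers the moving property for every $C_j$ produced inside a subroutine, which is precisely the statement.
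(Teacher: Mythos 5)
Your proposal is correct and follows the paper's own route: the paper obtains this lemma directly from Lemma~\ref{lem:proper_moves} together with the bare assertion that the branches handed to $\AlgProcessLeftBranch$ and $\AlgProcessRightBranch$ are proper. You supply the details the paper leaves implicit --- that maximality implies properness, that restricting a proper branch to one of its cuts preserves properness via Observation~\ref{obs:o1}, and that the first call in Step~RL.3 does not disturb the branch used by the second --- and all of these check out.
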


\begin{lemma} \label{lem:one_branch}
Let $C_{j'+1}$ and $C_{j}$ be, respectively, the first and the last expansions computed by the procedure $\AlgProcessLeftBranch(t)$ or $\AlgProcessRightBranch(t)$. If $\lbranch(C_{j'},t)$ is proper and $t\leq r(\lborder(C_{j'}))$, then the execution of $\AlgProcessLeftBranch(t)$ results in $C_{j}=C_{j'}\cup V(\lbranch(C_{j'},t))$. If $\rbranch(C_{j'},t)$ is proper and $t\geq l(\rborder(C_{j'}))$, then the execution of $\AlgProcessRightBranch(t)$ results in $C_{j}=C_{j'}\cup V(\rbranch(C_{j'},t))$.
\end{lemma}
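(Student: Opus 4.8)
The plan is to prove the statement for $\AlgProcessLeftBranch(t)$; the case of $\AlgProcessRightBranch(t)$ is entirely symmetric, with the roles of $\EL$ and $\ER$, of the left and right borders, and of the two layer directions interchanged. Throughout I will use the fact that in $\cG$ every edge joins two consecutive layers $V_s,V_{s+1}$, so a vertex of $V_s$ has neighbours only in $V_{s-1}\cup V_{s+1}$. Write $r_0=r(\lborder(C_{j'}))$. Since $\AlgProcessLeftBranch(t)$ repeatedly calls $\EL(r(\lborder(C_m)))$ while $r(\lborder(C_m))>t$, and each such call moves the left border by Lemma~\ref{lem:moving_borders} (i.e. strictly decreases $r(\lborder(\cdot))$ and in particular adds a nonempty $A_m$), the procedure terminates with $r(\lborder(C_j))\le t$. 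I will establish $C_j=C_{j'}\cup V(\lbranch(C_{j'},t))$ by the two inclusions separately; recall that by Observation~\ref{obs:o1} the branch content $V_k\cap V(\lbranch(C_{j'},t))$ of each already-reached layer does not depend on how far the branch has been grown, which lets me reason layer by layer.

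For the inclusion $\subseteq$ I would argue, by induction on the successive $\EL$ calls, that every vertex of $C_m\setminus C_{j'}$ and every vertex of $\lborder(C_m)$ lies in $V(\lbranch(C_{j'},t))$. The base case is immediate since $\lborder(C_{j'})\subseteq V(\lbranch(C_{j'},t))$ by definition. For the step, a call $\EL(i)$ with $i=r(\lborder(C_{m-1}))>t$ adds only $A_m\subseteq V_{i-1}$, consisting of vertices $v\in V_{i-1}\setminus C_{m-1}$ adjacent to some $x\in V_i\cap\lborder(C_{m-1})$. By the inductive hypothesis $x$ is a branch vertex, so it is joined to some $y\in\lborder(C_{j'})$ by a progressive path lying in layers $\ge i$; prepending the edge $\{v,x\}$ yields a progressive path from $v$ to $y$, and since $i-1\ge t$ and $v\notin C_{j'}$ (because $A_m\cap C_{m-1}=\emptyset$) we conclude $v\in V(\lbranch(C_{j'},t))$. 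As $\lborder(C_m)\subseteq\lborder(C_{m-1})\cup A_m$, the new left border is contained in the branch as well, closing the induction.

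For the inclusion $\supseteq$ I would show that every $v\in V(\lbranch(C_{j'},t))\setminus C_{j'}$ is eventually added. First I would record the normalisation that, using the border partition~(\ref{eq:border_partition}), such a $v$ admits a progressive path to $\lborder(C_{j'})$ whose internal vertices avoid $C_{j'}$: if the first $C_{j'}$-vertex met on a witnessing path were some $w_s\in V_s$, then $w_s$ would have a neighbour outside $C_{j'}$ and hence lie in $\border(C_{j'})$, and the partition forces $w_s\in\lborder(C_{j'})$, giving a shorter witness. I would then run a downward induction on the layer index $k$ of $v$: assuming all branch vertices in layers $>k$ are absorbed exactly as their layer is processed, the successor $w_{k+1}\in V_{k+1}$ of $v$ on its path is a branch vertex that sits in $\lborder(C_m)$ just before layer $k+1$ is used as an $\EL$-input, and then $\EL(k+1)$ adds $v$, since $v\in V_k\setminus C$ is a neighbour of $w_{k+1}$. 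The crux, and the step I expect to be the main obstacle, is to guarantee that the procedure really calls $\EL$ with input equal to every layer that still contains branch vertices, i.e. that $r(\lborder(\cdot))$ never jumps past such a layer. Here properness of $\lbranch(C_{j'},t)$ is essential: it is exactly what Lemma~\ref{lem:proper_moves} turns into the assertion that the vertices of $V_i\cap\lborder(C_{m-1})$ have no neighbour in $V_{i+1}\setminus C_{m-1}$, so once $\EL(i)$ absorbs all their $V_{i-1}$-neighbours these old border vertices leave the border and the frontier retreats to $V_{i-1}$. Combining this with the edge structure of $\cG$, I would show that if the right extremity drops from $i$ to some $s<i-1$ in one step, then none of the just-added vertices of $V_{i-1}$ has a neighbour outside $C$ in $V_{i-2}$, which forces layers $V_{s},\dots,V_{i-2}$ to contain no branch vertices; hence no branch vertex is ever skipped and the process stops precisely when the layer-$t$ frontier is reached. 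Assembling the two inclusions yields $C_j=C_{j'}\cup V(\lbranch(C_{j'},t))$.
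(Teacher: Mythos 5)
Your argument is correct and follows essentially the same route as the paper's proof: both establish the two inclusions $C_j\subseteq C_{j'}\cup V(\lbranch(C_{j'},t))$ and $C_{j'}\cup V(\lbranch(C_{j'},t))\subseteq C_j$, the first by tracking that every vertex absorbed by an $\EL$ call acquires a progressive-path witness to $\lborder(C_{j'})$, and the second by showing via properness and Lemma~\ref{lem:proper_moves} that every progressive path from a branch vertex to the left border is swallowed layer by layer, with the frontier never skipping a layer that still contains branch vertices. The paper states these two inclusions in four sentences, deferring the details to ``the formulation of $\AlgProcessLeftBranch$'' and Lemma~\ref{lem:proper_moves}; your two inductions supply exactly the omitted bookkeeping.
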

\begin{proof}
We consider the call to $\AlgProcessLeftBranch(t)$, the proof being analogous in the other case.
Suppose that $u\in V_i$ and $v\in V_{i'}\cap\lborder(C_{j'})$, where $t\leq i\leq i'$, are connected by a progressive path $P$ in $\cG$. By the formulation of $\AlgProcessLeftBranch$ and by Lemma~\ref{lem:proper_moves}, $V(P)\subseteq C_j$. Thus, by the definition of a branch, $C_{j'}\cup V(\lbranch(C_{j'},t))\subseteq C_j$. It follows directly from the formulation of $\AlgProcessLeftBranch$ that $C_j\subseteq C_{j'}\cup V(\lbranch(C_{j'},t))$.
\end{proof}

\begin{lemma} \label{lem:branches}
Let $C_{s_0}$ be an expansion from the beginning of an iteration of $\AlgConnectedPathwidth$, and let $C_{s_i}$, $i=1,2,3$, be the expansions obtained at the end of Steps L.1, L.2 and RL.3 or R.1, R.2 and RL.3 in this iteration, respectively. Then, $C_{s_1}=C_{s_0}\cup V(\branch_1)$, $C_{s_2}=C_{s_1}\cup A_{s_2}$ and $C_{s_3}=C_{s_2}\cup V(\branch_2')\cup V(\branch_3')$. Moreover, $C_{s_3}\neq C_{s_0}$.
\end{lemma}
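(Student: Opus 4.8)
The plan is to track how the expansion evolves across a single iteration of $\AlgConnectedPathwidth$, treating only the left case (Steps~L.1, L.2, RL.3), since the right case is symmetric, and to establish the three equalities in the stated order before handling the inequality $C_{s_3}\neq C_{s_0}$ last.

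For the first equality $C_{s_1}=C_{s_0}\cup V(\branch_1)$, Step~L.1 selects the maximal (hence proper) left branch $\branch_1=\lbranch(C_{s_0},t_1)$ with $t_1\leq r(\lborder(C_{s_0}))$ and calls $\AlgProcessLeftBranch(t_1)$, so Lemma~\ref{lem:one_branch} applies verbatim. I would flag the degenerate case $t_1=r(\lborder(C_{s_0}))$, in which $\AlgProcessLeftBranch$ performs no iteration and $\branch_1$ reduces to the subgraph on $\lborder(C_{s_0})\subseteq C_{s_0}$, so the equality still reads $C_{s_1}=C_{s_0}$. The second equality $C_{s_2}=C_{s_1}\cup A_{s_2}$ is immediate: Step~L.2 is the single call $\ER(t_1)$, and the definition of Step~$\ER$ literally sets $C_m:=C_{m-1}\cup A_m$.

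The third equality $C_{s_3}=C_{s_2}\cup V(\branch_2')\cup V(\branch_3')$ is the substantive one. Step~RL.3 makes the two calls $\AlgProcessLeftBranch(t_3')$ and $\AlgProcessRightBranch(t_2')$, and I would apply Lemma~\ref{lem:one_branch} to each in turn. Two points need care. First, the restricted branches $\branch_3'=\lbranch(C_{s_2},t_3')$ and $\branch_2'=\rbranch(C_{s_2},t_2')$ must be proper for Lemma~\ref{lem:one_branch} to apply; since $t_3'$ is a cut of the maximal branch $\branch_3$ we have $t_3\leq t_3'\leq r(\lborder(C_{s_2}))$, and I would deduce properness of $\branch_3'$ from that of $\branch_3$ using Observation~\ref{obs:o1}: the two branches have the same vertices in every layer $V_j$ with $j\geq t_3'$, and because a vertex of $\cG$ has neighbours only in adjacent layers, for $v\in V_j\cap V(\branch_3')$ with $j>t_3'$ all neighbours of $v$ lie either in $C_{s_2}$ or in the layers $V_{j-1},V_{j+1}$ on which $\branch_3$ and $\branch_3'$ coincide (the top layer $j=r(\lborder(C_{s_2}))$ being handled separately, as its branch vertices already lie in $C_{s_2}$); hence $v$ non-external in $\branch_3$ transfers to $\branch_3'$, so $\branch_3'$ is proper, and symmetrically for $\branch_2'$. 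Second, the two calls must not interfere. Here I would invoke the border-partition property~(\ref{eq:border_partition}), which gives $r(\lborder(C_{s_2}))<l(\rborder(C_{s_2}))$: the vertices absorbed by $\AlgProcessLeftBranch(t_3')$ lie in layers $V_{t_3'},\ldots,V_{r(\lborder(C_{s_2}))-1}$ and reach neighbours at most in layer $r(\lborder(C_{s_2}))<l(\rborder(C_{s_2}))$, so absorbing $V(\branch_3')$ leaves $\rborder$ and the entire right branch intact, whence $\rbranch(C_{s_2}\cup V(\branch_3'),t_2')=\branch_2'$ is still proper and Lemma~\ref{lem:one_branch} applies to the second call on the updated expansion, yielding the claimed union regardless of the order of the calls.

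Finally, for $C_{s_3}\neq C_{s_0}$ it suffices to prove $C_{s_2}\neq C_{s_0}$, as $C_{s_2}\subseteq C_{s_3}$. Being in the left case, $\omega(\lborder(C_{s_0}))>\omega(\rborder(C_{s_0}))\geq 0$ forces $\lborder(C_{s_0})\neq\emptyset$; set $i_0=r(\lborder(C_{s_0}))$ and pick a border vertex $x\in V_{i_0}\cap\lborder(C_{s_0})$, which has a neighbour $u\notin C_{s_0}$ in $V_{i_0-1}$ or $V_{i_0+1}$. I would run a short case analysis, using the fact that $\branch_1$ fails to grow precisely when $t_1=i_0$: if $u\in V_{i_0-1}$, then either $t_1\leq i_0-1$ and Observation~\ref{obs:o1} places $u\in V(\branch_1)$ so that Step~L.1 absorbs it, or $t_1=i_0$ and maximality of $\branch_1$ makes $\lbranch(C_{s_0},i_0-1)$ non-proper, which produces a border vertex in $V_{i_0}$ with a neighbour in $V_{i_0+1}\setminus C_{s_0}$ that Step~L.2$=\ER(i_0)$ absorbs; if $u\in V_{i_0+1}$, then unless Step~L.1 already enlarged the expansion we again have $t_1=i_0$ and $\ER(i_0)$ absorbs $u$. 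In every branch of the analysis a vertex outside $C_{s_0}$ enters $C_{s_2}$. I expect this case analysis, together with the properness transfer for the restricted branches, to be the main obstacle, since both hinge on combining maximality carefully with Observation~\ref{obs:o1}; the equalities for $C_{s_1}$ and $C_{s_2}$ are essentially bookkeeping on top of Lemma~\ref{lem:one_branch} and the definition of Step~$\ER$.
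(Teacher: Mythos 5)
Your proposal is correct and follows essentially the same route as the paper: the equalities are reduced to Lemma~\ref{lem:one_branch} together with the definition of Step~$\ER$, and $C_{s_3}\neq C_{s_0}$ is obtained from the same dichotomy (either $\branch_1$ properly grows in Step~L.1, or maximality of $\branch_1$ at $t_1=r(\lborder(C_{s_0}))$ forces a vertex of $V_{t_1+1}\setminus C_{s_0}$ to be absorbed by $\ER(t_1)$ in Step~L.2). The paper simply asserts the equalities "follow directly," so your extra care about properness of the restricted branches $\branch_2',\branch_3'$ and non-interference of the two calls in Step~RL.3 is a legitimate elaboration of the same argument rather than a different one.
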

\begin{proof}
The equalities follow directly from Lemma~\ref{lem:one_branch} and from the formulation of the algorithm $\AlgConnectedPathwidth$.

Since the analysis in the cases $\omega(\lborder(C_{s_0}))>\omega(\rborder(C_{s_0}))$ and $\omega(\lborder(C_{s_0}))\leq\omega(\rborder(C_{s_0}))$ is similar, we assume that the former occurs. Thus, the Steps~L.1, L.2, RL.3 of $\AlgConnectedPathwidth$ are executed. If $t_1<r(\lborder(C_{s_0}))$ in Step~L.1, then by construction $V(\branch_1)\setminus C_{s_0}\neq\emptyset$ and Lemma~\ref{lem:one_branch} implies that $C_{s_3}\neq C_{s_0}$. Otherwise, that is, if $t_1=r(\lborder(C_{s_0}))$ in Step~L.1, then there exists an external vertex $v\in V_{t_1}\cap C_{s_0}$ adjacent to $u\in V_{t_1+1}\setminus C_{s_0}$. By the formulation of Step~L.2 of $\AlgConnectedPathwidth$, $u\in A_{s_2}$, and consequently $u\in C_{s_2}$, which gives $C_{s_3}\neq C_{s_0}$.
\end{proof}

Now observe that the procedure $\AlgConnectedPathwidth$ is a special case of $\AlgSimpleConnectedPathwidth$ in the sense that if we execute $\AlgConnectedPathwidth$ for the given $G$ and $\cP$ and we take the history of the executions of Steps~$\EL$ and~$\ER$, then, due to the fact that the choices between the steps S.1-S.4 in $\AlgSimpleConnectedPathwidth$ are arbitrary, it is possible to obtain exactly the same chain of executions of Steps~$\EL$ and~$\ER$ in $\AlgSimpleConnectedPathwidth$. Moreover, Lemma~\ref{lem:branches} implies that $\AlgConnectedPathwidth$ stops and returns $\cC$. Therefore, by~Lemma~\ref{lem:scp_correctness}, we obtain the following.
\begin{lemma} \label{lem:correctness}
The execution of the procedure $\AlgConnectedPathwidth$ stops, and, for the given input $G$ and $\cP$, $\AlgConnectedPathwidth$ returns a connected path decomposition of $G$.
\qed
\end{lemma}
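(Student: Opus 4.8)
The plan is to reduce the correctness of $\AlgConnectedPathwidth$ to that of $\AlgSimpleConnectedPathwidth$, which was already established in Lemma~\ref{lem:scp_correctness}. The key observation is that $\AlgConnectedPathwidth$ never manipulates the expansions, borders, or the derived graph directly: every change to $C_m$ is effected through a call to Step~$\EL$ or Step~$\ER$ issued (possibly via $\AlgProcessLeftBranch$ or $\AlgProcessRightBranch$) during Steps~I.2, L.1, L.2, R.1, R.2, or RL.3. I would therefore argue that the flat sequence of $\EL$/$\ER$ executions produced by a run of $\AlgConnectedPathwidth$ is a legal execution trace of $\AlgSimpleConnectedPathwidth$, and then let Lemma~\ref{lem:scp_correctness} do the rest.

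To make this precise, first I would check that every $\EL$/$\ER$ call made by $\AlgConnectedPathwidth$ takes an input $i\in\{r(\lborder(C_m)),l(\rborder(C_m))\}$, so that it coincides with one of the admissible Steps~S1--S4 of $\AlgSimpleConnectedPathwidth$. This is a step-by-step inspection: $\AlgProcessLeftBranch$ only issues $\EL(r(\lborder(C_m)))$ (matching S1) and $\AlgProcessRightBranch$ only issues $\ER(l(\rborder(C_m)))$ (matching S2), while the direct calls in Steps~L.2 and~R.2, whose argument $t_1$ equals $r(\lborder(C_m))$ or $l(\rborder(C_m))$ at the moment of invocation, match S3 and S4 respectively; the call to $\AlgProcessRightBranch$ inside Step~I.2 is of the same admissible form. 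Since $\AlgSimpleConnectedPathwidth$ may choose freely among those of S1--S4 that enlarge the current expansion, any such trace is reproducible by $\AlgSimpleConnectedPathwidth$.

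Second, I would establish termination. For the outer \emph{while} loop, Lemma~\ref{lem:branches} gives $C_{s_3}\neq C_{s_0}$, i.e. each iteration strictly enlarges the expansion; as $C_m\subseteq V(\cG)$ and $\cG$ is finite, the loop halts with $C_m=V(\cG)$. For the inner loops of $\AlgProcessLeftBranch$ and $\AlgProcessRightBranch$, I would invoke Lemma~\ref{lem:moving_borders}, which guarantees that every expansion computed there moves the corresponding border; since the border extremities $r(\lborder(C_m))$ and $l(\rborder(C_m))$ range over the finite set $\{0,\ldots,d+1\}$ and move strictly toward the loop's threshold $t$, only finitely many inner iterations occur. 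Combining the two, the run of $\AlgConnectedPathwidth$ is a finite, admissible $\EL$/$\ER$ trace that ends with $C_m=V(\cG)$, so Lemma~\ref{lem:scp_correctness} yields that the returned $\cC=(Z_1,\ldots,Z_m)$ is a connected path decomposition of $G$.

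The main obstacle I anticipate is the bookkeeping required to confirm the admissibility condition precisely: one must verify that the argument $t_1$ supplied to the bare $\EL$/$\ER$ calls of Steps~L.2 and~R.2 really equals the current border extremity after the preceding call to $\AlgProcessLeftBranch$/$\AlgProcessRightBranch$, and that no call in $\AlgConnectedPathwidth$ leaves the expansion unchanged in a way that $\AlgSimpleConnectedPathwidth$ could not simply skip. Both hinge on the properness of the branches $\branch_1,\branch_2,\branch_3$ together with Lemmas~\ref{lem:proper_moves} and~\ref{lem:one_branch}, which pin down exactly which vertices each subroutine adds; once those are in hand, the reduction to Lemma~\ref{lem:scp_correctness} is immediate.
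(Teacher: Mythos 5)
Your proposal is correct and follows essentially the same route as the paper: the paper also proves this lemma by observing that any run of $\AlgConnectedPathwidth$ produces a chain of $\EL$/$\ER$ executions realizable as a run of $\AlgSimpleConnectedPathwidth$ (whose choices among S1--S4 are arbitrary), invoking Lemma~\ref{lem:branches} for termination, and then appealing to Lemma~\ref{lem:scp_correctness}. Your additional bookkeeping about admissibility of the inputs to $\EL$/$\ER$ and the inner-loop termination via Lemma~\ref{lem:moving_borders} only makes explicit what the paper leaves implicit.
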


Figure~\ref{fig:execution} gives an example of the execution of $\AlgConnectedPathwidth$. In particular, Figure~\ref{fig:execution}(a) presents a graph $\cG$ and $C_3$ (this is the expansion obtained at the end of initialization of $\AlgConnectedPathwidth$, where the vertex $v$ from Step~I.1 is the one with the weight $2$ in $V_1$). Figures~\ref{fig:execution}(b)-(d) depict the state of the algorithm at the end of the first three iterations. (The fourth iteration executes Steps L.1, L.2 and~RL.3, which ends the computation.)
\begin{figure}[ht]
	\begin{center}
	\includegraphics[scale=1.3]{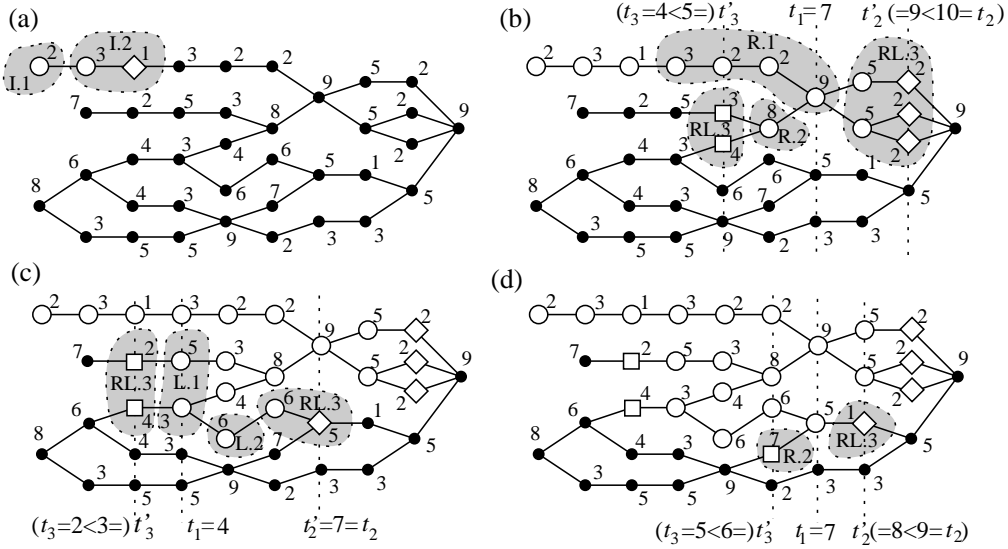}
	\caption{a graph $\cG$ (the integers are vertex weights) with white vertices in $C_m$ representing the state of $\AlgConnectedPathwidth$ after:
		(a) the initialization;
                (b) first iteration (Steps~R.1, R.2, RL.3);
                (c) second iteration (Steps L.1, L.2, RL.3);
                (c) third iteration (Steps R.1, R.2, RL.3);
                (in all cases the shaded area covers the vertices added to the current expansion during the particular step)}
	\label{fig:execution}
	\end{center}
\end{figure}

\section{The approximation guarantee of the algorithm}
\label{sec:bound}

In this section we analyze the width of the path decomposition $\cC$ calculated by $\AlgConnectedPathwidth$ for the given $G$ and $\cP$. First we introduce the concept of nested expansion, which, informally speaking, is as follows. The first condition for $C$ to be nested states that the weight of $V_i\cap C$ for any $i$ `between' the right extremity of the left border and the left extremity of the right border (by Lemma~\ref{lem:borders_separated} the former is less than the latter in each expansion computed by $\AlgConnectedPathwidth$) is greater than or equal to the weight of the left or the right border of $C$. The remaining conditions refer to the situation `inside' the borders and are symmetric for the left and right borders. Condition (ii) for the left border requires that the weight of $V_i\cap C$, where $i\leq r(\lborder(C))$, is not less than the weight of the left border restricted to the vertices in $V_1\cup\cdots\cup V_i$. Finally, condition (iii) gives a `local' minimality. Suppose that we take any left branch $\lbranch(C,i)$ (where $i$ by the definition is $\leq r(\lborder(C))$) and we add the vertices of this branch to $C$ in the way it is done in procedure $\AlgProcessLeftBranch$, then we `arrive' at some cut of this branch. Then, (iii) for $C$ guarantees that the weight of the left border of the new expansion, i.e. $\omega(\lborder(C\cup V(\lbranch(C,i))))$, is greater than or equal to the weight of the left border of $C$.

Formally, we say that an expansion $C$ is \emph{nested} if it satisfies the following conditions:
\begin{itemize}
\item[(i)]  $\omega(V_i\cap C)\geq\min\{\omega(\lborder(C)),\omega(\rborder(C))\}$ for each $i=r(\lborder(C)),\ldots,l(\rborder(C))$,
\item[(ii)] $\omega(V_i\cap C)\geq\sum_{p\leq i}\omega(V_p\cap\lborder(C))$ for each $i\leq r(\lborder(C))$, and $\omega(V_i\cap C)\geq\sum_{p\geq i}\omega(V_p\cap\rborder(C))$ for each $i\geq l(\rborder(C))$,
\item[(iii)] $r(\lborder(C))$ is a bottleneck of each branch $\lbranch(C,i)$, where $i\leq r(\lborder(C))$, and $l(\rborder(C))$ is a bottleneck of each branch  $\rbranch(C,i)$, where $i\geq l(\rborder(C))$.
\end{itemize}

Figure~\ref{fig:nested} presents a subgraph of $\cG$ induced by the vertices that belong to an expansion $C$, and with distinguished left and right borders, $\lborder(C)=\{x_1,\ldots,x_4\}$, $\rborder(C)=\{y_1,\ldots,y_4\}$. In this example $r(\lborder(C))=i+5$ and $l(\rborder(C))=i+8$. If this expansion is nested, then it holds in particular: condition (ii) implies $\omega(V_{i'}\cap C)\geq\omega(\{x_1,x_2,x_3\})$ for each $i'=i+1,\ldots,i+4$, $\omega(V_{i+5}\cap C)\geq\omega(\{x_1,x_2,x_3,x_4\})$; condition (i) implies $\omega(V_{i+6}\cap C)\geq\min\{\omega(\lborder(C)),\omega(\rborder(C))\}=\min\{\omega(\{x_1,\ldots,x_4\}),\omega(\{y_1,\ldots,y_4\})\}$.
\begin{figure}[ht]
	\begin{center}
	\input{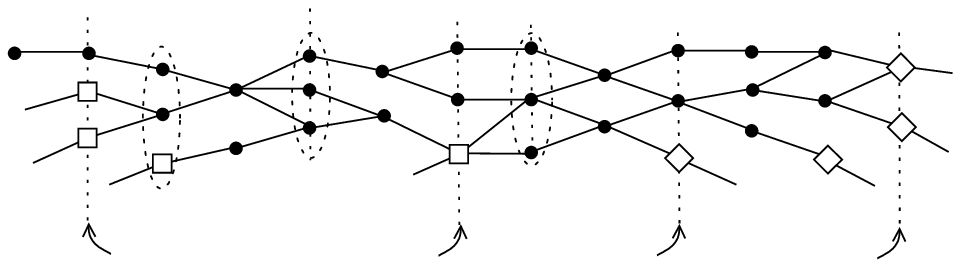_t}
	\caption{An example of a nested expansion}
	\label{fig:nested}
	\end{center}
\end{figure}

Not all expansions computed by $\AlgConnectedPathwidth$ are nested, but we prove that all of them satisfy (ii) (see Lemmas~\ref{lem:moving_inside_borders}-\ref{lem:nested}). In particular we argue that if an expansion from the beginning of an iteration of $\AlgConnectedPathwidth$ is nested, then each expansion computed in this iteration satisfies~(ii) (Lemma~\ref{lem:condition(ii)}). Moreover, an expansion obtained at the end of this iteration is nested (Lemma~\ref{lem:nested}), which allows us to apply an induction on the number of iterations to prove the claim. We also prove that each expansion obtained in Step~L.2 and in Step~R.2 of $\AlgConnectedPathwidth$ satisfies~(i). Those facts are used in Lemma~\ref{lem:pw_bounds_borders} to prove that $\omega(B_j)\leq 2\cdot\width(\cG)$ for each $j=1,\ldots,m$. Note that $\omega(B_j)=|Z_j|$ for each $j=1,\ldots,m$. Finally, we give the main results in Theorems~\ref{thm:algorithm} and~\ref{thm:bound}. We start with two preliminary lemmas that analyze how the borders change while $\AlgProcessLeftBranch$ and $\AlgProcessRightBranch$ execute.

\begin{lemma} \label{lem:lr_borders_correct}
$\border(C_j)=\lborder(C_j)\cup\rborder(C_j)$ for each $j=1,\ldots,m$.
\end{lemma}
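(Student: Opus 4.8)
The plan is to argue by induction on $j$, exploiting the way the two border variables are reassigned inside Steps~$\EL$ and~$\ER$. For the base case $j=1$, the initialization (Step~I.1) sets $\lborder(C_1)=\emptyset$ and $\rborder(C_1)=\{v\}$, so $\lborder(C_1)\cup\rborder(C_1)=\{v\}$. Since $\AlgConnectedPathwidth$ did not already return $\cP$ in Step~I.1, the vertex $v$ has a neighbor in $\cG$, and as $C_1=\{v\}$ that neighbor lies outside $C_1$; hence $\border(C_1)=\{v\}$, matching the right-hand side.

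For the inductive step I assume $\border(C_{j-1})=\lborder(C_{j-1})\cup\rborder(C_{j-1})$. The expansion $C_j$ is computed by a single call to Step~$\EL$ or Step~$\ER$, and I treat $\EL$ since the other case is symmetric. Taking the union of the two border-assignment lines of $\EL$ and factoring out the common intersection with $\border(C_j)$ gives
\[
\lborder(C_j)\cup\rborder(C_j)=\bigl[\lborder(C_{j-1})\cup\rborder(C_{j-1})\cup A_j\bigr]\cap\border(C_j)=\bigl[\border(C_{j-1})\cup A_j\bigr]\cap\border(C_j),
\]
where the last equality applies the induction hypothesis.

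It therefore suffices to prove the inclusion $\border(C_j)\subseteq\border(C_{j-1})\cup A_j$, which I expect to be the crux of the argument. Fix $w\in\border(C_j)$; then $w\in C_j=C_{j-1}\cup A_j$ and $w$ has a neighbor $u\notin C_j$. If $w\in A_j$ the inclusion is immediate; otherwise $w\in C_{j-1}$, and since $u\notin C_j\supseteq C_{j-1}$ the same neighbor $u$ witnesses $w\in\border(C_{j-1})$. With this inclusion in hand the intersection displayed above collapses to $\border(C_j)$, closing the induction.

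The only delicate point I anticipate is purely a matter of careful bookkeeping: one must verify the union/intersection simplification against the precise assignments in $\EL$ (and check that the analogous computation for $\ER$ swaps the roles of $\lborder$ and $\rborder$ without otherwise changing the algebra), and one must derive the inclusion $\border(C_j)\subseteq\border(C_{j-1})\cup A_j$ directly from the definitions of $C_j$ and of $\border$. Notably, no finer structural property of the branches or of the partition condition~(\ref{eq:border_partition}) is needed at this stage; those are reserved for the subsequent lemmas establishing that $\lborder$ and $\rborder$ actually form the claimed left/right split.
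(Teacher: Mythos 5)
Your proof is correct and follows essentially the same route as the paper's: induction on $j$, using the assignment formulas in Steps~$\EL$/$\ER$ together with the key inclusion $\border(C_j)\subseteq\border(C_{j-1})\cup A_j$ (which the paper invokes as ``by construction'' and you spell out explicitly). Your set-algebraic packaging handles both inclusions at once where the paper argues element-wise, but the content is the same.
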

\begin{proof}
$\lborder(C_j)\cup\rborder(C_j)\subseteq\border(C_j)$ follows directly from Steps~$\EL$ and~$\ER$. To prove that $\border(C_j)\subseteq\lborder(C_j)\cup\rborder(C_j)$ we use induction on $j$. For $j=1$ the lemma follows directly from Step~I.1 of $\AlgConnectedPathwidth$.

Suppose that $C_j$, $1\leq j<m$, satisfies the hypothesis. Expansion $C_{j+1}$ is constructed in Step~$\EL$ or~$\ER$. Both cases are analogous so assume that the computation occurs in Step~$\EL$. Let $x\in\border(C_{j+1})$. By construction and by the induction hypothesis, $x\in\border(C_j)\cup A_{j+1}=\lborder(C_j)\cup\rborder(C_j)\cup A_{j+1}$. If $x\in\lborder(C_j)\cup A_{j+1}$, then $x\in\lborder(C_{j+1})$, because (by the definition) $\lborder(C_{j+1})=(\lborder(C_j)\cup A_{j+1})\cap\border(C_{j+1})$ and, by assumption, $x\in\border(C_{j+1})$. If $x\in\rborder(C_j)$, then $x\in\rborder(C_{j+1})$, because $\rborder(C_{j+1})=\rborder(C_j)\cap\border(C_{j+1})$.
\end{proof}

\begin{lemma} \label{lem:borders_separated}
$r(\lborder(C_j))<l(\rborder(C_j))$ for each $j=1,\ldots,m$.
\end{lemma}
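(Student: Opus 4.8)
The plan is to prove the statement by induction on $j$, establishing the stronger readable form that the whole left border lies strictly to the left of the whole right border. For the base case $j=1$, Step~I.1 sets $\lborder(C_1)=\emptyset$ and $\rborder(C_1)=\{v\}$ with $v\in V_1$, so by the empty-border conventions $r(\lborder(C_1))=0<1=l(\rborder(C_1))$. For the inductive step I would assume $r(\lborder(C_j))<l(\rborder(C_j))$ and analyse the single invocation of Step~$\EL$ or Step~$\ER$ that produces $C_{j+1}$ from $C_j$. The engine of the argument is the observation that each step adds vertices to only one side and never grows the opposite border.

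Consider first the case where $C_{j+1}$ is computed by $\EL(i)$. From the formulation of Step~$\EL$ we have $A_{j+1}\subseteq V_{i-1}$, that $\rborder(C_{j+1})=\rborder(C_j)\cap\border(C_{j+1})\subseteq\rborder(C_j)$, and that $\lborder(C_{j+1})=(\lborder(C_j)\cup A_{j+1})\cap\border(C_{j+1})\subseteq\lborder(C_j)\cup A_{j+1}$. The first inclusion gives $l(\rborder(C_{j+1}))\geq l(\rborder(C_j))$, and the second gives $r(\lborder(C_{j+1}))\leq\max\{r(\lborder(C_j)),i-1\}$, where the empty-set conventions cover the degenerate subcases. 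By the induction hypothesis $r(\lborder(C_j))<l(\rborder(C_j))\leq l(\rborder(C_{j+1}))$, so the only remaining point is $i-1<l(\rborder(C_{j+1}))$, which will follow as soon as we know that every call $\EL(i)$ is issued with $i\leq l(\rborder(C_j))$ for the current expansion $C_j$, because then $i-1<i\leq l(\rborder(C_j))\leq l(\rborder(C_{j+1}))$. The case of $\ER(i)$ is entirely symmetric: here $\lborder(C_{j+1})\subseteq\lborder(C_j)$ and $\rborder(C_{j+1})\subseteq\rborder(C_j)\cup A_{j+1}$ with $A_{j+1}\subseteq V_{i+1}$, and it suffices to know that every call $\ER(i)$ is issued with $i\geq r(\lborder(C_j))$.

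Thus the crux, and the step I expect to be the main obstacle, is verifying these two input bounds at each call site of $\EL$ and $\ER$ inside $\AlgConnectedPathwidth$. The easy cases are the calls issued by $\AlgProcessLeftBranch$ (directly, in Step~RL.3, or within Step~L.1), which use $i=r(\lborder(C_j))<l(\rborder(C_j))$ by the induction hypothesis, and symmetrically the calls issued by $\AlgProcessRightBranch$ (including Step~I.2 and Step~RL.3), which use $i=l(\rborder(C_j))>r(\lborder(C_j))$. The delicate calls are Step~L.2, which executes $\ER(t_1)$, and Step~R.2, which executes $\EL(t_1)$, since $t_1$ was fixed earlier in Step~L.1 (respectively R.1) and is not read off the current border. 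For these I would invoke the exit condition of the preceding procedure call: after $\AlgProcessLeftBranch(t_1)$ has terminated in Step~L.1 its guard $r(\lborder(C_m))>t_1$ is false, so $r(\lborder(C_j))\leq t_1=i$, exactly the bound required for $\ER$; and after $\AlgProcessRightBranch(t_1)$ has terminated in Step~R.1 its guard $l(\rborder(C_m))<t_1$ is false, so $l(\rborder(C_j))\geq t_1=i$, exactly the bound required for $\EL$. Combining this case analysis with the border computation of the second paragraph closes the induction, and the empty-border conventions dispose of the degenerate cases immediately, since an empty left (respectively right) border forces $r(\lborder)=0$ (respectively $l(\rborder)=d+1$) and the strict inequality is then automatic.
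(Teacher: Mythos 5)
Your proof is correct and follows essentially the same route as the paper's: induction on $j$ with a case split according to whether $C_{j+1}$ arises from a call inside $\AlgProcessLeftBranch$/$\AlgProcessRightBranch$ or from Step~L.2/R.2, using $\rborder(C_{j+1})\subseteq\rborder(C_j)$ (resp.\ $\lborder(C_{j+1})\subseteq\lborder(C_j)$) on the untouched side and the location of $A_{j+1}$ plus the relevant loop guard or exit condition to bound the other side. The only (harmless) organizational difference is that you fold the first case into the same uniform bound $r(\lborder(C_{j+1}))\leq\max\{r(\lborder(C_j)),i-1\}$ rather than invoking Lemma~\ref{lem:moving_borders}, which makes your argument slightly more elementary since it does not rely on the properness of the branches.
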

\begin{proof}
To prove the lemma we use induction on $j$. The claim clearly holds for $j=1$, and consequently, by Lemma~\ref{lem:moving_borders}, it holds for all expansions obtained in the initialization stage of $\AlgConnectedPathwidth$.

Suppose that $C_j$, $1\leq j<m$, satisfies the hypothesis. The first case to consider is when the expansion $C_{j+1}$ is constructed in Step~$\EL$ or~$\ER$ called by $\AlgProcessLeftBranch$ or $\AlgProcessRightBranch$, respectively. Due to symmetry assume that the former occurs. By Lemma~\ref{lem:moving_borders}, $C_{j+1}$ moves the left border of $C_j$. Thus, by the induction hypothesis, $r(\lborder(C_{j+1}))<r(\lborder(C_j))<l(\rborder(C_j))\leq l(\rborder(C_{j+1}))$. The last inequality is due to $\rborder(C_{j+1})\subseteq\rborder(C_j)$, which follows from the definition, $\rborder(C_{j+1})=\rborder(C_j)\cap\border(C_{j+1})$.

The second case occurs when $C_{j+1}$ is computed in Step~L.2 or~R.2 of $\AlgConnectedPathwidth$.
(We consider Step~R.2, as the other case is similar.) By construction, $\lborder(C_{j+1})\subseteq\lborder(C_j)\cup V_{t_1-1}$. It holds $l(\rborder(C_j))\geq t_1$. By the induction hypothesis, $r(\lborder(C_j))<l(\rborder(C_j))$. Thus, $r(\lborder(C_{j+1}))\leq\max\{r(\lborder(C_j)),t_1-1\}<l(\rborder(C_j))\leq l(\rborder(C_{j+1}))$.
\end{proof}

In order to simplify the statements denote $\wC=C\setminus\border(C)$ for an expansion $C$.  Note that (ii) is equivalent to
\begin{itemize}
\item[(ii')] $\omega(V_i\cap\wC)\geq\sum_{p<i}\omega(V_p\cap\lborder(C))$ for each $i\leq r(\lborder(C))$, and $\omega(V_i\cap\wC)\geq\sum_{p>i}\omega(V_p\cap\rborder(C))$ for each $i\geq l(\rborder(C))$,
\end{itemize}
because, by Lemmas~\ref{lem:lr_borders_correct} and~\ref{lem:borders_separated}, $V_i\cap C$ is the sum of disjoint sets $V_i\cap\wC$ and $V_i\cap\lborder(C)$ when $i\leq r(\lborder(C))$ and $V_i\cap\wC$ and $V_i\cap\rborder(C)$ when $i\geq l(\rborder(C))$ for each expansion $C$.

\begin{lemma} \label{lem:moving_inside_borders}
Let $j\in\{2,\ldots,m\}$. If $C_{j-1}$ satisfies \textup{(ii)} and $C_j$ has been computed by the procedure $\AlgProcessLeftBranch$ or $\AlgProcessRightBranch$, then $C_{j}$ satisfies \textup{(ii)}.
\end{lemma}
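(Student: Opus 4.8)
The plan is to prove the claim for the case where $C_j$ is computed by $\AlgProcessLeftBranch$; the $\AlgProcessRightBranch$ case is symmetric. Write $C=C_{j-1}$, $C'=C_j$, and set $i=r(\lborder(C))$, so that the call producing $C'$ is $\EL(i)$ and the added set $A=A_j$ satisfies $A\subseteq V_{i-1}$ with $C'=C\cup A$. At the outset I would record the three facts I lean on: $\lborder(C')=(\lborder(C)\cup A)\cap\border(C')$ and $\rborder(C')=\rborder(C)\cap\border(C')$ from Step~$\EL$; that $C'$ moves the left border, i.e.\ $r(\lborder(C'))<i$, by Lemma~\ref{lem:moving_borders}; and, as immediate consequences, $\rborder(C')\subseteq\rborder(C)$ together with $V_\ell\cap C'=V_\ell\cap C$ for every $\ell\neq i-1$ (since $A\subseteq V_{i-1}$).

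First I would dispatch the right-border half of (ii). For any $\ell\geq l(\rborder(C'))$, Lemma~\ref{lem:borders_separated} together with $i-1<i<l(\rborder(C))\leq l(\rborder(C'))$ gives $\ell>i-1$, so $V_\ell\cap C'=V_\ell\cap C$; moreover $\sum_{p\geq\ell}\omega(V_p\cap\rborder(C'))\leq\sum_{p\geq\ell}\omega(V_p\cap\rborder(C))$ because $\rborder(C')\subseteq\rborder(C)$. Since $\ell\geq l(\rborder(C))$ as well, the right-border inequality of (ii) for $C$ transfers verbatim to $C'$.

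The left-border half is the heart of the argument, and here only indices $\ell\leq r(\lborder(C'))\leq i-1$ need to be checked. For $\ell\leq i-2$ nothing essential changes: $V_\ell\cap C'=V_\ell\cap C$, and $V_p\cap\lborder(C')\subseteq V_p\cap\lborder(C)$ for all $p\leq i-2$ (as $A$ lives in $V_{i-1}$), so the inequality for $C$ implies the one for $C'$ after enlarging the left side to $C'$ and shrinking the right side. The single delicate value is $\ell=i-1$. Here $\omega(V_{i-1}\cap C')=\omega(V_{i-1}\cap C)+\omega(A)$, while on the right the only new contribution to $\sum_{p\leq i-1}\omega(V_p\cap\lborder(C'))$ beyond $\sum_{p\leq i-1}\omega(V_p\cap\lborder(C))$ comes from the part of $A$ that lands in $\lborder(C')$, whose weight is at most $\omega(A)$ (using $V_{i-1}\cap\lborder(C')\subseteq(V_{i-1}\cap\lborder(C))\cup A$ and the disjointness $A\cap\lborder(C)=\emptyset$, which holds since $A\cap C=\emptyset$). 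Thus the target inequality for $C'$ reduces to $\omega(V_{i-1}\cap C)\geq\sum_{p\leq i-1}\omega(V_p\cap\lborder(C))$, which is exactly (ii) for $C$ at the index $i-1$ (valid since $i-1<i=r(\lborder(C))$).

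The main obstacle I anticipate is precisely this bookkeeping at $\ell=i-1$: one must verify that the extra left-border weight created in $V_{i-1}$ is bounded by $\omega(A)$ and thereby cancels the matching gain $\omega(A)$ on the left-hand side. I would also make explicit why the range of $\ell$ for the left border stops at $i-1$: the border vertices of $V_i\cap\lborder(C)$ leave the border once their $V_{i-1}$-neighbours in $A$ are absorbed, which is what guarantees $r(\lborder(C'))<i$ and is the content of Lemma~\ref{lem:moving_borders} (via the properness exploited in Lemma~\ref{lem:proper_moves}). Everything else is monotone containment, so once the $\ell=i-1$ case is pinned down the remaining indices follow by enlarging $C$ to $C'$ and shrinking the borders.
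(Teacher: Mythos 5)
Your proof is correct and takes essentially the same route as the paper's: both halves of condition (ii) are transferred from $C_{j-1}$ to $C_j$ by tracking, level by level, how Step~$\EL$ changes $V_\ell\cap C$ and the two borders, relying on the same facts (the definitions in Step~$\EL$, Lemma~\ref{lem:moving_borders}, Lemma~\ref{lem:borders_separated}). The only cosmetic difference is that the paper places the delicate index at $\ell=r(\lborder(C_j))$ and argues through $\wC$ and the equivalent form (ii'), whereas you place it at $\ell=i-1$ (where $A_j$ lands) and cancel $\omega(A_j)$ on both sides; both verifications are sound.
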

\begin{proof}
All expansions $C_j$ obtained in the initialization stage of $\AlgConnectedPathwidth$ satisfy (ii), because $\lborder(C_j)=\emptyset$ and $\rborder(C_j)$ is contained in a single set $V_i$ for some $i\in\{1,\ldots,d\}$.
Assume in the following without loss of generality that $C_j$ has been computed by $\AlgProcessLeftBranch$, that is, in Step~$\EL$ and the proof of the other case is analogous. By construction, $\rborder(C_j)=\rborder(C_{j-1})\cap\border(C_j)\subseteq\rborder(C_{j-1})$. Thus, $l(\rborder(C_j))\geq l(\rborder(C_{j-1}))$, and therefore by (ii) for $C_{j-1}$ we obtain that
\begin{equation} \label{eq:r_border_correct}
\omega(V_i\cap C_j)=\omega(V_i\cap C_{j-1})\geq\sum_{p\geq i}\omega(V_p\cap\rborder(C_{j-1}))\geq\sum_{p\geq i}\omega(V_p\cap\rborder(C_{j})),
\end{equation}
for each $i\geq l(\rborder(C_j))$.

Now we prove the condition (ii) for the left border of $C_j$. By Lemma~\ref{lem:moving_borders}, $C_j$ moves the left border of $C_{j-1}$. Thus,
\begin{equation} \label{eq:moving_border1}
r(\lborder(C_j))\leq r(\lborder(C_{j-1}))-1.
\end{equation}
It follows from $\AlgProcessLeftBranch$ and from the definition of $\cG$ that for each $p<r(\lborder(C_{j}))-1$ it holds
\begin{equation} \label{eq:moving_border2}
V_p\cap C_j=V_p\cap C_{j-1}\textup{ and }V_p\cap\lborder(C_j)=V_p\cap\lborder(C_{j-1}),
\end{equation}
while for $p=r(\lborder(C_j))-1$ we have
\begin{equation} \label{eq:moving_border3}
V_p\cap C_j=V_p\cap C_{j-1}\textup{ and }V_p\cap\lborder(C_j)\subseteq V_p\cap\lborder(C_{j-1})
\end{equation}
(see Figure~\ref{fig:moving_border} for a case when $V_p\cap\lborder(C_j)\neq V_p\cap\lborder(C_{j-1})$ for $p=r(\lborder(C_j))-1$).
Thus, by (ii) for $C_{j-1}$, and by~(\ref{eq:moving_border1}),~(\ref{eq:moving_border2}),~(\ref{eq:moving_border3}), we obtain that for each $i<r(\lborder(C_j))$,
\begin{equation} \label{eq:(ii)C_j-1}
\omega(V_{i}\cap C_j)=\omega(V_{i}\cap C_{j-1})\geq\sum_{p\leq i}\omega(V_p\cap\lborder(C_{j-1}))\geq\sum_{p\leq i}\omega(V_p\cap\lborder(C_j)).
\end{equation}
\begin{figure}[ht]
	\begin{center}
	\input{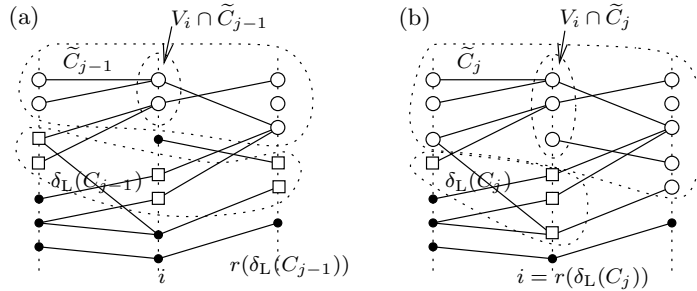_t}
	\caption{Proof of Lemma~\ref{lem:moving_inside_borders}: the subgraph $\cG[V_{i-1}\cup V_i\cup V_{i+1}]$, where $i=r(\lborder(C_j))$ with white vertices in: (a) $C_{j-1}$, and (b) $C_j$}
	\label{fig:moving_border}
	\end{center}
\end{figure}
Let $i=r(\lborder(C_j))$. It holds $V_i\cap\wC_{j-1}\subseteq V_i\cap\wC_j$ (see also Figure~\ref{fig:moving_border}), which gives $\omega(V_i\cap\wC_{j})\geq\omega(V_i\cap\wC_{j-1})$. Thus, by~(ii'),(\ref{eq:moving_border2}),(\ref{eq:moving_border3}), by (ii) for $C_{j-1}$, and by Lemma~\ref{lem:borders_separated},
\begin{multline}
\omega(V_i\cap C_j)=\omega(V_i\cap\wC_j)+\omega(V_i\cap\lborder(C_j))\geq\omega(V_i\cap\wC_{j-1})+\omega(V_i\cap\lborder(C_j)) \\
 \geq\sum_{p<i}\omega(V_p\cap\lborder(C_{j-1}))+\omega(V_i\cap\lborder(C_j))\geq\sum_{p\leq i}\omega(V_p\cap\lborder(C_j)). \label{eq:(ii)_C_j-12}
\end{multline}
Equations~(\ref{eq:r_border_correct}),~(\ref{eq:(ii)C_j-1}) and~(\ref{eq:(ii)_C_j-12}) prove (ii) for $C_j$.
\end{proof}

\begin{lemma} \label{lem:condition(i)}
Let $C_{s_0}$ be the expansion from the beginning of an iteration of $\AlgConnectedPathwidth$ and let $C_{s_1}$ be the expansion obtained at the end of Step~L.1 or Step~R.1 of this iteration. If $C_{s_0}$ is nested, then $C_{s_1}$ satisfies \textup{(i)}.
\end{lemma}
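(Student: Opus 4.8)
The plan is to split on the comparison $\omega(\lborder(C_{s_0}))>\omega(\rborder(C_{s_0}))$ that drives the iteration, treating only the case where it holds, so that Step~L.1 runs and $\branch_1=\lbranch(C_{s_0},t_1)$ is a maximal left branch; the opposite case (Step~R.1, a right branch) is symmetric with the two borders interchanged. By Lemma~\ref{lem:branches}, $C_{s_1}=C_{s_0}\cup V(\branch_1)$. If $t_1=r(\lborder(C_{s_0}))$, Step~L.1 adds nothing, $C_{s_1}=C_{s_0}$, and~(i) holds because $C_{s_0}$ is nested; so I assume $t_1<r(\lborder(C_{s_0}))$.

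First I would track the two borders. Because $\AlgProcessLeftBranch(t_1)$ only calls $\EL$ with inputs at most $r(\lborder(C_{s_0}))$, which by Lemma~\ref{lem:borders_separated} is strictly less than $l(\rborder(C_{s_0}))$, every added vertex lies strictly to the left of $\rborder(C_{s_0})$; hence no right-border vertex loses its outside neighbour on the right, and the right border is preserved, $\rborder(C_{s_1})=\rborder(C_{s_0})$, so $l(\rborder(C_{s_1}))=l(\rborder(C_{s_0}))$ and $\omega(\rborder(C_{s_1}))=\omega(\rborder(C_{s_0}))$. The new left border consists of the external vertices of $\branch_1$; by properness these lie in $V_j$ with $j\le t_1$, and by maximality one of them lies in $V_{t_1}$, so $r(\lborder(C_{s_1}))=t_1$ (the case $\Ext(\branch_1)=\emptyset$ makes~(i) vacuous). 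Thus the index range for~(i) is $\{t_1,\ldots,l(\rborder(C_{s_0}))\}$. Since $\min\{\omega(\lborder(C_{s_1})),\omega(\rborder(C_{s_1}))\}\le\omega(\rborder(C_{s_1}))=\omega(\rborder(C_{s_0}))$, it suffices to establish the single bound $\omega(V_i\cap C_{s_1})\ge\omega(\rborder(C_{s_0}))$ for every $i\in\{t_1,\ldots,l(\rborder(C_{s_0}))\}$, so I never need to evaluate the minimum itself.

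I would then split this range at $r(\lborder(C_{s_0}))$. For $i\in\{t_1,\ldots,r(\lborder(C_{s_0}))\}$ I would bound $\omega(V_i\cap C_{s_1})$ below by the weight $\omega(\Ext(\lbranch(C_{s_0},i)))$ of the cut $i$ of $\branch_1$: since $\branch_1$ is proper, each intermediate branch $\lbranch(C_{s_0},i)$ is proper too (its vertices in $V_j$, $j>i$, coincide with those of $\branch_1$ by Observation~\ref{obs:o1}), so $\Ext(\lbranch(C_{s_0},i))\subseteq V_i$; as these vertices also belong to $V(\branch_1)\subseteq C_{s_1}$, the bound follows. Condition~(iii) for the nested $C_{s_0}$ makes $r(\lborder(C_{s_0}))$ a bottleneck of $\branch_1$, and the cut $r(\lborder(C_{s_0}))$ has weight exactly $\omega(\lborder(C_{s_0}))$ (there $\lbranch(C_{s_0},r(\lborder(C_{s_0})))$ reduces to $\lborder(C_{s_0})$, all of whose vertices are external); hence every cut weight is at least $\omega(\lborder(C_{s_0}))$, giving $\omega(V_i\cap C_{s_1})\ge\omega(\lborder(C_{s_0}))>\omega(\rborder(C_{s_0}))$. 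For $i\in\{r(\lborder(C_{s_0})),\ldots,l(\rborder(C_{s_0}))\}$ the branch contributes no vertex, so $V_i\cap C_{s_1}=V_i\cap C_{s_0}$, and~(i) for $C_{s_0}$ gives $\omega(V_i\cap C_{s_1})\ge\min\{\omega(\lborder(C_{s_0})),\omega(\rborder(C_{s_0}))\}=\omega(\rborder(C_{s_0}))$. Together the two sub-ranges cover $\{t_1,\ldots,l(\rborder(C_{s_0}))\}$ and prove~(i) for $C_{s_1}$.

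The step I expect to be the main obstacle is the first sub-range, specifically the inequality $\omega(V_i\cap C_{s_1})\ge\omega(\Ext(\lbranch(C_{s_0},i)))$, which requires that the external vertices of the intermediate branch sit in the cut level $V_i$ and already lie in $C_{s_1}$; this is exactly where properness (no external vertices to the right of a cut) and the stability of Observation~\ref{obs:o1} do the real work. The border bookkeeping of the second paragraph is routine, but it must be carried out with care, since the whole reduction to a single inequality depends on the right border being preserved in weight and on $r(\lborder(C_{s_1}))=t_1$.
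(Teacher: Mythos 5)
Your overall architecture matches the paper's: reduce to the L.1 case, observe that the right border is unchanged so it suffices to beat $\omega(\rborder(C_{s_0}))$, split the index range at $r(\lborder(C_{s_0}))$, handle the upper sub-range by condition (i) for $C_{s_0}$, and handle the lower sub-range by comparing $\omega(V_i\cap C_{s_1})$ with the weight of cut $i$ of the branch and then invoking condition (iii) (bottleneck at $r(\lborder(C_{s_0}))$). But the step you yourself flag as the main obstacle contains a genuine error. You claim that properness of $\lbranch(C_{s_0},i)$ gives $\Ext(\lbranch(C_{s_0},i))\subseteq V_i$. Properness only excludes external vertices in $V_j$ for $j>i$; it says nothing about $j<i$. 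Since every left branch contains all of $\lborder(C_{s_0})$, and $\lborder(C_{s_0})$ may spread over many levels $V_p$ with $p<i$ (each such border vertex has, by definition, a neighbour outside $C_{s_0}$, and that neighbour need not lie in the branch), the set $\Ext(\lbranch(C_{s_0},i))$ in general contains vertices of $\lborder(C_{s_0})$ at levels below $i$. Consequently your inequality $\omega(V_i\cap C_{s_1})\geq\omega(\Ext(\lbranch(C_{s_0},i)))$ is not established by counting only the branch vertices sitting in $V_i$: the cut weight also charges the low-level border mass, which your argument never accounts for.

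This is exactly the role of Observation~\ref{obs:o2} together with condition (ii) of nestedness in the paper's proof, and your proposal never uses condition (ii) at all. The correct accounting is $\omega(V_i\cap C_{s_1})=\omega(V_i\cap\wC_{s_0})+\omega(V_i\cap V(\branch_1))$, where (ii') for the nested $C_{s_0}$ gives $\omega(V_i\cap\wC_{s_0})\geq\sum_{p<i}\omega(V_p\cap\lborder(C_{s_0}))$, i.e. the interior of $C_{s_0}$ at level $i$ dominates precisely the low-level part of the cut, while $V_i\cap V(\branch_1)=V_i\cap V(\lbranch(C_{s_0},i))$ (Observation~\ref{obs:o1}) covers the part of the cut in $V_i$; Observation~\ref{obs:o2} then yields $\omega(V_i\cap C_{s_1})\geq\omega(\Ext(\lbranch(C_{s_0},i)))$. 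Without this, a nested expansion whose left border is heavy at levels far below $i$ would defeat your bound. The rest of your write-up (the preservation of the right border, $r(\lborder(C_{s_1}))=t_1$ via maximality, the bottleneck argument, and the second sub-range) is sound and agrees with the paper, so the fix is local: replace the false containment $\Ext(\lbranch(C_{s_0},i))\subseteq V_i$ by the two-part decomposition above and invoke condition (ii).
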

\begin{proof}
Since $\AlgConnectedPathwidth$ executes Step~L.1 or Step~R.1 depending on the condition checked at the beginning of an iteration, and the analysis in both cases is similar, we assume without loss of generality that Step~L.1 is executed in this particular iteration of $\AlgConnectedPathwidth$. Moreover, if $t_1=r(\lborder(C_{s_0}))$ in Step~L.1, then we are done, because $s_0=s_1$ in such case.

By Lemma~\ref{lem:moving_borders}, each iteration of $\AlgProcessLeftBranch$ moves the left border of the corresponding expansion, which implies that $r(\lborder(C_{s_1}))<r(\lborder(C_{s_1-1}))<\cdots<r(\lborder(C_{s_0}))$. Moreover, by Lemma~\ref{lem:borders_separated}, $r(\lborder(C_{s_0}))<l(\rborder(C_{s_0}))$ and Lemma~\ref{lem:branches} implies that $V_i\cap\border(C_{s_1})=V_i\cap\border(C_{s_0})$ for each $i\geq r(\lborder(C_{s_0}))$, which gives
\begin{equation} \label{eq:right_unchanged}
\rborder(C_{s_0})=\rborder(C_{s_0+1})=\cdots=\rborder(C_{s_1}).
\end{equation}

Let first $i\in\{r(\lborder(C_{s_1})),\ldots,r(\lborder(C_{s_0}))\}$. By (ii') for $C_{s_0}$, and by Lemma~\ref{lem:branches}
\begin{equation} \label{eq:C_s_1_cut}
\omega(V_i\cap C_{s_1})=\omega(V_i\cap\wC_{s_0})+\omega(V_i\cap V(\branch_1))\geq\sum_{p<i}\omega(V_p\cap\lborder(C_{s_0}))+\omega(V_i\cap V(\branch_1)).
\end{equation}
By (iii) for $C_{s_0}$, the cut $r(\lborder(C_{s_0}))$ is a bottleneck of the branch $\lbranch(C_{s_0},i)$. Thus, the weight of the cut $r(\lborder(C_{s_0}))$, that is $\omega(\lborder(C_{s_0}))$, is not greater than the weight of the cut $i$ of the branch $\lbranch(C_{s_0},i)$, which equals $\omega(\Ext(\lbranch(C_{s_0},i)))$. Since $V_i\cap V(\branch_1)=V_i\cap V(\lbranch(C_{s_0},i))$, by Observation~\ref{obs:o2} and by~(\ref{eq:C_s_1_cut}) we obtain $\omega(V_i\cap C_{s_1})\geq\omega(\lborder(C_{s_0}))$.
By the fact that Step~L.1 of $\AlgConnectedPathwidth$ executes, $\omega(\lborder(C_{s_0}))\geq\omega(\rborder(C_{s_0}))$. Since, by (\ref{eq:right_unchanged}), $\omega(\rborder(C_{s_0}))=\omega(\rborder(C_{s_1}))$, we obtain that $\omega(V_i\cap C_{s_1})\geq\omega(\rborder(C_{s_1}))$.

Let now $i\in\{r(\lborder(C_{s_0})),\ldots,l(\rborder(C_{s_1}))\}$. As argued above, $\omega(\rborder(C_{s_0}))\leq\omega(\lborder(C_{s_0}))$. Thus, by~(\ref{eq:right_unchanged}), by Lemma~\ref{lem:branches} and by~(i) for $C_{s_0}$, $\omega(V_i\cap C_{s_1})=\omega(V_i\cap C_{s_0})\geq\omega(\rborder(C_{s_0}))=\omega(\rborder(C_{s_1}))$.
\end{proof}

\begin{lemma} \label{lem:condition(ii)}
If an expansion $C_{s_0}$ from the beginning of an iteration of $\AlgConnectedPathwidth$ is nested, then each expansion computed by $\AlgConnectedPathwidth$ in this iteration satisfies \textup{(ii)}.
\end{lemma}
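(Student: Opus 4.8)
The plan is to exploit the three-segment structure of a single iteration and to propagate condition~(ii) through it, isolating the one step that is not covered by Lemma~\ref{lem:moving_inside_borders}. Assume without loss of generality that the iteration executes Steps~L.1, L.2, RL.3 (Steps~R.1, R.2, RL.3 are symmetric), and let $C_{s_0},C_{s_1},C_{s_2},C_{s_3}$ be as in Lemma~\ref{lem:branches}. Every expansion computed strictly inside Step~L.1 and inside Step~RL.3 is produced by $\AlgProcessLeftBranch$ or $\AlgProcessRightBranch$; hence, starting from an expansion satisfying~(ii), repeated application of Lemma~\ref{lem:moving_inside_borders} shows that each of these expansions satisfies~(ii). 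Since $C_{s_0}$ is nested it satisfies~(ii), so $C_{s_1}$ and all intermediate expansions of Step~L.1 satisfy~(ii); and once we know $C_{s_2}$ satisfies~(ii), the same argument handles every expansion of Step~RL.3 up to $C_{s_3}$. Thus the whole lemma reduces to a single claim: \emph{if $C_{s_1}$ satisfies~(ii) then so does $C_{s_2}$}, where $C_{s_2}$ is obtained from $C_{s_1}$ by the single call $\ER(t_1)$ of Step~L.2, the one step to which Lemma~\ref{lem:moving_inside_borders} does not apply.

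For this claim I would first record the structural facts about Step~L.2. By Lemma~\ref{lem:branches}, $C_{s_2}=C_{s_1}\cup A_{s_2}$ with $A_{s_2}\subseteq V_{t_1+1}$, so $V_i\cap C_{s_2}=V_i\cap C_{s_1}$ for every $i\neq t_1+1$, while $V_{t_1+1}\cap C_{s_2}=(V_{t_1+1}\cap C_{s_1})\cup A_{s_2}$ is a disjoint union. From Step~$\ER$ we get $\lborder(C_{s_2})\subseteq\lborder(C_{s_1})$ and $\rborder(C_{s_2})\subseteq\rborder(C_{s_1})\cup A_{s_2}$, and by Lemma~\ref{lem:borders_separated} the borders of $C_{s_1}$ are separated with $r(\lborder(C_{s_1}))=t_1$. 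The left-border half of~(ii) for $C_{s_2}$ is then immediate: for each $i\leq r(\lborder(C_{s_2}))\leq t_1$ we have $V_i\cap C_{s_2}=V_i\cap C_{s_1}$, so $\omega(V_i\cap C_{s_2})\geq\sum_{p\leq i}\omega(V_p\cap\lborder(C_{s_1}))\geq\sum_{p\leq i}\omega(V_p\cap\lborder(C_{s_2}))$, using~(ii) for $C_{s_1}$ and $\lborder(C_{s_2})\subseteq\lborder(C_{s_1})$.

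The right-border half is the heart of the argument. Adding $A_{s_2}\subseteq V_{t_1+1}$ can only shrink the right border at levels $p\geq t_1+2$, but external vertices of $A_{s_2}$ may create new right-border vertices at level $t_1+1$, lowering $l(\rborder(C_{s_2}))$ to $t_1+1$. I would split $\sum_{p\geq i}\omega(V_p\cap\rborder(C_{s_2}))$ into the part at levels $\geq t_1+2$, bounded by $\sum_{p\geq i}\omega(V_p\cap\rborder(C_{s_1}))$, and the new part at level $t_1+1$, bounded by $\omega(A_{s_2})$. For $i\geq l(\rborder(C_{s_1}))$ condition~(ii) for $C_{s_1}$ directly dominates the surviving old right border. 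The delicate range is $t_1<i\leq l(\rborder(C_{s_1}))$, in particular the critical index $i=t_1+1=l(\rborder(C_{s_2}))$, where the required $\omega(V_{t_1+1}\cap C_{s_2})\geq\omega(\rborder(C_{s_2}))$ reduces---via $\omega(\rborder(C_{s_2}))\leq\omega(\rborder(C_{s_1}))+\omega(A_{s_2})$ and $\omega(V_{t_1+1}\cap C_{s_2})=\omega(V_{t_1+1}\cap C_{s_1})+\omega(A_{s_2})$---to $\omega(V_{t_1+1}\cap C_{s_1})\geq\omega(\rborder(C_{s_1}))$. Here the plain form of condition~(i) from Lemma~\ref{lem:condition(i)} is not enough, since $\min\{\omega(\lborder(C_{s_1})),\omega(\rborder(C_{s_1}))\}$ could equal the smaller $\omega(\lborder(C_{s_1}))$; I would instead use the stronger bound established inside the proof of Lemma~\ref{lem:condition(i)}, namely that because Step~L.1 was executed we have $\omega(\lborder(C_{s_0}))\geq\omega(\rborder(C_{s_0}))=\omega(\rborder(C_{s_1}))$, which upgrades the estimate to $\omega(V_i\cap C_{s_1})\geq\omega(\rborder(C_{s_1}))$ for every $i\in\{t_1,\ldots,l(\rborder(C_{s_1}))\}$. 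This closes the remaining cases.

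I expect the main obstacle to be exactly this right-border analysis at level $t_1+1$: one must track, using the layered structure of $\cG$ (edges run only between consecutive $V_i$), which old right-border vertices survive the addition of $A_{s_2}$ and which new ones appear, and then show $\omega(V_{t_1+1}\cap C_{s_1})$ is large enough to absorb the entire right border of $C_{s_2}$. The subtle point---why the unqualified condition~(i) is insufficient and why the case hypothesis $\omega(\lborder(C_{s_0}))>\omega(\rborder(C_{s_0}))$ must be invoked to strengthen it to a bound by $\omega(\rborder(C_{s_1}))$---is what makes this step more than routine book-keeping.
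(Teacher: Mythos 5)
Your proposal is correct and follows essentially the same route as the paper: reduce everything to the single call $\ER(t_1)$ of Step~L.2 via Lemma~\ref{lem:moving_inside_borders}, dispose of the left-border half by the fact that nothing changes at levels $\leq t_1$, and handle the critical level $t_1+1$ by combining $\omega(V_{t_1+1}\cap C_{s_2})=\omega(V_{t_1+1}\cap C_{s_1})+\omega(A_{s_2})$ with condition~(i) for $C_{s_1}$ upgraded to $\omega(V_{t_1+1}\cap C_{s_1})\geq\omega(\rborder(C_{s_1}))$ using $\omega(\lborder(C_{s_0}))\geq\omega(\rborder(C_{s_0}))=\omega(\rborder(C_{s_1}))$. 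The paper obtains that upgrade by additionally noting $\omega(\lborder(C_{s_1}))\geq\omega(\lborder(C_{s_0}))$ from the bottleneck property of $\branch_1$ (its inequality~(\ref{eq:borders_of_C_s_1})), whereas you cite the same chain as already established inside the proof of Lemma~\ref{lem:condition(i)} --- a purely presentational difference.
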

\begin{proof}
As before, assume without loss of generality that $\AlgConnectedPathwidth$ executes Steps~L.1-L.2 in the iteration we consider, and the proof for the other case is similar.

Due to Lemma~\ref{lem:moving_inside_borders}, it is enough to prove that an expansion $C_{s_2}$ obtained in Step~L.2 of $\AlgConnectedPathwidth$ satisfies (ii). Note that $C_{s_1}=C_{s_2-1}$ is the expansion obtained at the end of Step~L.1 of $\AlgConnectedPathwidth$. If Step~$\EL$ is executed at least once by $\AlgProcessLeftBranch$ invoked in Step~L.1 of $\AlgConnectedPathwidth$, then, by Lemmas~\ref{lem:moving_inside_borders} and~\ref{lem:condition(i)}, $C_{s_1}$ satisfies~(i) and~(ii), otherwise $C_{s_1}$ is nested by assumption.

By the fact that $r(\lborder(C_{s_0}))$ is a bottleneck of $\branch_1$ from Step~L.1, we obtain that $\omega(\lborder(C_{s_1}))\geq\omega(\lborder(C_{s_0}))$. By assumption, $\omega(\lborder(C_{s_0}))\geq\omega(\rborder(C_{s_0}))$. By Lemma~\ref{lem:branches}, $\rborder(C_{s_0})=\rborder(C_{s_0+1})=\cdots=\rborder(C_{s_1})$. Thus,
\begin{equation} \label{eq:borders_of_C_s_1}
\omega(\lborder(C_{s_1}))\geq\omega(\rborder(C_{s_1})).
\end{equation}

First we prove (ii) for the left border of $C_{s_2}$. (See Figure~\ref{fig:conditionii} for an example of $C_{s_1}$ and $C_{s_2}$.) By construction, $r(\lborder(C_{s_2}))\leq t_1$. Since $C_{s_2}\subseteq C_{s_1}\cup V_{t_1+1}$, it holds $V_{t_1}\cap\lborder(C_{s_2})\subseteq V_{t_1}\cap\lborder(C_{s_1})$ and $V_i\cap\lborder(C_{s_2})=V_i\cap\lborder(C_{s_1})$ for $i<t_1$. Moreover, $V_i\cap C_{s_2}=V_i\cap C_{s_1}$ for each $i\leq t_1$. Thus, for $i\leq t_1$, by (ii) for $C_{s_1}$,
\begin{equation} \label{eq:left_ok}
\omega(V_i\cap C_{s_2})=\omega(V_i\cap C_{s_1})\geq\sum_{p\leq i}\omega(V_p\cap\lborder(C_{s_1}))\geq\sum_{p\leq i}\omega(V_p\cap\lborder(C_{s_2})).
\end{equation}

Now we analyze the right border of $C_{s_2}$. It holds $l(\rborder(C_{s_2}))\geq t_1+1$, because $\rborder(C_{s_2})\subseteq \rborder(C_{s_1})\cup V_{t_1+1}$ and $l(\rborder(C_{s_1}))\geq t_1+1$. Also,
\begin{equation} \label{eq:unchanged}
V_i\cap C_{s_2}=V_i\cap C_{s_1}\textup{ for each }i>t_1+1.
\end{equation}

\begin{figure}[ht]
	\begin{center}
	\input{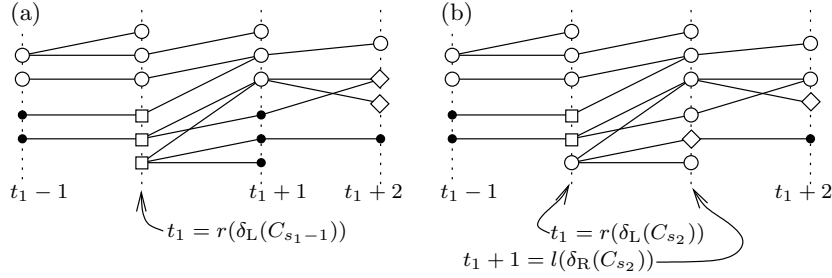_t}
	\caption{Proof of Lemma~\ref{lem:condition(ii)}: (a) $C_{s_1}$; (b) $C_{s_2}$}
	\label{fig:conditionii}
	\end{center}
\end{figure}

Note that $\rborder(C_{s_2})\subseteq \rborder(C_{s_1})\cup V_{t_1+1}$ implies that if $l(\rborder(C_{s_2}))>t_1+1$, then $\rborder(C_{s_2})\subseteq\rborder(C_{s_1})$ and consequently, by~(\ref{eq:unchanged}), $C_{s_2}$ satisfies (ii). Thus, we continue with the assumption that $l(\rborder(C_{s_2}))=t_1+1$. Let $i\geq l(\rborder(C_{s_2}))$ be selected arbitrarily. We consider the following cases.
\begin{enumerate}
\item $i=l(\rborder(C_{s_2}))=t_1+1$. By the definition and by Lemmas~\ref{lem:lr_borders_correct} and~\ref{lem:borders_separated}, $V_{i}\cap\rborder(C_{s_2})$ and $V_{i}\cap\wC_{s_2}$ are disjoint and their union is $V_i\cap C_{s_2}$. By construction, $V_{i}\cap\wC_{s_2}\supseteq V_{i}\cap\wC_{s_1}$ (see Figure~\ref{fig:conditionii} for an example where the equality between the sets does not hold), which implies
\begin{equation} \label{eq:C_s_2_subseteqC_s_2_1}
\omega(V_i\cap\wC_{s_2})\geq\omega(V_i\cap\wC_{s_1}).
\end{equation}
By Lemma~\ref{lem:borders_separated}, $l(\rborder(C_{s_1}))>t_1$, and $r(\lborder(C_{s_1}))\leq t_1$. Thus, we obtain by (i) for $C_{s_1}$ that $\omega(V_i\cap C_{s_1})\geq\min\{\omega(\lborder(C_{s_1})),\omega(\rborder(C_{s_1}))\}$. By~(\ref{eq:borders_of_C_s_1}), $\omega(V_i\cap C_{s_1})\geq\omega(\rborder(C_{s_1}))$. By (ii') for $C_{s_1}$ and by (\ref{eq:C_s_2_subseteqC_s_2_1})
\begin{eqnarray}
\omega(V_i\cap C_{s_2}) & = & \omega(V_{i}\cap\rborder(C_{s_2}))+\omega(V_{i}\cap\wC_{s_2}) \nonumber \\
   & \geq & \omega(V_i\cap\rborder(C_{s_2}))+\omega(V_i\cap\wC_{s_1}) \nonumber\\
   & =    & \omega(V_i\cap(\rborder(C_{s_2})\setminus\rborder(C_{s_1})))+\omega(V_i\cap C_{s_1}) \nonumber\\
   & \geq & \omega(V_{i}\cap(\rborder(C_{s_2})\setminus\rborder(C_{s_1})))+\omega(\rborder(C_{s_1})) \nonumber \\
   & =    & \omega(\rborder(C_{s_2})), \nonumber
\end{eqnarray}
where the last equation follows from~(\ref{eq:unchanged}).
\item $l(\rborder(C_{s_2}))<i\leq l(\rborder(C_{s_1}))$. Since $i>t_1+1>r(\lborder(C_{s_1}))$, again by~(\ref{eq:borders_of_C_s_1}),(\ref{eq:unchanged}) and by (i),
\[\omega(V_i\cap C_{s_2})=\omega(V_i\cap C_{s_1})\geq\omega(\rborder(C_{s_1}))\geq\sum_{p\geq i}V_p\cap\rborder(C_{s_2}).\]
\item $i>l(\rborder(C_{s_1}))$. Since $C_{s_1}$ satisfies (ii), Equation~(\ref{eq:unchanged}) implies (ii) for $C_{s_2}$.
\end{enumerate}
Cases~1-3 and~(\ref{eq:left_ok}) imply that (ii) holds for $C_{s_2}$.
\end{proof}

\begin{lemma} \label{lem:nested}
Let $C_{s_0}$ and $C_{s_3}$ be the expansions from the beginning of two consecutive iterations of $\AlgConnectedPathwidth$. If $C_{s_0}$ is nested, then $C_{s_3}$ is nested.
\end{lemma}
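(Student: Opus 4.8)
The plan is to verify the three defining conditions (i)--(iii) of nestedness for $C_{s_3}$, using what is already established about the expansions produced inside one iteration. By Lemma~\ref{lem:branches} the iteration yields $C_{s_1}=C_{s_0}\cup V(\branch_1)$, then $C_{s_2}=C_{s_1}\cup A_{s_2}$, and finally $C_{s_3}=C_{s_2}\cup V(\branch_2')\cup V(\branch_3')$, where $\branch_3'=\lbranch(C_{s_2},t_3')$ and $\branch_2'=\rbranch(C_{s_2},t_2')$ reach the bottlenecks $t_3',t_2'$ of the maximal branches $\branch_3,\branch_2$ found in Step~RL.3. Condition~(ii) for $C_{s_3}$ is immediate from Lemma~\ref{lem:condition(ii)}, so only (i) and (iii) remain. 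I argue on the left side (the branch $\branch_3'$); the right side is symmetric, and the two are independent because $r(\lborder)<l(\rborder)$ (Lemma~\ref{lem:borders_separated}) keeps $\branch_2'$ and $\branch_3'$ in disjoint ranges of levels. The first facts I would record concern the new left border. Since $\branch_3$ is maximal, hence proper, a level argument gives $\Ext(\branch_3')\subseteq V_{t_3'}$: any branch vertex above level $t_3'$ has all its neighbours at levels $\ge t_3'$, hence inside $C_{s_2}\cup V(\branch_3')$. Thus $\lborder(C_{s_3})=\Ext(\branch_3')$, so $r(\lborder(C_{s_3}))=t_3'$ and $\omega(\lborder(C_{s_3}))$ equals the weight of the cut $t_3'$ of $\branch_3$. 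As $t_3'$ is a bottleneck, this is at most the weight of the cut $r(\lborder(C_{s_2}))$, i.e. $\omega(\lborder(C_{s_2}))$; hence $\omega(\lborder(C_{s_3}))\le\omega(\lborder(C_{s_2}))$ and, symmetrically, $\omega(\rborder(C_{s_3}))\le\omega(\rborder(C_{s_2}))$.

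For condition~(i) I would split the range $\{t_3',\ldots,t_2'\}$ into the left-branch part $\{t_3',\ldots,r(\lborder(C_{s_2}))\}$, the old middle $\{r(\lborder(C_{s_2})),\ldots,l(\rborder(C_{s_2}))\}$, and the symmetric right-branch part. On the old middle nothing changed, so $V_i\cap C_{s_3}=V_i\cap C_{s_2}$, and (i) for $C_{s_2}$ together with the border-weight monotonicity above gives $\omega(V_i\cap C_{s_3})\ge\min\{\omega(\lborder(C_{s_2})),\omega(\rborder(C_{s_2}))\}\ge\min\{\omega(\lborder(C_{s_3})),\omega(\rborder(C_{s_3}))\}$. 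On the left-branch part I would reproduce the computation of Lemma~\ref{lem:condition(i)}: using \textup{(ii$'$)} for $C_{s_2}$, then $V_i\cap V(\branch_3')=V_i\cap V(\lbranch(C_{s_2},i))$ (Observation~\ref{obs:o1}) and the bound of Observation~\ref{obs:o2}, one obtains $\omega(V_i\cap C_{s_3})$ at least the weight of the cut $i$ of $\branch_3$, which by the bottleneck property is at least $\omega(\lborder(C_{s_3}))$. (Here I also need (i) for $C_{s_2}$, which follows from Lemma~\ref{lem:condition(i)} for $C_{s_1}$ and the same argument applied to the single right-extension of Step~L.2.)

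Condition~(iii) is where the real work lies: I must show that $t_3'=r(\lborder(C_{s_3}))$ is a minimum-weight cut of every $\lbranch(C_{s_3},i)$, $i\le t_3'$, i.e. $\omega(\lborder(C_{s_3}))\le\omega(\Ext(\lbranch(C_{s_3},j)))$ for all $j\le t_3'$. The key auxiliary statement I would prove is a \emph{preservation of external structure}: for $j<t_3'$ one has $\Ext(\lbranch(C_{s_3},j))=\Ext(\lbranch(C_{s_2},j))$. This is checked by comparing the ambient sets $C_{s_3}\cup V(\lbranch(C_{s_3},j))$ and $C_{s_2}\cup V(\lbranch(C_{s_2},j))$: by Observation~\ref{obs:o1} they coincide at every level $\le t_3'$; the level-$t_3'$ branch vertices are external in neither (their down-neighbours are now absorbed and their up-neighbours already lie in $C_{s_2}$ by properness of $\branch_3$); and $\branch_2'$ sits at levels $>t_3'$, so it cannot affect externality at levels $\le t_3'$. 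Granting this, for $j\in\{t_3,\ldots,t_3'\}$ the cut $j$ of $\lbranch(C_{s_3},\cdot)$ has exactly the weight of the cut $j$ of $\branch_3$, which is $\ge\omega(\lborder(C_{s_3}))$ by the bottleneck property, so (iii) holds for all such cuts.

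\textbf{The main obstacle} is the remaining case $j<t_3$, where $\lbranch(C_{s_2},j)$ is no longer proper (because $\branch_3=\lbranch(C_{s_2},t_3)$ is maximal) and Observation~\ref{obs:o2} no longer bounds the cut weight directly. Here I would have to argue, purely from maximality, that pushing a maximal branch below its proper range cannot create a cut lighter than the bottleneck: the external vertices forced by non-properness keep the weight at least $\omega(\lborder(C_{s_3}))$. Making this quantitative --- ruling out that external mass ``merges'' into something lighter than the bottleneck as the branch descends past level $t_3$ --- is the delicate point, and it is the part I expect to demand the most care. Once it is settled, the symmetric right-border statement and an induction on the number of iterations (with base case the nested expansion produced by the initialization in Step~I.2) complete the proof.
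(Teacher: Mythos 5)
Your overall plan --- verify (i)--(iii) separately, get (ii) from Lemma~\ref{lem:condition(ii)}, and prove (i) by a case analysis over the levels using Lemma~\ref{lem:condition(i)}, Observations~\ref{obs:o1}--\ref{obs:o2} and the bottleneck properties of $t_2'$ and $t_3'$ --- matches the paper's proof in structure. The genuine problem is exactly the one you flag yourself: condition (iii) for cuts $j<t_3$, below the lower extremity of the maximal branch $\branch_3$. Your ``preservation of external structure'' argument only identifies the cuts of left branches of $C_{s_3}$ with cuts of $\branch_3$ for $j\in\{t_3,\ldots,t_3'\}$, and you explicitly admit you have no argument below $t_3$. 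This is not a cosmetic omission: condition (iii) is what drives the induction (it is what lets the next iteration conclude that absorbing a branch up to its bottleneck does not decrease the border weight, which Cases~2 and~3 of the proof of (i) depend on), so leaving it open leaves the lemma unproved. The paper closes this case not by a local cut-weight computation but by a short composition argument: if some cut $t_3''<t_3'$ of a left branch $\branch_3''=\lbranch(C_{s_3},i)$ were strictly lighter than the cut at $r(\lborder(C_{s_3}))$, then, because $\branch_3$ is \emph{maximal}, the union of $\branch_3'$ and $\branch_3''$ is contained in $\branch_3$, so $t_3''$ would also be a cut of $\branch_3$ lighter than $t_3'$ --- contradicting the choice of $t_3'$ as a bottleneck of $\branch_3$ in Step~RL.3. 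In other words, maximality is used to show that the ``new territory'' you worry about is already accounted for: every cut of every left branch of $C_{s_3}$ is a cut of the maximal branch of $C_{s_2}$, and the bottleneck in Step~RL.3 was taken over all of those.

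A secondary soft spot: for (i) on the middle range you invoke condition (i) for $C_{s_2}$, asserting that it follows from Lemma~\ref{lem:condition(i)} ``and the same argument applied to the single right-extension of Step~L.2.'' This is not automatic. Step~L.2 can introduce new external vertices at level $t_1+1$ into the right border, so $\omega(\rborder(C_{s_2}))$ may exceed $\omega(\rborder(C_{s_1}))$, and (i) for $C_{s_1}$ does not transfer directly; the paper in fact never claims (i) for $C_{s_2}$. Instead it writes $V_i\cap C_{s_3}$ as the disjoint union of $V_i\cap C_{s_1}$ and $V_i\cap V(\branch_2')$ and splits into subcases according to which border of $C_{s_1}$ is lighter, using the bottleneck property of $t_2'$ to absorb the contribution of $\branch_2'$. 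You would need either to reproduce that subcase analysis or to actually prove (i) for $C_{s_2}$.
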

\begin{proof}
As in the previous proofs, we continue without loss of generality with the assumption that $\omega(\lborder(C_{s_0}))>\omega(\rborder(C_{s_0}))$ at the beginning of the iteration of $\AlgConnectedPathwidth$ we consider. Let $C_{s_1}$ and $C_{s_2}$ be the expansions obtained at the end of Steps~L.1 and~L.2 of $\AlgConnectedPathwidth$, respectively.

First we prove that the expansion $C_{s_3}$ satisfies (i). It holds $r(\lborder(C_{s_3}))\leq t_3'\leq t_1<t_2'\leq l(\rborder(C_{s_3}))$ and $t_1\leq r(\lborder(C_{s_0}))$ (see also Figure~\ref{fig:cases}). This follows from the formulation of $\AlgProcessLeftBranch$ and from the definition of a branch, because for each $i$, $t_3'<i<t_2'$, $V_i\cap\border(C_{s_3})=\emptyset$, which is a consequence of the fact that $\branch_2'$ and $\branch_3'$ are proper.

Let $i\in\{r(\lborder(C_{s_3})),\ldots,l(\rborder(C_{s_3}))\}$. We consider several cases shown in Figure~\ref{fig:lemma11}, where $C_{s_0}$ with its left and right borders is given ($C_{s_1}=C_{s_0}\cup V(\branch_1)$).
\begin{figure}[ht]
	\begin{center}
	\input{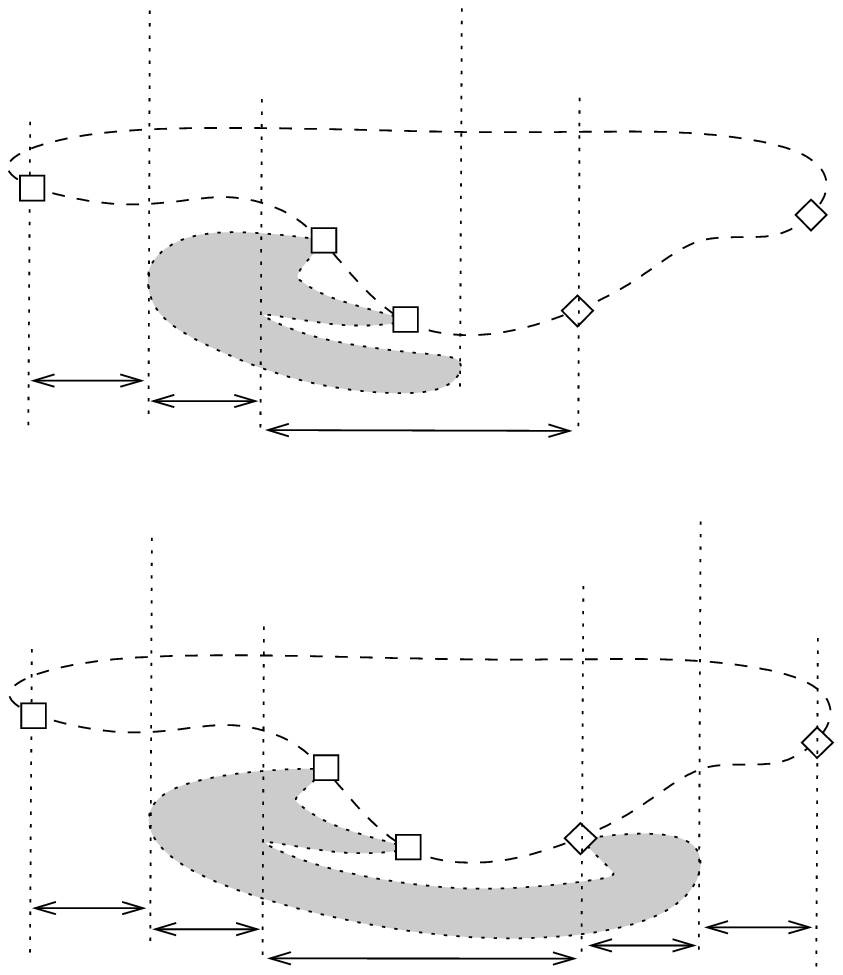_t}
	\caption{$C_{s_0}$ together with its borders;
           (a) $t_2'\leq l(\rborder(C_{s_1}))$;
           (b) $t_2'>l(\rborder(C_{s_1}))$}
	\label{fig:lemma11}
	\end{center}
\end{figure}
Note that the example is constructed so that Case~1 and Case~5 (if applies) are `non-empty', which occurs if the branches $\branch_p'$ have no external vertices in $V_{t_p'}$, $p=2,3$. In each case we prove that $\omega(V_i\cap C_{s_3})\geq\omega(\lborder(C_{s_3}))$ or $\omega(V_i\cap C_{s_3})\geq\omega(\rborder(C_{s_3}))$.

\begin{list}{}{\leftmargin=3em}
\item[Case 1:] $r(\lborder(C_{s_3}))\leq i<t_3'$. By construction, $V_p\cap C_{s_3}=V_p\cap C_{s_0}$ for each $p<t_3'$, $V_p\cap\lborder(C_{s_3})=V_p\cap\lborder(C_{s_0})$ for each $p<t_3'-1$ and $V_{t_3'-1}\cap\lborder(C_{s_3})\subseteq V_{t_3'-1}\cap\lborder(C_{s_0})$. By (ii) for $C_{s_0}$ (note that $i<t_1\leq r(\lborder(C_{s_0}))$), $\omega(V_i\cap C_{s_0})\geq\sum_{p\leq i}\omega(V_p\cap\lborder(C_{s_0}))$. Thus,
\begin{equation} \label{eq:bound_on_C_s_0}
\omega(V_i\cap C_{s_3})=\omega(V_i\cap C_{s_0})\geq\sum_{p\leq i}\omega(V_p\cap\lborder(C_{s_3})).
\end{equation}

The fact that $r(\lborder(C_{s_3}))\leq i$ implies
\[\bigcup_{p\leq i}V_p\cap\lborder(C_{s_3})=\lborder(C_{s_3}),\]
which by~(\ref{eq:bound_on_C_s_0}) gives $\omega(V_i\cap C_{s_3})\geq\omega(\lborder(C_{s_3}))$.

\item[Case 2:] $t_3'\leq i\leq t_1$. By Lemma~\ref{lem:branches} and by the formulation of Step~L.1 of $\AlgConnectedPathwidth$, $V_i\cap C_{s_3}=(V_i\cap\wC_{s_0})\cup(V_i\cap V(\branch_3'))$. Since $\wC_{s_0}$ and $V(\branch_3')$ are (by the definition) disjoint, we obtain that $\omega(V_i\cap C_{s_3})=\omega(V_i\cap\wC_{s_0})+\omega(V_i\cap V(\branch_3'))$. By assumption, $C_{s_0}$ satisfies~(ii), and $i\leq t_1\leq r(\lborder(C_{s_0}))$. Thus, by~(ii') and by Observations~\ref{obs:o1} and~\ref{obs:o2},
\[\omega(V_i\cap C_{s_3})\geq\sum_{p<i}\omega(V_p\cap\lborder(C_{s_0}))+\omega(V_i\cap V(\branch_3'))\geq \Ext(\lbranch(C_{s_2},i))\geq\Ext(\lbranch(C_{s_2},t_3'))=\omega(\lborder(C_{s_3})),\]
because $t_3'$ is a bottleneck of $\branch_3$.

\item[Case 3:] $t_1<i\leq l(\rborder(C_{s_1}))$. By Lemma~\ref{lem:branches}, by the choice of $i$ and by the fact that $\branch_2$, and therefore $\branch_2'$, is proper, the set $V_i\cap C_{s_3}$ is an union of two disjoint sets $V_i\cap C_{s_1}$ and $V_i\cap V(\branch_2')$, which gives
\begin{equation} \label{eq:three_disjoint1}
\omega(V_i\cap C_{s_3})=\omega(V_i\cap C_{s_1})+\omega(V_i\cap V(\branch_2')).
\end{equation}
By Lemma~\ref{lem:condition(i)}, $C_{s_1}$ satisfies (i), which in particular gives $\omega(V_i\cap C_{s_1})\geq\min\{\omega(\lborder(C_{s_1})),\omega(\rborder(C_{s_1}))\}$.

If $\omega(\lborder(C_{s_1}))\geq\omega(\rborder(C_{s_1}))$, then
\[\omega(V_i\cap C_{s_1})+\omega(V_i\cap V(\branch_2'))\geq\omega(\rborder(C_{s_1}))+\omega(V_i\cap V(\branch_2'))\geq\omega(\rborder(C_{s_3})),\]
because $t_2'$ is a bottleneck of $\branch_2$.
By (\ref{eq:three_disjoint1}), $\omega(V_i\cap C_{s_3})\geq\omega(\rborder(C_{s_3}))$.

If $\omega(\lborder(C_{s_1}))<\omega(\rborder(C_{s_1}))$, then by~(\ref{eq:three_disjoint1})
\[\omega(V_i\cap C_{s_3}))\geq\omega(\lborder(C_{s_1}))+\omega(V_i\cap V(\branch_2'))\geq\omega(\lborder(C_{s_3}))+\omega(V_i\cap V(\branch_2'))\geq\omega(\lborder(C_{s_3})),\]
where $\omega(\lborder(C_{s_1}))\geq\omega(\lborder(C_{s_3}))$ follows from the fact that $t_3'$ is a bottleneck of $\branch_3$.

\item[Case 4:] $l(\rborder(C_{s_1}))<i\leq t_2'$. Note that if $l(\rborder(C_{s_1}))>t_2'$, then there is nothing to prove (see Figure~\ref{fig:lemma11}(a)). Otherwise the situation is symmetric to Case~2 and (due to $\rborder(C_{s_1})\subseteq\rborder(C_{s_0})$) leads to inequality $\omega(V_i\cap C_{s_3})\geq\omega(\rborder(C_{s_3}))$.

\item[Case 5:] $t_2' <i\leq l(\rborder(C_{s_3}))$. The proof is analogous to Case~1 and gives $\omega(V_i\cap C_{s_3})\geq\omega(\rborder(C_{s_3}))$.
\end{list}
Since in each case we obtain that $\omega(V_i\cap C_{s_3})\geq\omega(\lborder(C_{s_3}))$ or $\omega(V_i\cap C_{s_3})\geq\omega(\rborder(C_{s_3}))$, where $i\in\{r(\lborder(C_{s_3})),\ldots,l(\rborder(C_{s_3}))\}$, we have proved (i) for $C_{s_3}$.

Lemma~\ref{lem:condition(ii)} implies that $C_{s_3}$ satisfies (ii).

Note that (iii) for $C_{s_3}$ follows directly from Step RL.3 of $\AlgConnectedPathwidth$. Indeed, if $t_3'$ is not a bottleneck of any left branch $\branch_3''=\lbranch(C_{s_3},i)$, i.e. $t_3''<t_3'$ is its bottleneck, then $t_3''$ is also a bottleneck of $\branch_3$ from Step~RL.3 of $\AlgConnectedPathwidth$ (the union of $\branch_3'$ and $\branch_3''$ is a subgraph of $\branch_3$, because $\branch_3$ is maximal), which contradicts the choice of the branch $\branch_3'$. Similar argument holds for the right border of $C_{s_3}$.
\end{proof}

\begin{lemma} \label{lem:pw_bounds_borders}
It holds $\omega(B_j)\leq 2\cdot\width(\cG)$ for each $j=1,\ldots,m$.
\end{lemma}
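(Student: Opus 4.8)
The plan is to bound $\omega(B_j)$ by splitting $B_j$ into a ``far'' and a ``near'' side relative to the set $A_j$ that was just added, and to show each side has weight at most $\width(\cG)$. The one observation driving everything is that $\omega(V_\ell)=|X_\ell|\le\width(\cP)+1=\width(\cG)$ for every layer index $\ell$, so each layer of $\cG$ carries total weight at most $\width(\cG)$. Fix $j$ and suppose $C_j$ was computed by $\ER(i)$, so $A_j\subseteq V_\ell$ with $\ell=i+1$; the case of $\EL$ is symmetric. Since $\lborder(C_j)\subseteq V_1\cup\cdots\cup V_{r(\lborder(C_j))}$ while $\rborder(C_j)\cup A_j\subseteq V_\ell\cup\cdots\cup V_d$ and $r(\lborder(C_j))<\ell$ (Lemma~\ref{lem:borders_separated}), the left border is disjoint from $\rborder(C_j)\cup A_j$, giving the clean identity
\[\omega(B_j)=\omega(\lborder(C_j))+\omega(\rborder(C_j)\cup A_j).\]
The far side is immediate: every expansion in the current iteration satisfies (ii) (Lemmas~\ref{lem:nested} and~\ref{lem:condition(ii)}), so taking the index $r(\lborder(C_j))$ in (ii) yields $\omega(\lborder(C_j))\le\omega(V_{r(\lborder(C_j))}\cap C_j)\le\width(\cG)$.

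The heart of the proof is the near side, $\omega(\rborder(C_j)\cup A_j)\le\width(\cG)$. From $\rborder(C_j)\subseteq\rborder(C_{j-1})\cup A_j$ together with $\rborder(C_j)\cap(V_1\cup\cdots\cup V_{\ell-1})=\emptyset$ (for $\ER$ invoked inside $\AlgProcessRightBranch$ this is the border--moving property of Lemma~\ref{lem:moving_borders}; otherwise $\rborder(C_{j-1})$ already lies in $V_\ell\cup\cdots\cup V_d$), I obtain $\rborder(C_j)\cup A_j\subseteq(\rborder(C_{j-1})\cap(V_\ell\cup\cdots\cup V_d))\cup A_j$, and this is a disjoint union because $\rborder(C_{j-1})\subseteq C_{j-1}$ whereas $A_j\subseteq V_\ell\setminus C_{j-1}$. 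Hence $\omega(\rborder(C_j)\cup A_j)\le\omega(\rborder(C_{j-1})\cap(V_\ell\cup\cdots\cup V_d))+\omega(A_j)$, and the plan is to bound the first summand by $\omega(V_\ell\cap C_{j-1})$ so that, since $A_j\subseteq V_\ell\setminus C_{j-1}$, the two terms telescope to $\omega(V_\ell)\le\width(\cG)$.

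Establishing $\omega(\rborder(C_{j-1})\cap(V_\ell\cup\cdots\cup V_d))\le\omega(V_\ell\cap C_{j-1})$ is the step I expect to be the main obstacle, and it splits according to whether $\ell$ sits outside or inside the borders of $C_{j-1}$. If $\ell\ge l(\rborder(C_{j-1}))$ — the situation for extensions performed by $\AlgProcessRightBranch$, where $i=l(\rborder(C_{j-1}))$ — the right part of (ii) for $C_{j-1}$ at the index $\ell$ gives exactly $\omega(V_\ell\cap C_{j-1})\ge\sum_{p\ge\ell}\omega(V_p\cap\rborder(C_{j-1}))=\omega(\rborder(C_{j-1})\cap(V_\ell\cup\cdots\cup V_d))$. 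If instead $\ell<l(\rborder(C_{j-1}))$ — the gap-filling extension of Step~L.2, where $\rborder(C_{j-1})\cap(V_\ell\cup\cdots\cup V_d)=\rborder(C_{j-1})$ and $\ell$ is an interior index — I invoke condition (i) for $C_{j-1}$, which here is the expansion ending Step~L.1 and therefore satisfies (i) by Lemma~\ref{lem:condition(i)}: this gives $\omega(V_\ell\cap C_{j-1})\ge\min\{\omega(\lborder(C_{j-1})),\omega(\rborder(C_{j-1}))\}$, and since Step~L.1 runs only when the left border is heavier we have $\omega(\rborder(C_{j-1}))\le\omega(\lborder(C_{j-1}))$ by~(\ref{eq:borders_of_C_s_1}), so the minimum equals $\omega(\rborder(C_{j-1}))$ and the desired bound again holds.

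In either case the telescoping gives $\omega(\rborder(C_j)\cup A_j)\le\omega(V_\ell)\le\width(\cG)$, and combined with the far-side bound this proves $\omega(B_j)\le 2\,\width(\cG)$. The base case $j=1$ is trivial since $|B_1|=1$, the extensions produced during the initialization (Step~I.2) are covered by the $\AlgProcessRightBranch$ case, and the $\EL$ steps — including the gap-filling Step~R.2 — are handled by the mirror-image argument, with the roles of the two borders exchanged and the inequality analogous to~(\ref{eq:borders_of_C_s_1}) for the branch of Step~R.1 supplying that the left border is the lighter one. Thus the only genuinely delicate point is recognizing that the same telescoping estimate is fed by (ii) when the active layer lies beyond the near border and by (i) when it lies between the two borders.
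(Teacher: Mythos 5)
Your proof is correct and follows essentially the same route as the paper's: after invoking the nestedness machinery (Lemmas~\ref{lem:nested} and~\ref{lem:condition(ii)}) to get condition (ii) for every expansion, you bound the far border by $\width(\cG)$ via (ii) and bound $A_j$ together with the near border by $\omega(V_\ell)\leq\width(\cG)$ using the disjointness of $A_j$ and $V_\ell\cap C_{j-1}$ inside the layer $V_\ell$, with exactly the paper's case split --- (ii) of $C_{j-1}$ for extensions inside $\AlgProcessLeftBranch$/$\AlgProcessRightBranch$, and (i) of $C_{j-1}$ (via Lemma~\ref{lem:condition(i)}) for the gap-filling Steps~L.2/R.2. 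The only substantive point you leave implicit is the base case of the nestedness induction --- that the expansion at the end of Step~I.2 is nested (empty left border, right border contained in a single $V_i$, and $a_0'$ a bottleneck), which the paper verifies explicitly; your choice to always attach $A_j$ to the border on the side being extended and to resolve the paper's $\min$ through the weight comparison that triggered Step~L.1 or~R.1 is a cosmetic variant of the paper's ``one of the two inequalities in (\ref{eq:needed}) holds'' formulation.
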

\begin{proof}
An expansion obtained at the end of Step~I.2 of $\AlgConnectedPathwidth$ is nested. Indeed, its left border is empty which implies (i) and gives (ii) and~(iii) for the left border. Condition (iii) for the right border follows from $a_0'$ being the bottleneck of the branch used in Step~I.2, and (ii) trivially holds, because the right border is contained in a single set $V_i$. Therefore, using an induction (on the number of iterations of $\AlgConnectedPathwidth$) we obtain by Lemma~\ref{lem:nested} that any expansion from the beginning of an iteration of $\AlgConnectedPathwidth$ is nested.
Thus, Lemma~\ref{lem:condition(ii)}, implies that $C_j$ satisfies (ii) for each $j=1,\ldots,m$.

Let $j\in\{2,\ldots,m\}$. First note that $\max\{\omega(\rborder(C_j)),\omega(\lborder(C_j))\}\leq\width(\cG)$. Indeed, by (ii), where $i=l(\rborder(C))$,
\begin{equation} \label{eq:bounded_by_width}
\omega(V_i\cap C_j)\geq\sum_{p\geq i}\omega(V_p\cap\rborder(C_j))=\omega(\rborder(C_j)).
\end{equation}
Since $\omega(V_i\cap C_j)\leq\omega(V_i)\leq\width(\cG)$, we obtain $\omega(\rborder(C_j))\leq\width(\cG)$. Analogously one can prove that $\omega(\lborder(C_j))\leq\width(\cG)$.

Now we give the upper bound on $\omega(B_j)$. The claim clearly follows for $j=1$ so assume in the following that $j>1$.
By construction, $B_j=\border(C_j)\cup A_j$. Thus, due to Lemma~\ref{lem:lr_borders_correct} we obtain that if
\begin{equation} \label{eq:needed}
\omega(A_j\cup\lborder(C_{j}))\leq\width(\cG)\textup{ or }\omega(A_j\cup\rborder(C_{j}))\leq\width(\cG),
\end{equation}
then, by~(\ref{eq:bounded_by_width}), the lemma follows, so we now prove that one of those inequalities holds.

By construction, $A_j\subseteq V_i$ for some $i\in\{1,\ldots,d\}$, and
\begin{equation} \label{eq:A_j_C_j-1_disjoint}
A_j\cap(V_i\cap C_{j-1})=\emptyset,
\end{equation}
because $A_j\cap C_{j-1}=\emptyset$. We consider two cases:
\begin{list}{}{\addtolength{\leftmargin}{1em}}
\item[Case 1:] $C_j$ has been computed in Step~L.2 or in Step~R.2 of $\AlgConnectedPathwidth$. 
Assume without loss of generality that the former occurs. By construction, $i\in\{r(\lborder(C_{j-1})),\ldots,l(\rborder(C_{j-1}))\}$ and, by Lemma~\ref{lem:condition(i)}, $C_{j-1}$ satisfies~(i). Hence, by~(\ref{eq:A_j_C_j-1_disjoint}),
\begin{equation} \label{eq:C_j_by_C_j-1}
\min\{\omega(A_j\cup\lborder(C_{j-1})),\omega(A_j\cup\rborder(C_{j-1}))\}\leq\omega(A_j\cup(V_i\cap C_{j-1}))\leq\omega(V_i)\leq\width(\cG).
\end{equation}
By the formulation of Step~$\EL$, $\lborder(C_j)\subseteq A_j\cup\lborder(C_{j-1})$, which gives $A_j\cup\lborder(C_j)\subseteq A_j\cup\lborder(C_{j-1})$. Similarly, $A_j\cup\rborder(C_j)\subseteq A_j\cup\rborder(C_{j-1})$. Consequently, $\omega(A_j\cup\lborder(C_j))\leq\omega(A_j\cup\lborder(C_{j-1}))$ and $\omega(A_j\cup\rborder(C_j))\leq\omega(A_j\cup\rborder(C_{j-1}))$.
Equation~(\ref{eq:C_j_by_C_j-1}) implies (\ref{eq:needed}), which gives the desired bound on $\omega(B_j)$.

\item[Case 2:] $C_j$ has been computed by $\AlgProcessLeftBranch$ or $\AlgProcessRightBranch$.
Since both cases are analogous, assume again that the former occurs. By Lemma~\ref{lem:moving_borders}, $C_j$ moves the left border of $C_{j-1}$, which gives that $i\leq r(\lborder(C_{j-1}))$. Moreover, $\lborder(C_j)\subseteq A_j\cup\bigcup_{p\leq i}V_p\cap\lborder(C_{j-1})$. Thus, $A_j\cup\lborder(C_j)\subseteq A_j\cup\bigcup_{p\leq i}V_p\cap\lborder(C_{j-1})$, and therefore
\[\omega(A_j\cup\lborder(C_j))\leq\omega(A_j)+\sum_{p\leq i}\omega(V_p\cap\lborder(C_{j-1})).\]
Then, by~(\ref{eq:A_j_C_j-1_disjoint}) and by (ii) for $C_{j-1}$,
\[\omega(A_j)+\sum_{p\leq i}\omega(V_p\cap\lborder(C_{j-1}))\leq\omega(A_j)+\omega(V_i\cap C_{j-1})\leq\omega(V_i)\leq\width(\cG),\]
and~(\ref{eq:needed}) follows.
\end{list}
\end{proof}

\begin{lemma} \label{lem:bound}
If $\cC=(Z_1,\ldots,Z_m)$ is a path decomposition calculated by $\AlgConnectedPathwidth$ for the given $G$ and $\cP$, then $\width(\cC)\leq 2\cdot\width(\cP)+1$.
\end{lemma}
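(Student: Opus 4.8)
The plan is to obtain the bound as a short deduction from Lemma~\ref{lem:pw_bounds_borders}, which has already established the only substantive estimate; all that remains is to translate between the weighted derived graph $\cG$ and the original graph $G$ via two definitional facts. First I would unwind the definition of width: since $\cC=(Z_1,\ldots,Z_m)$, we have $\width(\cC)=\max_{j=1,\ldots,m}|Z_j|-1$, so it suffices to prove that $|Z_j|\le 2\cdot\width(\cP)+2$ for every $j\in\{1,\ldots,m\}$.

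Next I would bound $|Z_j|$ by $\omega(B_j)$. By the formulation of $\AlgConnectedPathwidth$, $Z_j=\bigcup_{v(H)\in B_j}V(H)$, and by definition $\omega(v(H))=|V(H)|$; hence subadditivity of cardinality under unions gives
\[
|Z_j|=\Bigl|\bigcup_{v(H)\in B_j}V(H)\Bigr|\le\sum_{v(H)\in B_j}|V(H)|=\omega(B_j).
\]
(The remark preceding this section in fact asserts equality, the sets $V(H)$ being vertex-disjoint over $B_j$, but only this inequality is needed here.) Finally I would apply Lemma~\ref{lem:pw_bounds_borders}, giving $\omega(B_j)\le 2\cdot\width(\cG)$, together with the identity $\width(\cG)=\width(\cP)+1$ from the definition of the derived graph. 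Chaining these yields $|Z_j|\le\omega(B_j)\le 2\cdot\width(\cG)=2\bigl(\width(\cP)+1\bigr)=2\cdot\width(\cP)+2$, and taking the maximum over $j$ and subtracting one gives $\width(\cC)\le 2\cdot\width(\cP)+1$.

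Since the difficult combinatorial work---tracking the nestedness invariant through an iteration and bounding the border weights---was carried out in Lemmas~\ref{lem:condition(i)}--\ref{lem:pw_bounds_borders}, there is essentially no obstacle remaining here. The only subtlety worth stating cleanly is the passage from $\cG$ back to $G$: each derived vertex $v(H)\in B_j$ contributes $|V(H)|=\omega(v(H))$ original vertices to the bag $Z_j$, so the weight $\omega(B_j)$ bounds the true bag size $|Z_j|$, which is precisely what converts the weighted estimate of Lemma~\ref{lem:pw_bounds_borders} into a statement about $\width(\cC)$.
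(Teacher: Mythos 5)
Your proposal is correct and follows essentially the same route as the paper: it combines Lemma~\ref{lem:pw_bounds_borders} with the identity $\width(\cG)=\width(\cP)+1$ and the relation between $|Z_j|$ and $\omega(B_j)$ (the paper asserts equality just before Lemma~\ref{lem:pw_bounds_borders}, while you carefully use only the inequality $|Z_j|\le\omega(B_j)$, which is all that is needed). No gaps.
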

\begin{proof}
By Lemma~\ref{lem:pw_bounds_borders}, $\omega(B_j)\leq2\cdot\width(\cG)$. By the definition, $\width(\cC)=\max\{\omega(B_j)\colon j=1,\ldots,m\}-1$. Thus, by the definition, $\width(\cC)\leq2\cdot\width(\cG)-1=2\cdot\width(\cP)+1$.
\end{proof}

\begin{lemma} \label{lem:running_time}
Let $G$ be a simple connected graph and let $\cP=(X_1,\ldots,X_d)$ be its path decomposition of width $k$. The running time of $\AlgConnectedPathwidth$ executed for $G$ and $\cP$ is $O(dk^2)$.
\end{lemma}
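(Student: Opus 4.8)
The plan is to account separately for three sources of cost: constructing the derived graph $\cG$, the total number of executions of Steps~$\EL$ and~$\ER$, and the amortized cost of each such execution together with the branch and bottleneck computations. First I would bound the size of $\cG$. Since each bag satisfies $|X_i|\le k+1$ and the vertices of $V_i$ correspond to the connected components of $G[X_i]$, we have $|V_i|\le k+1$, hence $|V(\cG)|\le (k+1)d=O(dk)$ and $\omega(V(\cG))=\sum_i|X_i|=O(dk)$. Edges of $\cG$ join only consecutive layers, so $|E(\cG)|\le\sum_{i=1}^{d-1}|V_i|\,|V_{i+1}|\le (d-1)(k+1)^2=O(dk^2)$. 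Building $\cG$ then costs $O(dk^2)$: for each $i$ one computes the components of $G[X_i]$ (a graph on $\le k+1$ vertices, so $O(k^2)$ time) and the edges between $V_i$ and $V_{i+1}$ by scanning $X_i\cap X_{i+1}$, each shared vertex contributing one edge.

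Next I would bound the number of calls to $\EL$ and $\ER$, which equals $m-1$, by $O(dk)$. The key observation is that a vertex of $\cG$ is added to the expansion exactly once (the $C_j$ form a strictly increasing chain) and, once a vertex of $C$ ceases to lie in $\border(C)$, it remains in the interior forever, since $C$ only grows and a vertex with no neighbour outside $C$ keeps this property; thus each vertex undergoes at most one border-to-interior transition. I would then charge each call. Every call made inside $\AlgProcessLeftBranch$ or $\AlgProcessRightBranch$ moves a border (Lemma~\ref{lem:moving_borders}), and by Lemmas~\ref{lem:lr_borders_correct} and~\ref{lem:borders_separated} the border vertices of the vacated layer cannot reappear in the opposite border, so they must leave $\border(C)$ altogether; hence each such call is charged injectively to a fresh border-to-interior transition. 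Every call executed in Step~L.2 or~R.2 is charged to its iteration, and since each iteration strictly enlarges the expansion (Lemma~\ref{lem:branches}) there are at most $|V(\cG)|$ iterations. Both quantities are $O(dk)$, whence the number of calls is $O(dk)$.

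The delicate part is showing that the total running time of all calls, together with the branch searches, is $O(dk^2)$ rather than the naive product $O(dk)\cdot O(k^2)=O(dk^3)$; this is the main obstacle. I would maintain, for every captured vertex, its list of still-uncaptured neighbours split by direction and a counter of their number, together with the lists $\border(C)\cap V_i$. Whenever a vertex $w$ is captured, I scan its incidence list once to initialise its own lists and to decrement the counters of its already-captured neighbours, deleting the traversed edges; summed over all captures this is $\sum_w\deg_{\cG}(w)=O(|E(\cG)|)=O(dk^2)$, and border membership is then maintained in $O(1)$ per deleted edge. A call $\EL(i)$ (symmetrically $\ER(i)$) enumerates $\border(C)\cap V_i$ in $O(|V_i|)=O(k)$ time and follows only live left-edges, each of which is consumed by a capture; over the $O(dk)$ calls the enumeration costs $O(dk)\cdot O(k)=O(dk^2)$ and the edge traversals cost $O(|E(\cG)|)=O(dk^2)$.

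Finally I would argue that locating the maximal branches and their bottlenecks in Steps~I.2, L.1/R.1 and~RL.3 fits the same budget. A branch is found by sweeping the uncaptured vertices reachable from the current border through progressive paths, one layer at a time, and its bottleneck by evaluating $\omega(\Ext(\branch))$ along the admissible cuts; by Observation~\ref{obs:o2} and Lemma~\ref{lem:one_branch} the swept region is exactly the part subsequently absorbed by $\AlgProcessLeftBranch$ or $\AlgProcessRightBranch$, so the exploration cost is again charged to captures and the per-layer weight bookkeeping to the $O(dk^2)$ edge budget. Combining the three contributions yields the claimed $O(dk^2)$ bound; the step requiring the most care is precisely the amortised analysis of the per-call work and of the branch searches, since a single call or search may in isolation touch $\Theta(k^2)$ edges.
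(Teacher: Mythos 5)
The first three parts of your analysis are sound and match the paper's accounting: $|V(\cG)|=O(dk)$, $|E(\cG)|=O(dk^2)$, the number of executions of Steps~$\EL$/$\ER$ is $O(dk)$ (the paper gets this more directly from $C_{j}\varsubsetneq C_{j+1}$), and the amortized charging of per-call work to edge deletions is a legitimate way to keep the total step cost at $O(dk^2)$.

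The gap is in your last paragraph, in the claim that ``the swept region is exactly the part subsequently absorbed by $\AlgProcessLeftBranch$ or $\AlgProcessRightBranch$.'' That is true only for Steps~L.1/R.1, where the procedure is called with the maximal extent $t_1$ itself. In Steps~I.2 and~RL.3 the algorithm must first discover the entire \emph{maximal} branch $\branch_2=\rbranch(C,t_2)$ (resp.\ $\branch_3=\lbranch(C,t_3)$) in order to evaluate all its cuts and locate the bottleneck $t_2'$ (resp.\ $t_3'$), but it then absorbs only the truncated branch $\branch_2'$ (resp.\ $\branch_3'$) up to that bottleneck. The portion of the maximal branch beyond the bottleneck is swept without any of its vertices being captured, so your charge to captures does not cover it, and the same region can be re-swept in the RL.3 step of every subsequent iteration. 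Concretely, if the left branch is a long progressive path of unit-weight vertices, every cut has the same weight, the bottleneck may be the trivial cut $r(\lborder(C))$, and as long as the main loop keeps choosing the R-steps the full path of length $\Theta(d)$ is re-explored in each of up to $\Theta(dk)$ iterations, giving $\Theta(d^2k)$, which exceeds $O(dk^2)$ when $d\gg k$. The paper closes exactly this hole by memoizing: it observes (via Observation~\ref{obs:o1}) that overlapping branches agree layer by layer, with identical vertex sets and cut weights on the common indices, so the computation for the un-absorbed tail is never repeated and each edge of $\cG$ is examined $O(1)$ times over all branch searches. Your argument needs this (or an equivalent potential/caching argument) to justify the $O(dk^2)$ bound for the branch and bottleneck computations.
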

\begin{proof}
Since each edge of $G$ is contained in one of the bags of $\cP$, $|E(G)|\leq dk^2$. The number of vertices and edges in $\cG$ is $O(kd)$ and $O(dk^2)$, respectively. Thus, the complexity of constructing $\cG$ is $O(dk^2)$.

Here we assume that each branch $\branch$ used by $\AlgConnectedPathwidth$ is encoded as a linked list such that each element of this list is a non-empty set $V_i\cap V(\branch)$, $i\in\{1,\ldots,d\}$.
If a branch is given, then the weights of all its cuts can be calculated in time linear in the number of edges and vertices of the branch. The computation of a branch is done by the execution of the procedure $\AlgProcessLeftBranch$ or $\AlgProcessRightBranch$ (due to symmetry assume that the former occurs), and if $C_j$ and $C_{j'}$ are the expansions from the beginning and from the end, respectively, of a particular execution of $\AlgProcessLeftBranch$, then the vertex set of the branch is $\lborder(C_{j})\cup(C_{j'}\setminus C_j)$ for the corresponding left branch. The time of finding any branch $\branch$ in an iteration of $\AlgConnectedPathwidth$ is therefore $O(|E(\branch)|)$. Also note that, while recording a subset $A_j\subseteq V_i$ of vertices of $\branch$ during the execution of $\AlgProcessLeftBranch$, the weight of cut $i$ of the corresponding branch can be efficiently obtained, because it is equal to $\omega(\lborder(C_j))$. Moreover, for each $j'>j$ if $C_{j'}$ has been computed in the same execution of $\AlgProcessLeftBranch$, then the weight of cut $i$ of the branch corresponding to $C_{j'}$ remains $\omega(\lborder(C_j))$. Thus, the complexity of calculating the weight of all cuts of $\branch$, and thus finding its bottleneck, is $O(|E(\branch)|)$.

Whenever two branches overlap, we do not have to repeat the computation, because due to Observation~\ref{obs:o1}, one is contained in the other, and their vertex sets and cuts are identical for common induces $i$. Therefore, the time complexity of determining all branches and their bottlenecks is $O(dk^2)$. This includes the complexity of all executions of the procedures $\AlgProcessLeftBranch$ and $\AlgProcessRightBranch$, because, by Lemma~\ref{lem:branches}, the procedure `follows' the previously calculated branches by including their vertices into the expansions $C_j$. It holds that $m\leq kd$, because  (by Lemma~\ref{lem:branches} and by the formulation of Steps~$\EL$ and~$\ER$) $C_j\subseteq C_{j+1}$ and $C_j\neq C_{j+1}$ for each $j=1,\ldots,m-1$. By Lemma~\ref{lem:bound}, $\omega(B_j)=O(k)$ for each $j=1,\ldots,m$. Thus, $\sum_{1\leq j\leq m}|Z_j|=O(dk^2)$. Therefore, the overall complexity of $\AlgConnectedPathwidth$ is $O(dk^2)$.
\end{proof}

\begin{theorem} \label{thm:algorithm}
There exists an algorithm that for the given connected graph $G$ and its path decomposition $\cP=(X_1,\ldots,X_d)$ of width $k$ returns a connected path decomposition $\cC=(Z_1,\ldots,Z_m)$ such that $\width(\cC)\leq 2k+1$ and $m\leq kd$. The running time of the algorithm is $O(dk^2)$.
\end{theorem}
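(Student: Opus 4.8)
The plan is to exhibit $\AlgConnectedPathwidth$ as the claimed algorithm and then verify its four asserted properties one at a time, in each case reducing to a lemma already proved in Sections~\ref{sec:algorithm} and~\ref{sec:bound}. Since by hypothesis $\width(\cP)=k$, the statement is essentially a packaging of the separately established facts about correctness, width, bag count, and running time; the only substantive content is noticing which lemma supplies each piece.

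First I would settle correctness: for the given connected graph $G$ and its path decomposition $\cP$, the procedure $\AlgConnectedPathwidth$ terminates and outputs a sequence $\cC=(Z_1,\ldots,Z_m)$ that is a connected path decomposition of $G$. This is exactly Lemma~\ref{lem:correctness} (which itself rests on viewing $\AlgConnectedPathwidth$ as a specialization of $\AlgSimpleConnectedPathwidth$ via Lemma~\ref{lem:scp_correctness}, together with the termination guarantee $C_{s_3}\neq C_{s_0}$ from Lemma~\ref{lem:branches}). Next, for the width guarantee, Lemma~\ref{lem:bound} yields $\width(\cC)\leq 2\cdot\width(\cP)+1$, and substituting $\width(\cP)=k$ gives the desired $\width(\cC)\leq 2k+1$. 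It is worth flagging in the write-up that this is precisely the step where the relative bound of Lemma~\ref{lem:bound} (and ultimately the bound $\omega(B_j)\leq 2\cdot\width(\cG)$ of Lemma~\ref{lem:pw_bounds_borders}) is converted into the absolute bound through the hypothesis on $\width(\cP)$.

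Finally, the bound $m\leq kd$ on the number of bags and the $O(dk^2)$ bound on the running time are both furnished by Lemma~\ref{lem:running_time}: its proof derives $m\leq kd$ from the strict chain $C_j\subsetneq C_{j+1}$ (a consequence of Lemma~\ref{lem:branches} and the formulation of Steps~$\EL$ and~$\ER$), and it accounts for the total cost by summing the time to build $\cG$, to locate all branches and their bottlenecks (using Observation~\ref{obs:o1} to avoid recomputation on overlapping branches), and to emit the bags $Z_j$. I expect no genuine obstacle: the theorem is a restatement assembling three already-proved results, so the proof is just the observation that $\AlgConnectedPathwidth$ simultaneously meets all the stated requirements, with the substitution $\width(\cP)=k$ being the one place where care is needed.
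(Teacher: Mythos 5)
Your proposal is correct and matches the paper's proof exactly: the paper likewise cites Lemma~\ref{lem:correctness} for correctness, Lemma~\ref{lem:bound} for $\width(\cC)\leq 2k+1$, and Lemma~\ref{lem:running_time} for both the running time and the bound $m\leq kd$. No differences worth noting.
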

\begin{proof}
The correctness of the algorithm $\AlgConnectedPathwidth$ is due to Lemma~\ref{lem:correctness}. The inequality $\width(\cC)\leq 2\cdot\width(\cP)+1$ follows from Lemma~\ref{lem:bound}, while the complexity of $\AlgConnectedPathwidth$ is due to Lemma~\ref{lem:running_time}. As argued in the proof of Lemma~\ref{lem:running_time}, $m\leq kd$.
\end{proof}

\begin{theorem} \label{thm:bound}	
For each connected graph $G$, $\cpw(G)\leq 2\cdot\pw(G)+1$.
\qed
\end{theorem}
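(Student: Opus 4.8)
The plan is to observe that this theorem is an immediate consequence of the algorithmic machinery already established, so the proof amounts to assembling the pieces rather than proving anything new. First I would fix a path decomposition $\cP=(X_1,\ldots,X_d)$ of $G$ that achieves the minimum width, i.e.\ $\width(\cP)=\pw(G)$; such a decomposition exists by the definition of pathwidth. Since $G$ is connected, the pair $(G,\cP)$ is a valid input to $\AlgConnectedPathwidth$.

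Next I would run $\AlgConnectedPathwidth$ on $G$ and $\cP$. By Lemma~\ref{lem:correctness} the procedure halts and outputs a genuine connected path decomposition $\cC=(Z_1,\ldots,Z_m)$ of $G$, so $\cC$ is one of the objects over which $\cpw(G)$ is minimised. By Lemma~\ref{lem:bound} (equivalently, the width bound recorded in Theorem~\ref{thm:algorithm}), its width satisfies $\width(\cC)\leq 2\cdot\width(\cP)+1$. Chaining these two facts with the definition of $\cpw$ as the minimum width over all connected path decompositions yields $\cpw(G)\leq\width(\cC)\leq 2\cdot\width(\cP)+1=2\cdot\pw(G)+1$, as required.

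The only point worth a sentence of care is the degenerate case handled in Step~I.1: if the chosen seed vertex has no neighbours in $\cG$, then (as noted just after the initialization description) $G$ connected forces $\cP$ to consist of a single bag, so $\cP$ is already connected and the algorithm simply returns it; here $\cpw(G)\leq\width(\cP)=\pw(G)\leq 2\pw(G)+1$, and the bound holds a fortiori. There is no genuine obstacle in this final step, since all the difficulty has been front-loaded into Lemma~\ref{lem:bound} and the nested-expansion analysis of Section~\ref{sec:bound}; the present statement is merely the specialisation of Theorem~\ref{thm:algorithm} to an input decomposition of optimal width.
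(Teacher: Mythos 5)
Your proposal is correct and matches the paper's (implicit) argument exactly: the theorem is stated with \qed precisely because it follows by running $\AlgConnectedPathwidth$ on an optimal-width path decomposition and combining Lemma~\ref{lem:correctness} with Lemma~\ref{lem:bound}. Your extra remark about the single-bag degenerate case is harmless and only strengthens the write-up.
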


\section{Applications in graph searching}
\label{sec:searching}

In this section we restate the main result of the previous section in terms of the graph searching numbers. In addition, we propose a small modification to the algorithm $\AlgConnectedPathwidth$ which can be used to convert a search strategy into a connected one that starts at an arbitrary homebase $h\in V(G)$. To that end it is sufficient to guarantee that $h$ belongs to the first bag of the resulting path decomposition $\cC$. The modification changes only the initialization stage of $\AlgConnectedPathwidth$.

Consider the following procedure $\AlgConnectedPathwidthWithHomebase$ (\emph{Connected Pathwidth with Homebase}) obtained by replacing the Steps~I.1-I.2 of $\AlgConnectedPathwidth$ with Steps~I.1'-I.3':

\noindent
\algrule
\f{0}{\textbf{Procedure} $\AlgConnectedPathwidthWithHomebase$ (\emph{Connected Pathwidth with Homebase})} \algrule
\f[10pt]{0}{\textbf{Input:} A simple graph $G$, a path decomposition $\cP$ of $G$, and $h\in V(G)$.}
\f[10pt]{0}{\textbf{Output:} A connected path decomposition $\cC$ of $G$ with $h$ in its first bag.}
\f{0}{\textbf{begin}}
\f[15pt]{1}{I.1': Use $G$ and $\cP$ to calculate the derived graph $\cG$. Let $v=v(H)$ be any vertex of $\cG$ such that $h\in V(H)$. Let $C_1=\{x,y\}$, where $v\in C_1$, $x,y$ are adjacent in $\cG$, and $x\in V_i$, $y\in V_{i+1}$ for some $i\in\{1,\ldots,d-1\}$. Let $\lborder(C_1)=\{x\}\cap\border(C_1)$, $\rborder(C_1)=\{y\}\cap\border(C_1)$ and let $m=1$.}
\f[15pt]{1}{I.2': If $\lborder(C_m)\neq\emptyset$, then find the maximal left branch $\lbranch(C_m,a_0)$ with a bottleneck $a_0'$ ($a_0'\geq a_0$) and call $\AlgProcessLeftBranch(a_0')$.}
\f[15pt]{1}{I.3': If $\rborder(C_m)\neq\emptyset$, then find the maximal right branch $\rbranch(C_m,b_0)$ with a bottleneck $b_0'$ ($b_0'\leq b_0$) and call $\AlgProcessRightBranch(b_0')$.}
\f{1}{\emph{Execute the main loop of $\AlgConnectedPathwidth$, compute $Z_1,\ldots,Z_m$ as in $\AlgConnectedPathwidth$ and return $\cC$.}}
\f{0}{\textbf{End procedure} $\AlgConnectedPathwidthWithHomebase$.}
\algrule

\begin{lemma} \label{lem:new_init}
Given a simple graph $G$, a path decomposition $\cP$ of $G$ and $h\in V(G)$ as an input, the algorithm $\AlgConnectedPathwidthWithHomebase$ returns a connected path decomposition $\cC=(Z_1,\ldots,Z_m)$ of $G$ such that $\width(\cC)\leq 2\cdot\width(\cP)+1$ and $h\in Z_1$.
\end{lemma}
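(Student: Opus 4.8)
The plan is to exploit that $\AlgConnectedPathwidthWithHomebase$ and $\AlgConnectedPathwidth$ share an identical main loop, so that it suffices to re-establish, for the modified initialization I.1'--I.3', the three ingredients that the main-loop analysis consumes: the base cases of the structural lemmas of Section~\ref{sec:bound}, the fact $h\in Z_1$, and the property that the expansion $C^{\ast}$ produced at the end of I.3' is nested. Everything downstream of a nested starting expansion then follows verbatim from the lemmas already proved.

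The condition $h\in Z_1$ is immediate: Step~I.1' puts $v=v(H)$ with $h\in V(H)$ into $C_1=B_1$, hence $h\in V(H)\subseteq Z_1$. For the claim that $\cC$ is a connected path decomposition I would only check the base cases at $j=1$, the inductive steps of the relevant lemmas being untouched since they speak only of $\EL$ and $\ER$. As $x,y$ are adjacent, $\cG[C_1]$ is connected, so Lemma~\ref{lem:C_is_connected} survives; since $\border(C_1)\subseteq C_1=\{x,y\}$ we get $\lborder(C_1)\cup\rborder(C_1)=\border(C_1)$ (base case of Lemma~\ref{lem:lr_borders_correct}); and since $x\in V_i$, $y\in V_{i+1}$ we get $r(\lborder(C_1))=i<i+1=l(\rborder(C_1))$ (base case of Lemma~\ref{lem:borders_separated}). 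With the main loop unchanged, Lemma~\ref{lem:branches} still guarantees progress and termination, and the argument of Lemma~\ref{lem:scp_correctness} (equivalently Lemma~\ref{lem:correctness}) gives that $\cC$ is a connected path decomposition of $G$.

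For the width bound I would re-run Lemma~\ref{lem:pw_bounds_borders}, changing only its base case. First, every expansion produced during I.1'--I.3' satisfies (ii): $C_1$ satisfies (ii) because each of its borders lies in a single level ($V_i$ and $V_{i+1}$), and since I.2' and I.3' create new expansions solely through $\AlgProcessLeftBranch$ and $\AlgProcessRightBranch$, repeated use of Lemma~\ref{lem:moving_inside_borders} propagates (ii) to all of them. Next I would show $C^{\ast}$ is nested. Condition (ii) is the previous point; condition (iii) holds because $a_0'$ and $b_0'$ are bottlenecks of the maximal branches of I.2' and I.3' (if some $\lbranch(C^{\ast},j)$ had a smaller bottleneck, that cut would also be a bottleneck of the maximal left branch of I.2', contradicting the choice of $a_0'$ --- exactly the final argument of Lemma~\ref{lem:nested}). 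For condition (i), take $j$ with $a_0'\le j\le i$: since I.3' leaves the levels $\le i$ unchanged, $V(\lbranch(C_1,a_0'))\subseteq C^{\ast}$, and $\lborder(C_1)\subseteq V_i$ has no vertex below $V_j$, Observations~\ref{obs:o1} and~\ref{obs:o2} give $\omega(V_j\cap C^{\ast})\ge\omega(V_j\cap V(\lbranch(C_1,j)))\ge\omega(\Ext(\lbranch(C_1,j)))$, and the last quantity is the weight of cut $j$, which is at least the weight of the bottleneck cut $a_0'$, namely $\omega(\lborder(C^{\ast}))$. The symmetric estimate on the right branch covers $i+1\le j\le b_0'$ with $\omega(\rborder(C^{\ast}))$, and the two ranges exhaust $\{r(\lborder(C^{\ast})),\ldots,l(\rborder(C^{\ast}))\}$, so (i) holds.

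Once $C^{\ast}$ is nested, induction on iterations via Lemma~\ref{lem:nested} makes every beginning-of-iteration expansion nested, Lemma~\ref{lem:condition(ii)} gives (ii) for every main-loop expansion, and together with the initialization expansions this yields (ii) for all $C_j$; the remaining computation of Lemma~\ref{lem:pw_bounds_borders} then gives $\omega(B_j)\le 2\cdot\width(\cG)$ and hence $\width(\cC)\le 2\cdot\width(\cG)-1=2\cdot\width(\cP)+1$. I expect the main obstacle to be condition (i) for $C^{\ast}$: in $\AlgConnectedPathwidth$ the left border is empty throughout initialization, so only one bottleneck is in play, whereas here both borders are active from the first expansion and the two bottleneck estimates must be combined. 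A secondary subtlety is that (ii) for the initialization expansions can no longer be read off from an empty left border but must instead be carried from $C_1$ through Lemma~\ref{lem:moving_inside_borders}.
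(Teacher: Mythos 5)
Your proof is correct and takes essentially the same route as the paper's (which is far terser): $h\in Z_1$ from Step~I.1', nestedness of the expansion at the end of Step~I.3' from the bottleneck choices $a_0'$ and $b_0'$, and then the unchanged main-loop machinery of Lemmas~\ref{lem:nested} and~\ref{lem:pw_bounds_borders}. The one small divergence is the treatment of the initialization bags: the paper bounds each such $B_j$ by asserting it lies in two layers $V_i\cup V_{i'}$, while you propagate condition (ii) through Steps~I.2'--I.3' via Lemma~\ref{lem:moving_inside_borders} and reuse Case~2 of Lemma~\ref{lem:pw_bounds_borders}, which is if anything the more careful argument, since after Step~I.2' the left border may occupy several layers.
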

\begin{proof}
Note that $h\in Z_1$ follows from Step~I.1' of $\AlgConnectedPathwidthWithHomebase$. If $C_j$ is an expansion obtained in one of Steps~I.1'-I.3', then $B_j\subseteq V_i\cup V_{i'}$ for some $i,i'\in\{1,\ldots,d\}$. Thus, $|Z_j|\leq 2\cdot\width(\cG)$. The expansion obtained at the end of Step~I.3' is nested, which follows from the fact that $a_0'$ and $b_0'$ are the bottlenecks of the branches computed in Steps~I.2' and~I.3'. Thus, the proof is analogous to the proof of Lemma~\ref{lem:bound}.
\end{proof}

Suppose that we are given a graph $G$, a search strategy that uses $k$ searchers, and a vertex $h\in V(G)$ that is required to be the homebase of the connected search strategy to be computed. Recall that $\nsn(G)=\pw(G)+1$, and whatsmore, a node search strategy for $G$ that uses $p$ searchers can be converted into a path decomposition of $G$ of width $p-1$ and vice versa (see \cite{Bienstock_survey91,Kinnersley92,KirousisPapadimitriou85,Mohring90}). Thus, the initial (non-connected) search strategy can be converted into a path decomposition $\cP$ of width $k$ \cite{searching_and_pebbling}.
By Lemma~\ref{lem:new_init}, the procedure $\AlgConnectedPathwidthWithHomebase$ returns a connected path decomposition $\cC$ of width $2k+1$, where $k$ is the width of the input path decomposition $\cP$. Moreover, the homebase $h$ is guaranteed to belong to the first bag $Z_1$ of $\cC$. Then, we convert $\cC$ into a monotone and connected search strategy that uses at most $2k+3$ searchers \cite{searching_and_pebbling}. The monotonicity follows directly from the definition of path decomposition. This leads to the following.

\begin{theorem} \label{thm:algorithm2}
There exists an algorithm that for a given connected graph $G$, $h\in V(G)$ and a search strategy that uses $k$ searchers returns a monotone connected search strategy with homebase $h$ that uses at most $2k+3$ searchers. The running time of the algorithm is $O(dk^2)$.
\qed
\end{theorem}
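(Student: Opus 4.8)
The plan is to assemble the theorem from the chain of conversions sketched in the discussion preceding it, using $\AlgConnectedPathwidthWithHomebase$ as a black box whose correctness and width guarantee are supplied by Lemma~\ref{lem:new_init} and whose running time is inherited from $\AlgConnectedPathwidth$ (Lemma~\ref{lem:running_time}). There are three conceptual steps: turn the input search strategy into a path decomposition, feed it through $\AlgConnectedPathwidthWithHomebase$, and turn the resulting connected path decomposition back into a search strategy.

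First I would convert the given (non-connected) search strategy on $k$ searchers into a path decomposition $\cP$ of $G$ of width at most $k$. This is the standard equivalence between searching and path decompositions: since $\nsn(G)=\pw(G)+1$ and a node search strategy on $p$ searchers corresponds to a path decomposition of width $p-1$ (and vice versa), while the edge and node search numbers differ by at most one, a strategy on $k$ searchers yields a path decomposition of width at most $k$ in linear time. Next I would run $\AlgConnectedPathwidthWithHomebase$ on the triple $(G,\cP,h)$. By Lemma~\ref{lem:new_init} its output is a connected path decomposition $\cC=(Z_1,\ldots,Z_m)$ with $\width(\cC)\le 2\cdot\width(\cP)+1\le 2k+1$ and, crucially, with $h\in Z_1$.

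The third step is to convert $\cC$ back into a search strategy. A connected path decomposition of width $w$ can be realized as a monotone connected edge search strategy using at most $w+2$ searchers: sweeping the searchers through the bags $Z_1,Z_2,\ldots,Z_m$ in order clears the graph, the connectedness of every prefix $G[Z_1\cup\cdots\cup Z_j]$ keeps the cleared region connected at all times, and the interval (consistency) property of a path decomposition guarantees that no vertex ever leaves and re-enters the union of processed bags, so the strategy is monotone. The additive $+2$ absorbs the node-to-edge searching overhead. Since $h\in Z_1$, this strategy may be launched from $h$, so $h$ is a legitimate homebase; with $w=2k+1$ we obtain at most $2k+3$ searchers, as required. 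For the running time, both conversions are linear, and $\AlgConnectedPathwidthWithHomebase$ differs from $\AlgConnectedPathwidth$ only in its initialization (Steps~I.1'--I.3'), which performs a constant number of branch and bottleneck computations; the complexity is therefore dominated by the main loop and is $O(dk^2)$ by Lemma~\ref{lem:running_time}.

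The hard part will be the bookkeeping of searchers across the node/edge searching translation, i.e.\ justifying precisely the $+2$ and confirming that the monotonicity and connectedness of the constructed edge search strategy are faithfully mirrored by the interval property and the prefix-connectedness of $\cC$. The genuinely algorithmic content is already packaged inside Lemma~\ref{lem:new_init}, so the remaining argument is essentially a verification that the two well-known equivalences (searching $\leftrightarrow$ pathwidth, connected searching $\leftrightarrow$ connected pathwidth) can be applied at the two ends of the pipeline without loss.
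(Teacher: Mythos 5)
Your proposal is correct and follows essentially the same route as the paper: convert the input strategy into a path decomposition of width at most $k$ via the standard searching--pathwidth equivalence, apply $\AlgConnectedPathwidthWithHomebase$ and invoke Lemma~\ref{lem:new_init} to get a connected path decomposition of width at most $2k+1$ with $h$ in the first bag, then convert back to a monotone connected edge search strategy with at most $2k+3$ searchers, the running time being inherited from Lemma~\ref{lem:running_time}. The paper likewise treats the two end conversions as known black boxes (citing the literature rather than re-deriving the $+2$), so no further work is needed there.
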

\begin{theorem} \label{thm:bound2}	
For each connected graph $G$, $\csn(G)\leq\mcsn(G)\leq 2\cdot\sn(G)+3$.
\qed
\end{theorem}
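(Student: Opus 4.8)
The plan is to split the chain $\csn(G)\leq\mcsn(G)\leq 2\sn(G)+3$ into its two constituent inequalities and treat each separately, since the first is immediate from the definitions while the second is essentially a restatement of the algorithmic result already established.

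For the left inequality I would argue purely from definitions. Every monotone connected search strategy is in particular a connected search strategy, so the class of connected strategies contains the class of monotone connected strategies. Taking minima over the larger class can only produce a smaller (or equal) value, hence $\csn(G)\leq\mcsn(G)$. No graph-theoretic work is needed here.

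For the right inequality I would invoke Theorem~\ref{thm:algorithm2} directly. Fix an optimal edge search strategy for $G$ using $k=\sn(G)$ searchers, and pick any homebase $h\in V(G)$. Applying the algorithm of Theorem~\ref{thm:algorithm2} to this strategy yields a \emph{monotone} connected search strategy using at most $2k+3=2\sn(G)+3$ searchers. Since $\mcsn(G)$ is by definition the minimum number of searchers over all monotone connected strategies, this exhibited strategy gives $\mcsn(G)\leq 2\sn(G)+3$. Combining the two inequalities proves the theorem. As an alternative route, one can make the dependence on pathwidth explicit by chaining $\mcsn(G)\leq\cpw(G)+2$ (the standard conversion of a width-$k$ connected path decomposition into a monotone connected edge search strategy using $k+2$ searchers), then $\cpw(G)\leq 2\pw(G)+1$ from Theorem~\ref{thm:bound}, and finally $\pw(G)\leq\sn(G)$; this produces $\mcsn(G)\leq 2\pw(G)+3\leq 2\sn(G)+3$ with the same constants.

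I do not expect a genuine obstacle, as the statement is a corollary of the preceding development; the only point demanding care is the bookkeeping of the additive constants and the node-versus-edge searching correspondences. Specifically, one must recall that a node search strategy on $p$ searchers corresponds to a path decomposition of width $p-1$ and that passing between node and edge searching (and between a path decomposition and an edge search strategy) costs at most one further searcher, so that the ``$+3$'' decomposes cleanly as the ``$+1$'' in the connected-pathwidth bound of Theorem~\ref{thm:bound} together with the ``$+2$'' incurred when realizing a connected path decomposition as a monotone connected edge search strategy. The homebase requirement causes no difficulty because Theorem~\ref{thm:algorithm2} already delivers a strategy starting at an arbitrary prescribed $h$.
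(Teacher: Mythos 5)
Your proposal is correct and follows essentially the same route as the paper, which treats this theorem as an immediate corollary: $\csn(G)\leq\mcsn(G)$ by definition, and $\mcsn(G)\leq\cpw(G)+2\leq 2\pw(G)+3\leq 2\sn(G)+3$ via Theorem~\ref{thm:bound} (equivalently, via the constructive Theorem~\ref{thm:algorithm2}). Your bookkeeping of the additive constants and of the node/edge searching correspondences matches the paper's.
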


\section{Conclusions}
\label{sec:conclusions}
The advances in graph theory presented in this paper are three-fold:
\begin{itemize}
 \item[$\circ$] A bound for connected pathwidth is given, $\cpw(G)\leq 2\cdot\pw(G)+1$, where $G$ is any graph, which bounds the connected search number of a graph, $\csn(G)\leq 2\sn(G)+3$. Moreover, a vertex $v$ that belongs to the first bag in the resulting connected path decomposition can be selected arbitrarily, which implies a stronger fact, namely a connected $(2\sn(G)+3)$-search strategy can be constructed with any vertex of $G$ playing the role of the homebase. Since one can obtain a path decomposition of width $k$ for a given search strategy that uses $k$ searchers, our algorithm provides an efficient algorithm for converting a search strategy into a connected one, and in addition, the homebase in the latter one can be chosen arbitrarily.
 \item[$\circ$] An efficient method is given for calculating a connected path decomposition of width at most $2k+1$, provided that a graph $G$ and its path decomposition of width $k$ are given as an input.
 \item[$\circ$] It is a strong assumption that the algorithm requires a path decomposition to be given, because calculating $\pw(G)$ is a hard problem even for graphs $G$ that belong to some special classes of graphs. However, this algorithm can be used to approximate the connected pathwidth, because any $O(q)$-approximation algorithm for pathwidth can be used together with the algorithm from this paper to design a $O(q)$-approximation algorithm for connected pathwidth.
\end{itemize}

\noindent
{\bf Acknowledgment.} Thanks are due to the anonymous referees for constructive and insightful comments that helped to improve this work.

\bibliographystyle{plain}
\bibliography{search}
\end{document}